\newtheorem{theorem}{Theorem}
\newtheorem{lem}[theorem]{Lemma}
\newtheorem{prop}[theorem]{Proposition}
\newtheorem{corollary}[theorem]{Corollary}
\theoremstyle{definition}
\newtheorem{defn}[theorem]{Definition}
\newtheorem{remark}[theorem]{Remark}
\newcommand{\circop}{\ominus}
\newcommand{\compa}{\lesseqgtr}
\newcommand{\landif}{\lanfull}
\newcommand{\ptype}{\type\infty}
\newcommand{\CIcon }{\CMod}
\newcommand{\CMod}{\mathfrak C}
\def\eqdef{\stackrel{\rm def}{=}}
\newcommand\cqm[1]{\nicefrac{\CMod}{#1}}
\newcommand{\landi}{\lanfull}
\newcommand{\skipprf}[1]{}
\newcommand{\cl }{\mathcal }
\newcommand{\define}[1]{\textbf{#1}}
\newcommand{\diam}{\Diamond}
\newcommand{\ubox}{\Box}
\newcommand{\logbasic}{\gtl}
\newcommand{\tnext}{\ocircle}
\newcommand{\iiff}{\mathop\Leftrightarrow}
\newcommand{\ineg}{{\sim}}
\newcommand{\imp}{\mathop \Rightarrow}
\newcommand{\dimp}{\mathop \Leftarrow}
\newcommand{\gtl}{{\sf GTL}}
\newcommand{\lanfull}{{\mathcal L}}
\newcommand{\ignore}[1]{}
\newcommand{\lgt}[1]{\|#1\|}
\newcommand{\nx}{{\ocircle}}
\newcommand{\nec}{\Box}
\newcommand{\ps}{\Diamond}
\newcommand{\val}[1]{\lb #1 \rb}
\newcommand{\type}[1]{\mathbb T_{ #1 }}
\newcommand{\peq}{\leq}
\def\lb{\left\llbracket}
\def\rb{\right\rrbracket}
\def\<{\left (}
\def\>{\right )}
\def\({\left (}
\def\){\right )}
\def\cbra{\left \{}
\def\cket{\right \}}
\DeclareSymbolFont{AMSb}{U}{msb}{m}{n}
\DeclareMathSymbol{\N}{\mathbin}{AMSb}{"4E}
\DeclareMathSymbol{\Z}{\mathbin}{AMSb}{"5A}
\DeclareMathSymbol{\R}{\mathbin}{AMSb}{"52}
\DeclareMathSymbol{\Q}{\mathbin}{AMSb}{"51}
\DeclareMathSymbol{\I}{\mathbin}{AMSb}{"49}
\DeclareMathSymbol{\C}{\mathbin}{AMSb}{"43}
\title{A G\"odel Calculus for Linear Temporal Logic}
\author[1,2]{Juan Pablo Aguilera \footnote{\href{mailto:juan.aguilera@UGent.be}{\tt juan.aguilera@UGent.be}}}
\author[3]{Mart\'in Di\'eguez \footnote{\href{mailto:martin.dieguezlodeiro@univ-angers.fr}{\tt martin.dieguezlodeiro@univ-angers.fr}}}
\author[1,4]{David Fern\'andez-Duque \footnote{\href{mailto:david.fernandezduque@ugent.be}{\tt david.fernandezduque@ugent.be}}}
\author[1]{Brett McLean \footnote{\href{mailto:brett.mclean@ugent.be}{\tt brett.mclean@ugent.be}}}
\affil[1]{\centering Department of Mathematics WE16, Ghent University, Ghent, Belgium}
\affil[2]{\centering Institute of Discrete Mathematics and Geometry, Vienna University of Technology, Vienna, Austria}
\affil[3]{\centering LERIA, University of Angers, Angers, France}
\affil[4]{\centering ICS of the Czech Academy of Sciences, Prague, Czech Republic}
\begin{document}

\maketitle

\begin{abstract}
We consider G\"odel temporal logic ($\sf GTL$), a variant of linear temporal logic based on G\"odel--Dummett propositional logic.
In recent work, we have shown this logic to enjoy natural semantics both as a fuzzy logic and as a superintuitionistic logic.
Using semantical methods, the logic was shown to be {\sc pspace}-complete.
In this paper we provide a deductive calculus for $\sf GTL$, and show this calculus to be sound and complete for the above-mentioned semantics.
\end{abstract}

\section{Introduction}


Despite their potential usefulness in areas such as spatio-temporal reasoning \cite{arte} or vague temporal reasoning \cite{Davies96}, the combination of linear temporal logic with a modal or non-classical base tends to lead to high computational complexity \cite{Balbiani2017} or even undecidability \cite{konev,Vidal21}.
Nevertheless, our recent work
~\cite{gtllics} provides a promising avenue for fuzzy temporal reasoning.
There, we show that linear temporal logic ($\sf LTL$) over a G\"odel--Dummett base is {\sc pspace}-complete, which is optimal for a logic interpreting classical $\sf LTL$.

The methods used in \cite{gtllics} are model-theoretic and leave open the question of whether a proof-theoretic approach is possible. Here, we aim to close this gap by employing techniques used to establish completeness of an intuitionistic $\sf LTL$ in \cite{Boudou2017} with `eventually' but without `henceforth.'
As we will see, the complications that led to omitting `henceforth' in that work can be solved by incorporating the {\em dual implication} into our language, a connective that is to implication what disjunction is to conjunction.
Aside from the technical advantages it afford us, it has been argued by \citeauthor{Ra80}~\cite{Ra80} that dual implication is useful for reasoning with incomplete or inconsistent information.

Previously, $\sf LTL$ based on the intermediate logic of here-and-there~\cite{heyting30a} was axiomatized by \citeauthor{BalbianiDieguezJelia}~\cite{BalbianiDieguezJelia}.
This logic allows for three truth values and is the basis for temporal answer set programming~\cite{taspa,taspb}. Another combination of here-and-there, modal logic and Rauszer's co-implication has been studied in~\cite{BD18}. 

G\"odel logics and their extensions with possibility theory~\cite{DuboisLP87} have been extensively studied in~\cite{DELLUNDE201163}.
These extensions have applications in the field of logic programming~\cite{AG13,BlandiGR05}.
Aside from this, a version of the `next' fragment of intuitionistic $\sf LTL$ was axiomatized by \cite{KojimaNext} and a logic with `next' and `eventually' (but not henceforth) by~\cite{Boudou2017}.
Intuitionistic $\sf LTL$ with `henceforth' has not been axiomatized, but \cite{jungle} showed that logics with the latter tense are more sensitive to choice of semantics than those without it and \cite{ChopoghlooM21} provided a strongly complete infinitary calculus.

We recently showed that $\sf GTL$ possesses two natural semantics, corresponding to whether it is viewed as a fuzzy logic or as a superintuitionistic logic~\cite{gtllics}.
As a fuzzy logic, propositions take values in $[0,1]$, and truth values of compound propositions are defined using standard operations on the reals.
As a superintuitionistic logic, models consist of bi-relational structures equipped with a partial order to interpret implication intuitionistically and a function to interpret the $\sf LTL$ tenses.
We showed that the set of validities for either of these semantics coincides, and in fact coincides with the set of validities for a third class of structures we call {\em non-deterministic quasimodels.}
Similar structures were used to prove upper complexity bounds for dynamic topological logic \cite{Fernandez09} and intuitionistic temporal logic \cite{F-D18}.
In the setting of $\sf GTL$, they can be used to prove that the validity problem is decidable: while the logic does not enjoy the finite model property for either the fuzzy or the superintuitionistic semantics, it {\em does} enjoy the finite quasimodel property.

\cite{eventually,dtlaxiom} have shown that quasimodels also come in handy in completeness proofs.
There are two main reasons for this.
First, as quasimodels are somewhat more flexible than proper models, it is easier to construct them.
Thus our task is to construct a quasimodel falsifying a given non-derivable formula.
Once constructed, we can use unwinding techniques to produce a proper bi-relational model from our quasimodel.
The second advantage is that it allows us to use techniques normally available only for logics enjoying the finite model property, such as fully characterizing a given structure using a finite formula.
In fact, contrary to the classical case, we will assign two characteristic formulas to each state $w$ of our quasimodels, $\chi^+(w)$ and $\chi^-(w)$, characterizing the `positive' and `negative' information available in $w$.

Aside from these points, our completeness proof involves several stages that should be standard to those familiar with temporal logic: a {\em canonical model} $\CMod$ is built, which correctly interprets $\nx$ but not $\diam$ and $\Box$.
In order to remedy this, a modified filtration $\cqm\Sigma$ is built, which {\em does} interpret all tenses correctly when restricted to the set of formulas $\Sigma$, but fails to have a deterministic successor relation.
Finally, an unwinding procedure is applied to $\cqm\Sigma$ to obtain a bi-relational model.
This unwinding is somewhat more complex than in the classical case, but fortunately we may appeal to results of \cite{gtllics} to conclude that any formula falsified in $\cqm\Sigma$ is also falsified in a suitable unwinding of $\cqm\Sigma$.

\ignore{
\subsubsection*{Structure of the paper} In \Cref{SecBasic} we introduce the temporal language that we work with, and then introduce both the real semantics and bi-relational semantics for G\"odel temporal logic.
In \Cref{secAx} we list our set of axioms and establish soundness.
In \Cref{SecNDQ} we recall the definitions of quasimodels and some useful results of \cite{gtllics}.
\Cref{secCan} defines the canonical model and \Cref{Sec:quotient} the canonical quasimodel, obtained by filtration.
In order to prove that the latter is indeed a quasimodel, \Cref{SecChar} introduces characteristic formulas and \Cref{SecComp} uses them to establish that the conditions on `henceforth' and `eventually' hold on the canonical quasimodel, finishing our completeness proof.
\Cref{SecConc} provides some concluding remarks.
}

\section{Syntax and Semantics}\label{SecBasic}

In this section we first introduce the temporal language we work with and then two possible semantics for this language: real semantics and bi-relational semantics. 

Fix a countably infinite set $\mathbb P$ of propositional variables. Then the \define{G\"odel temporal language} $\lanfull$ is defined by the grammar (in Backus--Naur form):
\[\varphi,\psi :=   \   p  \ |  \ \varphi\wedge\psi \  |  \ \varphi\vee\psi  \ |  \ \varphi\imp \psi  \ |  \  \ \varphi\dimp \psi  \ |  \ \nx\varphi \  | \  \ps\varphi \  |  \ \nec\varphi , \]
where $p\in \mathbb P$. Here, $\nx$ is read as `next', $\ps$ as `eventually', and $\nec$ as `henceforth'. The connective $\dimp$ is dual (or co-) implication and represents the operator dual to implication \cite{Wolter1998}. 
 We also use $\bot$ 
  (respectively~$\top$) as a shorthand for $p\dimp p$ (respectively~$p\imp p$) for some fixed variable $p$, and we use $\neg\varphi$ (respectively~$\ineg \varphi$) as a shorthand for $\varphi\imp \bot$ (respectively~$\top \dimp \varphi$), and $\varphi\iiff \psi$ (not related to dual implication) as a shorthand for $(\varphi\imp \psi) \wedge (\psi\imp\varphi)$.


	We now introduce the first of our semantics for the G\"odel temporal language: \emph{real semantics}, which views $\lanfull$ as a fuzzy logic (enriched with temporal modalities). In the definition, $[0,1]$ denotes the real unit interval.

\begin{defn}[real semantics]\label{DefRSem}
A \define{flow} is a pair $\mathcal T = (T,S)$,
where $ T $ is a set and $S \colon T \to T$ is a function.
A \define{real valuation} on $\mathcal T$ is a function $V \colon \lanfull \times T \to \mathbb [0,1]$ such that, for all $t\in T$, the following equalities hold.
\[
\begin{array}{rcl}
V(\bot,t)&=& 0 \\
V(\varphi\wedge\psi,t) &=& \min \{ V(\varphi ,t) , V( \psi,t) \}\\
V(\varphi \vee \psi,t) &=& \max \{ V(\varphi ,t) , V( \psi,t) \}\\
V(\varphi \imp \psi,t) &=&\begin{cases}
V(\psi,t )&\text{if }V(\varphi,t ) > V(\psi,t )\\
1&\text{if }V(\varphi, t)\leq V(\psi, t)

\end{cases}
\\
V(\varphi \dimp \psi,t) &=&\begin{cases}
V(\varphi,t )&\text{if }V(\varphi,t ) > V(\psi,t )\\
0&\text{if }V(\varphi, t)\leq V(\psi, t)

\end{cases}
\\
V(\nx\varphi, t)&=&V( \varphi,S(t))\\
V(\ps\varphi,t)&=& \sup_{n<\omega}  V(\varphi,S^n(t))\\
V(\nec\varphi,t)&=& \inf_{n<\omega}  V(\varphi,S^n(t))\\
\end{array}
\]
A flow $\mathcal  T$ equipped with a valuation $V$ is a \define{real (G\"odel temporal) model}.
\end{defn}

The second semantics, \emph{bi-relational semantics}, views $\lanfull$ as an intuitionistic logic (temporally enriched).

\begin{defn}[bi-relational semantics]\label{DefKSem}
A \define{(G\"odel temporal) bi-relational frame} is a quadruple $\mathcal  F=(W,T,{\leq},S)$ where $(W,\leq)$ is a linearly ordered set and $(T,S)$ is a flow.
A \define{bi-relational valuation} on $\mathcal  F$ is a function $\lb\cdot\rb\colon\lanfull \to 2^{W\times T}$ such that, for each $p \in \mathbb P$, the set $\lb p \rb$ is \emph{downward closed} in its first coordinate, and the following equalities hold.
\[
\begin{array}{rcl}
\lb\bot\rb&=&\varnothing \\
\lb\varphi\wedge\psi\rb &=&\lb\varphi\rb\cap \lb\psi\rb\\
\lb\varphi\vee\psi\rb &=&\lb\varphi\rb\cup \lb\psi\rb\\
\lb\varphi\imp\psi\rb &=& \{ (w,t) \in W\times T \mid \forall v\leq w ((v,t)\in \lb\varphi \rb  \\
&&\hspace{2.75cm}\text{ implies } (v,t)\in \lb\psi \rb ) \}\\
\lb\varphi\dimp\psi\rb &=& \{ (w,t) \in W\times T \mid \exists v\geq w ((v,t)\in \lb\varphi \rb  \\
&&\hspace{2.75cm}\text{ and } (v,t)\notin \lb\psi \rb ) \}\\
\val{\nx\varphi}&=&(\mathrm{id}_W \times S)^{-1} \val\varphi\\
\val{\ps\varphi}&=& \bigcup_{n<\omega}(\mathrm{id}_W \times S)^{-n} \val\varphi\\
\val{\nec\varphi}&=& \bigcap_{n<\omega}(\mathrm{id}_W \times S)^{-n}\val\varphi  \\
\end{array}
\]
\noindent where $(\mathrm{id}_W \times S)$ is the function such that $(\mathrm{id}_W \times S)(w,t) = (w,S(t))$.
Given $(w,t) \in W \times T$, we say that $S((w,t)) \eqdef (w,S(t))$. 
A bi-relational frame $\mathcal  F$ equipped with a valuation $\lb\cdot\rb$ is a \define{(G\"odel temporal) bi-relational model}.
\end{defn}

This semantics combines standard semantics for the implications based on $\leq$ (read downward) and for the tenses based on $S$: for example, $(w,t)\in \val{\ps\varphi}$ if and only if there exists $n\geq 0$ such that $(w,S^n(t)) \in \val\varphi$.
Note that, by structural induction, the valuation of \emph{any} $\varphi \in \lanfull$ is downward closed in its first coordinate, in the sense that if $(w,t)\in \val\varphi$ and $v\leq w$, then $(v,t)\in \val \varphi$.


Validity of $\lanfull$-formulas is defined in the usual way.

\begin{defn}[validity]
Given a real model $\mathcal  X = (T, S, V)$ and a formula $\varphi\in \lanfull$, we say that $\varphi$ is \define{globally true} on $\mathcal  X$, written $\mathcal  X\models\varphi$, if for all $t \in T$ we have $V(\varphi, t) = 1$.
Given a bi-relational model $\mathcal  X = (\mathcal F, \lb\cdot\rb)$ and a formula $\varphi\in \lanfull$, we say that $\varphi$ is \define{globally true} on $\mathcal  X$, written $\mathcal  X\models\varphi$, if $\val\varphi =W \times T$. 

If $\mathcal  X$ is a flow or a bi-relational frame, we write $\mathcal  X\models\varphi$  and say $\varphi$ is \define{valid} on $\mathcal X$, if $\varphi$ is globally true for every valuation on $\mathcal  X$. If $\Omega$ is a class of flows, frames, or models, we say that $\varphi\in\lanfull$ is \define{valid} on $\Omega$ if, for every $\mathcal  X\in \Omega$, we have $\mathcal  X\models\varphi$. If $\varphi$ is not valid on $\Omega$, it is \define{falsifiable} on $\Omega$.
\end{defn} 

We define the logic $\gtl_{\mathbb R}$ to be the set of  $\lanfull$-formulas that are valid over the class of all flows and the logic $\gtl_\mathrm{Rel}$ to be the set of $\lanfull$-formulas that are valid over the class of all bi-relational frames.
The main theorem of \cite{gtllics} is that these logics coincide:

\begin{theorem}\label{theoLICSone}
$\gtl_{\mathbb R} = \gtl_\mathrm{Rel}$. 
That is, for each $\varphi\in\lanfull$, $\varphi$ is valid over the class of real G\"odel temporal models if and only if it is valid over the class of all G\"odel temporal bi-relational models.
\end{theorem}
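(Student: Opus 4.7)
The plan is to prove the two inclusions separately, each by directly converting a model of one kind that falsifies $\varphi$ into a model of the other kind that also falsifies $\varphi$. In both directions the underlying flow $(T,S)$ can be kept fixed; only the propositional layer needs to be translated.

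For $\gtl_{\mathrm{Rel}} \subseteq \gtl_{\mathbb R}$, I would start from a bi-relational model falsifying $\varphi$, fix an order-preserving embedding $e\colon (W,\leq) \to [0,1]$, and define $V(p,t) \eqdef \sup\{e(w) : (w,t) \in \val p\}$ (with $\sup\varnothing = 0$). An induction on subformulas $\psi$ should yield $V(\psi,t) = \sup\{e(w) : (w,t) \in \val\psi\}$, using two basic facts about linear orders: downsets are totally ordered by inclusion, so intersections and unions of downsets are simply the smaller or larger of the two; and for a descending chain $D_0 \supseteq D_1 \supseteq \dots$ of downsets, $\sup(\bigcap_n D_n) = \inf_n \sup D_n$, which takes care of the $\nec$ case. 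The Boolean, $\imp$, and $\dimp$ cases reduce to a split on whether the downsets of the premise and conclusion coincide.

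For the converse direction, I would keep $(T,S)$ and take $W = [0,1]$ with the usual order, defining $\val p = \{(w,t) : w < V(p,t)\}$. The inductive target is the sandwich $\{(w,t) : w < V(\psi,t)\} \subseteq \val\psi \subseteq \{(w,t) : w \leq V(\psi,t)\}$, which is strong enough for the falsification: since $V(\varphi,t_0) < 1$, there is some $w_0 > V(\varphi,t_0)$ in $[0,1]$ with $(w_0,t_0) \notin \val\varphi$. The Boolean, $\imp$, $\nx$, and $\ps$ cases are routine; the $\nec$ case uses the identity $\bigcap_n [0,V(\psi,S^n(t))) = [0,\inf_n V(\psi,S^n(t))]$ or $[0,\inf_n V(\psi,S^n(t)))$ depending on whether the infimum is attained---both lie within the sandwich.

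The main obstacle is the interaction of $\nec$ with $\dimp$ in this second direction. When $\nec$ produces a closed sandwich $\val{\nec\psi} = \{(w,t) : w \leq V(\nec\psi,t)\}$ because the infimum is not attained, applying $\dimp$ against a formula of equal real value but open sandwich yields a $\val{\cdot}$ that escapes the sandwich at the boundary. The cleanest workaround is to pass through the intermediate class of non-deterministic quasimodels alluded to in the introduction, where the temporal conditions on $\ps$ and $\nec$ are decoupled from the bi-relational structure and tracked by a separate successor relation on types, along the lines of \cite{Fernandez09, F-D18}. One then shows that both semantics are equivalent to quasimodel validity, which yields the desired equality.
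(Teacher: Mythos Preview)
There is a genuine gap in your bi-relational $\to$ real direction: you simply \emph{fix} an order-preserving embedding $e\colon (W,\leq) \to [0,1]$, but such an embedding need not exist. A bi-relational frame allows $(W,\leq)$ to be an arbitrary linear order, and orders such as $\omega_1$ do not embed in the reals. The paper (following \cite{gtllics}) addresses exactly this: one first applies a L\"owenheim--Skolem argument to replace the falsifying bi-relational model by a \emph{countable} one, after which the embedding into $[0,1]$ is guaranteed (every countable linear order embeds in $\mathbb Q$). Without this reduction your argument does not get off the ground. (Incidentally, your inclusion labels are swapped: producing a real countermodel from a bi-relational one establishes $\gtl_{\mathbb R}\subseteq\gtl_{\mathrm{Rel}}$, not the other way round.)

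For the converse direction you correctly diagnose the obstacle: the open/closed sandwich $\{w : w < V(\psi,t)\} \subseteq \val\psi \subseteq \{w : w \leq V(\psi,t)\}$ is not stable under $\dimp$ once $\nec$ produces closed boundary points. Your fallback through quasimodels would work, but the paper's direct fix is more elementary and keeps the argument self-contained: rather than take $W = [0,1]$, take $W = \mathbb R \setminus A$ for a suitable \emph{countable} set $A$ chosen to exclude all the boundary values that can arise in the induction. With those critical values removed the sandwich collapses to an equality and the induction goes through without the $\nec$/$\dimp$ interference. This keeps both halves as direct model translations, with no detour through the quasimodel machinery.
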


\skipprf{
To give some intuition for Theorem \ref{theoLICSone}, first assume that $\varphi$ is falsifiable on a real model $(T,S,V)$.
We could attempt to view this as a bi-relational model on the frame $(\mathbb R,T,\leq,S)$, where $\leq$ is the usual order on the real numbers.
In fact this does not quite work but can be adjusted by instead considering $W = \mathbb R\setminus A$ for a suitable countable set $A$.
Conversely, if $\varphi$ is falsifiable on a bi-relational model, we can use a L\"owenheim--Skolem argument to see that $\varphi$ is falsifiable on a \emph{countable} bi-relational model $(W,T,\leq,S,\val\cdot)$, then use the fact that any countable order is embeddable into the real line, in order to embed $W$ into $\mathbb R$.
With the aid of this embedding, we can then construct a real model falsifying $\varphi$.
For details (of both directions), we refer the reader to \cite{gtllics}.
}

\section{The calculus}\label{secAx}

We begin by establishing our basic calculus for logics over $\mathcal L$.
It is obtained by adapting the standard axioms and inference rules of $\sf LTL$ \cite{temporal}, as well as their dual versions.

\begin{defn}\label{defLogbasic}
The logic $\logbasic$ is the least set of $\mathcal L$-formulas closed under the following axioms and rules.

\begin{enumerate}[wide, labelwidth=!, labelindent=0pt,label={\sc \Roman*},ref={\sc\roman*}]
\item\label{ax00Intu} {\bf All (substitution instances of) intuitionistic tautologies (see e.g.~\cite{MintsInt})}
\item\label{ax01Taut}{\bf Axioms and rules of H-B logic:}
\begin{enumerate}[label={\sc\alph*},ref={{\sc\roman{enumi}}.{\sc\alph*}}]
\item \label{axco01} $\varphi \imp \left(\psi \vee \left(\varphi \dimp \psi\right)\right)$


\item\label{axDimpMon} $\dfrac{\varphi \imp \psi }{(\varphi \dimp \theta) \imp (\psi \dimp \theta)}$

\item\label{axDimpDis} $\dfrac{\varphi \imp \psi \vee \gamma}{(\varphi \dimp \psi) \imp \gamma}$

\end{enumerate}

\item \textbf{Linearity axioms:}
\begin{enumerate}[label={\sc \alph*},ref={{\sc\roman{enumi}}.{\sc\alph*}}]
\item \label{axGodel} $\left(\varphi \imp \psi\right) \vee \left(\psi \imp \varphi\right)$ 
\item \label{axcoGodel} $\neg \left(\left(\varphi \dimp \psi\right) \wedge \left(\psi \dimp \varphi\right)\right)$ 
\end{enumerate}

\item \textbf{Temporal axioms: }
\begin{enumerate}[label={\sc \alph*},ref={{\sc\roman{enumi}}.{\sc\alph*}}]
\item\label{ax02Bot} $\neg \tnext \bot$
\item\label{ax04NexVee} $\tnext \left( \varphi \vee \psi \right) \imp \left(\tnext \varphi \vee\tnext \psi\right)$
\item\label{ax03NexWedge} $ \left(\tnext \varphi \wedge\tnext \psi\right) \imp \tnext \left( \varphi \wedge \psi \right) $
\item\label{ax05KNext} $\tnext\left( \varphi \imp \psi \right) \iiff \left(\tnext\varphi \imp \tnext\psi\right)$
\item\label{ax06KBox} $\ubox \left( \varphi \imp \psi \right) \imp \left(\ubox \varphi \imp \ubox \psi\right)$
\item\label{ax07:K:Dual} $\ubox \left( \varphi \imp \psi \right) \imp \left(\diam \varphi \imp \diam \psi\right)$
\item\label{ax09BoxFix} $\ubox \varphi \imp  \varphi\wedge \tnext \ubox \varphi$

\item\label{ax10DiamFix} $\varphi\vee \tnext \diam \varphi \imp \diam \varphi$
\item\label{ax11:ind:1} $\ubox ({ \varphi \imp \tnext \varphi } )\imp ({ \varphi \imp \ubox \varphi })$
\item\label{ax12:ind:2} $\ubox ({ \tnext \varphi \imp \varphi})\imp ({ \diam \varphi \imp \varphi } )$
\end{enumerate}

\item\label{ax:backward:dual} \textbf{Back--up confluence axiom: }

$ \tnext \left(\varphi \dimp \psi\right) \imp \left(\tnext \varphi \dimp \tnext  \psi\right) $

\item \textbf{Standard modal rules:}
\begin{multicols}3
\begin{enumerate}[label={\sc \alph*},ref={{\sc\roman{enumi}}.{\sc\alph*}}]
\item\label{ax13MP}  
$\dfrac{\varphi,\ \varphi\imp \psi}\psi$
\item\label{ax14NecCirc} $\dfrac\varphi {\tnext\varphi}$
\item\label{ax14NecBox}  $\dfrac\varphi {\ubox\varphi}$
\end{enumerate}
\end{multicols}
\end{enumerate}
\end{defn}

Axiom group \ref{ax01Taut} concerns the relationship between $\imp$ and $\dimp$. In particular, Axiom \ref{axco01} is used in C. Rauszer's axiomatization of intuitionistic logic with co-implication (called H-B logic)~\cite{Rauszer74}. The G\"odel--Dummett axiom~\ref{axGodel} and its order dual \ref{axcoGodel} are used to force the connectives $\imp$ and $\dimp$ to be implemented on locally linear posets (i.e. posets that are a disjoint union of linear orders). 

Axioms \ref{ax03NexWedge}, \ref{ax05KNext}, and \ref{ax06KBox} are standard modal axioms (viewing $\nx$ as a box-type modality). In particular they hold in any normal modal logic, although of course $\gtl$ is not itself normal by virtue of being strictly sub-classical. Axiom \ref{ax07:K:Dual} is a dual version of \ref{ax06KBox}; such dual axioms are often needed in intuitionistic modal logic, since $\diam$ and $\ubox$ are not typically interdefinable.
The axioms \ref{ax02Bot} and \ref{ax04NexVee} have to do with the passage of time being deterministic in linear temporal logic:  \ref{ax02Bot} characterises seriality and  \ref{ax04NexVee} characterises (partial) functionality, thus together they constrain temporal accessibility to be a total function. 

The co-inductive axiom \ref{ax09BoxFix} states that if something will henceforth be the case, then it is true now and, moreover, in the next moment, it will still henceforth be the case, and \ref{ax11:ind:1} is successor induction, as time is interpreted over the natural numbers. Axioms \ref{ax10DiamFix} and \ref{ax12:ind:2} are their duals. Note that `henceforth' is interpreted reflexively. 
All rules of group \ref{ax13MP} are standard modal logic deduction rules, and in particular any normal modal logic is closed under these rules.

Most of the axioms are either included in the axiomatization of intuitionistic $\sf LTL$ \cite{Boudou2017} or a variant of one of them (e.g. a contrapositive). 
From this, we easily derive the following.

\begin{prop}
The above calculus is sound for the class of real models, as well as for the class of bi-relational models.
\end{prop}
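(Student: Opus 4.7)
The plan is to prove soundness by induction on the length of derivations, checking that every axiom is valid and every rule preserves global truth, separately for each of the two semantics. The arguments for the two semantics are structurally parallel, so I would present them in tandem; Theorem~\ref{theoLICSone} asserts the resulting logics coincide but does not on its own let us sidestep checking both semantics, since it concerns the set of valid formulas as a whole rather than individual axioms.

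Most axioms can be dispatched by short routine checks or by appeal to the literature. Groups \ref{ax00Intu} and \ref{ax01Taut} follow from the standard soundness of H-B logic~\cite{Rauszer74} with respect to linearly ordered Kripke frames and, for real semantics, with respect to the G\"odel algebra on $[0,1]$. The linearity axioms \ref{axGodel} and \ref{axcoGodel} hold because $[0,1]$ and $(W,\leq)$ are linearly ordered; in particular, for \ref{axcoGodel} note that if $V(\varphi \dimp \psi, t) > 0$ then $V(\varphi, t) > V(\psi, t)$, which excludes $V(\psi \dimp \varphi, t) > 0$. The temporal axioms \ref{ax02Bot}--\ref{ax12:ind:2} and the modal rules of group \ref{ax13MP} are the usual $\sf LTL$ axioms and rules (or order-theoretic duals), and their validity relies only on $S$ being a total function, together with an induction on $n$ for the fixed-point and induction axioms.

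The one piece requiring direct attention is the back-up confluence axiom \ref{ax:backward:dual}, which permits $\dimp$ to commute with $\nx$. In real semantics, $V(\nx(\varphi \dimp \psi), t) = V(\varphi \dimp \psi, S(t))$, and a case split on whether $V(\varphi, S(t)) > V(\psi, S(t))$ reveals this to equal $V(\nx\varphi \dimp \nx\psi, t)$; the axiom thus holds even as a biconditional. In bi-relational semantics, a witness $v \geq w$ showing $(w, S(t)) \in \val{\varphi \dimp \psi}$ (that is, $(v, S(t)) \in \val{\varphi}$ and $(v, S(t)) \notin \val{\psi}$) is simultaneously a witness that $(w, t) \in \val{\nx\varphi \dimp \nx\psi}$, and the converse is symmetric. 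I do not anticipate serious obstacles in any of the cases, so the main task is really one of careful bookkeeping across the two semantics, coupled with the observation that the novel connective $\dimp$ behaves well with $\nx$ because the `next' function is deterministic.
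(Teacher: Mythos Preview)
Your approach is correct and mirrors the paper's: appeal to the literature for the standard intuitionistic, H-B, and $\sf LTL$ axioms and rules, and verify directly only the novel pieces, namely \ref{ax:backward:dual} and \ref{axcoGodel}. One small point: your claim that Theorem~\ref{theoLICSone} ``does not on its own let us sidestep checking both semantics'' is mistaken---the theorem says the two sets of valid formulas coincide, so validity of any \emph{individual} axiom in one semantics yields it in the other (and likewise validity-preservation of any rule); the paper in fact uses this to verify \ref{ax:backward:dual} and \ref{axcoGodel} only on bi-relational models. Your duplicated checks are harmless, just redundant.
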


\begin{proof}
The rules \ref{axDimpMon} and \ref{axDimpDis} are readily seen to preserve validity.
We check Axioms \ref{ax:backward:dual} and~\ref{axcoGodel}; all other rules or axioms have been shown to be sound for intuitionistic or bi-relational models in the literature (see e.g.~\cite{Balbiani2017,Rauszer74}).

For Axiom \ref{ax:backward:dual}, it suffices to check its validity on the class of bi-relational models.
Let $\mathcal M=(W,T,\leq,S,\val\cdot)$ be a bi-relational model and suppose that $(w,t)\in \val{\tnext \left(\varphi \dimp \psi\right) }$; we must show that $(w,t)\in \val{\tnext  \varphi \dimp \tnext \psi  }$.
From $(w,t)\in \val{\tnext \left(\varphi \dimp \psi\right) }$ we see that $(w,S(t))\in \val{ \varphi \dimp \psi  }$; hence there is $(v,S(t))\geq (w,S(t))$ with $(v,S(t))\in \val\varphi\setminus\val\psi$.
But then $(v,S(t)) \in \val{\tnext\varphi}\setminus\val{\tnext\psi}$, witnessing that $(w,t)\in \val{\tnext\varphi\dimp\tnext\psi}$.

\noindent For Axiom \ref{axcoGodel}, let us assume towards a contradiction that $\neg \left(\left( \varphi \dimp \psi\right) \wedge \left(\psi \dimp \varphi\right)\right)$ is not valid with respect to bi-relational models, so we can find $\mathcal{M}$ as above and $(w,t)\in\mathcal{M}$ such that $(w,t)\not \in \val{\neg \left(\left( \varphi \dimp \psi\right) \wedge \left(\psi \dimp \varphi\right)\right)}$.
Therefore, there exists $(v,t) \le (w,t)$ such that $(v,t) \in \val{\left( \varphi \dimp \psi\right) \wedge \left(\psi \dimp \varphi\right)}$.
Therefore, $(v,t) \in \val{\varphi \dimp \psi}$ and  $(v,t) \in \val{\varphi \dimp \psi}$.
Therefore, there exists $(v',t)\ge (v,t)$ and $(v'',t)\ge (v,t)$ such that $(v',t) \in \val{\varphi}\setminus \val{\psi}$ and $(v'',t) \in \val{\psi}\setminus\val{\varphi}$. 
Since $(W,\le)$ is a linear order, either $(v',t) \le (v'',t)$ or $(v',t) \ge (v'',t)$.
In the former case we get that $(v',t) \in \val{\psi}$ and in the latter case we get that $(v'',t) \in \val{\varphi}$; in any case we reach a contradiction.
\end{proof}

Our main objective is to show that our calculus is indeed complete; proving this will take up the remainder of this paper.

As we show next, we can also derive the converses of some of these axioms. Below, for a set of formulas $\Gamma$ we define $\tnext \Gamma = \{\tnext\varphi : \varphi \in \Gamma\}$, and empty conjunctions and disjunctions are defined by $\bigwedge\varnothing =\top$ and $\bigvee \varnothing = \bot$.

\skipprf{When reasoning about $\gtl$, it is useful to note that by \ref{ax00Intu} and \ref{ax13MP}, the following weak form of the deduction theorem holds: for $\phi, \psi \in \lanfull$, we have $\phi \imp \psi \in \gtl$ if (and only if) we can deduce $\psi$ from $\phi$ in the system with $\gtl$ as its axioms and modus ponens as its only deduction rule. We will often implicitly use this to simplify reasoning.}

\begin{lem}\label{lemmReverse}
	Let $\varphi \in \landi$ and $\Gamma\subseteq \landi$ be finite. Then the following formulas belong to $\gtl$.
	\begin{multicols}2
	\begin{enumerate}
		
		\item $\tnext \bigvee \Gamma \Leftrightarrow \bigvee \tnext \Gamma$
		
		\item $\tnext \bigwedge \Gamma \Leftrightarrow \bigwedge \tnext \Gamma$

		\item\label{itReverseDiam} $\diam \varphi \imp \varphi \vee \tnext \diam \varphi$
		\item\label{itReverseBox} $\varphi \wedge \tnext \nec \varphi \imp \nec \varphi$
		
		\item \label{itNotRef} $ (\varphi\dimp\varphi) \imp \psi$

		\item\label{itReverseDimp} $(\varphi \dimp \psi) \imp \varphi$
	\end{enumerate}
\end{multicols}	
\end{lem}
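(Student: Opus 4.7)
The plan is to split the six items according to which axioms they most naturally lean on. Items (1) and (2) follow from the $\tnext$-axioms together with the derivable monotonicity of $\tnext$; items (3) and (4) are the harder ``converse fixpoint'' directions, which I would derive via the induction axioms \ref{ax11:ind:1} and \ref{ax12:ind:2}; finally items (5) and (6) should reduce directly to the H-B rules for $\dimp$. Throughout I will freely use that from $\varphi \imp \psi$ one obtains $\tnext\varphi \imp \tnext\psi$, $\ubox\varphi \imp \ubox\psi$ and $\diam\varphi \imp \diam\psi$ by combining the necessitation rules \ref{ax14NecCirc}, \ref{ax14NecBox} with Axioms \ref{ax05KNext}, \ref{ax06KBox}, \ref{ax07:K:Dual}; I will refer to these as the \emph{monotonicity} of $\tnext$, $\ubox$, $\diam$.

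For (1) and (2), one direction of each is already given by Axioms \ref{ax04NexVee} and \ref{ax03NexWedge}, iterated to the $n$-ary case by induction on $|\Gamma|$. The converse $\bigvee \tnext\Gamma \imp \tnext\bigvee\Gamma$ follows by applying monotonicity of $\tnext$ to each disjunct $\varphi \imp \bigvee\Gamma$, and $\tnext\bigwedge\Gamma \imp \bigwedge\tnext\Gamma$ follows analogously from $\bigwedge\Gamma \imp \varphi$. The empty case of (1) is handled by the contrapositive reading $\tnext\bot \imp \bot$ of Axiom \ref{ax02Bot}; the empty case of (2) uses $\top \imp \tnext\top$, obtained by applying \ref{ax14NecCirc} to $\top$.

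The heart of the lemma is (3) and (4). For (3), I would set $\psi := \varphi \vee \tnext\diam\varphi$ and aim to prove $\diam\psi \imp \psi$ from Axiom \ref{ax12:ind:2}. The required premise $\ubox(\tnext\psi \imp \psi)$ reduces, via \ref{ax04NexVee} and \ref{ax14NecBox}, to showing $\tnext\varphi \imp \tnext\diam\varphi$ and $\tnext\tnext\diam\varphi \imp \tnext\diam\varphi$, both of which follow by monotonicity from the instances $\varphi \imp \diam\varphi$ and $\tnext\diam\varphi \imp \diam\varphi$ of Axiom \ref{ax10DiamFix}. Combined with $\diam\varphi \imp \diam\psi$ (monotonicity of $\diam$ from $\varphi \imp \psi$), this yields $\diam\varphi \imp \psi$. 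Item (4) is entirely dual: set $\chi := \varphi \wedge \tnext\nec\varphi$, use Axiom \ref{ax09BoxFix} (giving $\nec\varphi \imp \chi$) together with monotonicity of $\tnext$ to deduce $\chi \imp \tnext\nec\varphi \imp \tnext\chi$, apply \ref{ax11:ind:1} to conclude $\chi \imp \ubox\chi$, and then use monotonicity of $\ubox$ applied to $\chi \imp \varphi$ to reach $\chi \imp \nec\varphi$.

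Items (5) and (6) should be immediate from the H-B apparatus. For (5), specialising Axiom \ref{axcoGodel} to $\psi := \varphi$ gives $\neg((\varphi\dimp\varphi)\wedge(\varphi\dimp\varphi))$, i.e.\ $(\varphi\dimp\varphi) \imp \bot$, whence $(\varphi\dimp\varphi) \imp \psi$ follows by ex falso. For (6), the intuitionistic tautology $\varphi \imp \psi \vee \varphi$ together with Rule \ref{axDimpDis} (taking $\gamma := \varphi$) yields $(\varphi\dimp\psi) \imp \varphi$ in one step. The main obstacle I anticipate is identifying the right witness to feed into the induction axioms in (3) and (4); once one notices that $\varphi \vee \tnext\diam\varphi$ and $\varphi \wedge \tnext\nec\varphi$ are themselves natural fixed points of the relevant $\tnext$-operation, the remaining verifications are routine applications of monotonicity.
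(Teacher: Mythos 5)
Your proposal follows the paper's own proof almost step for step: items 1 and 2 via Axioms \ref{ax04NexVee}/\ref{ax03NexWedge} in one direction and $\tnext$-monotonicity (necessitation \ref{ax14NecCirc} plus \ref{ax05KNext}) in the other, with \ref{ax02Bot} handling the empty disjunction; items 3 and 4 via the induction axioms \ref{ax12:ind:2}/\ref{ax11:ind:1} applied to exactly the same fixed-point witnesses $\varphi\vee\tnext\diam\varphi$ and $\varphi\wedge\tnext\nec\varphi$, with \ref{ax07:K:Dual}/\ref{ax06KBox} supplying the final distribution step; and item \ref{itReverseDimp} from $\varphi\imp\psi\vee\varphi$ and Rule \ref{axDimpDis}, exactly as in the paper. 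The only divergence is item \ref{itNotRef}. The paper obtains it by the same one-step trick as item \ref{itReverseDimp}: from the tautology $\varphi\imp\varphi\vee\psi$, Rule \ref{axDimpDis} (reading its $\gamma$ as $\psi$) yields $(\varphi\dimp\varphi)\imp\psi$ directly. Your detour through Axiom \ref{axcoGodel} plus ``ex falso'' is more delicate than it looks: in this paper $\bot$ is an abbreviation for $p\dimp p$, so $\bot\imp\psi$ is not itself a substitution instance of an intuitionistic tautology in $\wedge,\vee,\imp$ --- it is precisely an instance of item \ref{itNotRef}, the statement being proved. The step is salvageable (derive $\bot\imp\psi$ from $p\imp p\vee\psi$ by Rule \ref{axDimpDis}), but at that point you have reproduced the paper's direct argument and the appeal to \ref{axcoGodel} becomes superfluous; apart from this small circularity risk, the proposal is correct.
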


\skipprf{
\begin{proof}
For the first two claims, one direction is obtained from repeated use of Axioms \ref{ax04NexVee} or \ref{ax03NexWedge} and the other is proven using \ref{ax14NecCirc} and \ref{ax05KNext}; note that the first claim requires \ref{ax02Bot} to treat the case when $\Gamma = \varnothing$. Details are left to the reader.

For the third claim, reasoning within $\logbasic$, note that $\varphi \imp \diam \varphi$ holds by \ref{ax10DiamFix}. By \ref{ax14NecCirc} we have $\tnext(\varphi\imp\diam\varphi)$, and so by \ref{ax05KNext} and \ref{ax13MP} we have $\tnext\varphi \imp \tnext \diam \varphi$. 
In a similar way, $\tnext \diam \varphi \imp \diam \varphi$ holds by \ref{ax10DiamFix}, so $\tnext {\tnext\diam \varphi} \imp \tnext \diam \varphi$ follows as before by \ref{ax14NecCirc}, \ref{ax05KNext} and \ref{ax13MP}. 
Thus we obtain
\[\tnext \varphi \vee \tnext {\tnext \diam \varphi} \imp \tnext \diam \varphi.\] 
Using \ref{ax04NexVee} and some propositional reasoning we obtain 
\[\tnext(\varphi \vee \tnext \diam \varphi) \imp \varphi \vee \tnext \diam \varphi.\] 
By necessitation, we have 
\[\Box\big(\tnext(\varphi \vee \tnext \diam \varphi) \imp \varphi \vee \tnext \diam \varphi\big)\]
and so by \ref{ax12:ind:2} we have 
\[\diam(\varphi \vee \tnext \diam \varphi) \imp \varphi \vee \tnext \diam \varphi.\] 
Since $\diam\varphi \imp \diam (\varphi \vee \tnext \diam \varphi)$ can be proven using \ref{ax07:K:Dual}, we obtain $\diam\varphi \imp \varphi \vee \tnext \diam \varphi$, as needed.

The fourth claim is similar: note that $\nec \varphi \imp \varphi$ holds by \ref{ax09BoxFix}, so we have $\tnext\nec \varphi \imp \tnext \varphi$ by \ref{ax14NecCirc}, \ref{ax05KNext} and \ref{ax13MP} as before.
Similarly, $\nec \varphi \imp \tnext \nec \varphi$ holds by \ref{ax09BoxFix}, so \mbox{$\tnext \nec \varphi \imp \tnext \tnext\nec \varphi$} does by \ref{ax14NecCirc}, \ref{ax05KNext} and \ref{ax13MP}. 
Hence, we have 
\[\tnext \nec \varphi \imp \tnext \varphi \wedge \tnext \tnext \nec \varphi.\]
Using \ref{ax03NexWedge} and some propositional reasoning we obtain
\[\varphi \wedge \tnext \nec \varphi \imp \tnext \left(\varphi \wedge \tnext \nec \varphi \right).\]
By necessitation and \ref{ax11:ind:1}, we have \[\varphi \wedge \tnext \nec \varphi \imp \nec \left(\varphi \wedge \tnext \nec \varphi \right).\] Since
$ \nec (\varphi \wedge \tnext \nec \varphi)\imp \nec\varphi $ can be proven using \ref{ax06KBox}, we obtain $ \varphi \wedge \tnext \nec \varphi \imp \nec\varphi$, as needed.

The last two claims can be derived from $\varphi\imp\psi\vee\varphi$ and Rule~\ref{axDimpDis}.
\end{proof}
}

\color{black}

\section{Labelled systems and quasimodels}\label{SecNDQ}

Quasimodels will be a central tool in our completeness proof.
These were originally introduced in \cite{Fernandez09} for \emph{dynamic topological logic}, a classical predecessor of \emph{intuitionistic temporal logic}, for which quasimodels were also used in \cite{F-D18}.
In this section we will introduce labelled spaces, labelled systems, and finally, quasimodels. Quasimodels can be viewed as a sort of nondeterministic generalisation of bi-relational models.
Quasimodels are a great advantage to us since $\gtl$ has the finite quasimodel property (any falsifiable formula is falsifiable in a finite quasimodel), despite not having the finite model property for either the real or the bi-relational semantics \cite{gtllics}.

\begin{defn}\label{def:type}
	Let $\Sigma \subseteq \landif$ be closed under subformulas and $\Phi^+,\Phi ^ - \subseteq \Sigma$. We say that the pair $\Phi = (\Phi ^+ , \Phi ^-) $ is a \define{two-sided $\Sigma$-type} if:
	\begin{enumerate}
		\item $\Phi^- \cap \Phi ^ +  = \varnothing$,
		\label{cond:type:intersection}
		
		
		\item if $\varphi\wedge\psi\in \Phi ^ +$, then $\varphi,\psi\in \Phi^+$,
		\label{cond:type:posconj}
		
		\item if $\varphi\wedge\psi\in \Phi ^ -$, then	
		$\varphi \in \Phi ^-$ or $\psi\in \Phi^-$,
		\label{cond:type:negconj}
		
		\item if $\varphi\vee\psi\in \Phi ^ +$, then	$\varphi \in \Phi^+$ or $\psi\in \Phi^+$,
		
		\label{cond:type:posdisj}
		
		\item if $\varphi\vee\psi\in \Phi ^ -$, then  $\varphi , \psi\in \Phi^-$,
		\label{cond:type:negdisj}
		
		\item if $\varphi\imp\psi\in \Phi^+$, then $\varphi \in \Phi^-$ or $\psi \in\Phi^+$,
		\label{cond:type:implication}
		
		\item if $\varphi\imp\psi\in \Phi^-$, then $\psi \in\Phi^-$, 
		\label{cond:type:implication:neg} 
		
		\item if $\varphi\dimp \psi\in\Phi^-$, then $\varphi  \in \Phi^-$ or $\psi \in \Phi^+$, \label{cond:type:coimplication:neg}
		
		\item if $\varphi \dimp \psi \in\Phi^+$, then $\varphi \in \Phi^+$, \label{cond:type:coimplication}

		\item\label{cond:type:diam} if $\diam \varphi \in \Phi^-$, then $\varphi \in \Phi^-$,
		\item\label{cond:type:box} if $\nec \varphi \in \Phi^+$, then $\varphi \in \Phi^+$.
		
	\end{enumerate}
	If moreover $\Sigma = \Phi^- \cup \Phi^+$, we may say that $\Phi$ is \define{saturated}.
	The set of saturated two-sided $\Sigma$-types will be denoted $\type{\Sigma}$.  Given $\Phi, \Psi \in \type{\Sigma}$, we write
	\[\Phi\leq_\Sigma \Psi \text{ if and only if  } \Phi^- \subseteq \Psi^- \hbox{ and } \Phi^+ \supseteq \Psi^+.\] 
\end{defn}

 Often we want $\Sigma$ to be finite, in which case we write $\Sigma\Subset \lanfull$ to indicate that $\Sigma\subseteq \lanfull$ and $\Sigma$
is finite and closed under subformulas.
We remark that if $\Phi\in \type\Sigma$, then $\Phi^-=\Sigma\setminus \Phi^+$ (and vice-versa), but it is convenient to view $\Phi$ as a pair, since both the `positive' and `negative' information will play an important role.

A partially ordered set $(A,\leq)$ is \define{locally linear} if it is a disjoint union of linear posets.
If $a,b\in A$, we write $a\compa b$ if $a\leq b$ or $b\leq a$.
We call the set $\{b\in A:b\compa a\}$ the \define{linear component} of $a$; by assumption, linear components partition $A$.

\begin{defn}\label{frame}
Let $\Sigma\subseteq\lanfull$ be closed under subformulas.
A \define{$\Sigma$-labelled space} is a triple $\mathcal  W= ( |\mathcal  W|,\leq_\mathcal  W ,\ell_\mathcal  W )$, where $( |\mathcal  W| ,\leq_\mathcal  W )$ is a locally linear poset and $\ell\colon |\mathcal  W| \to \type\Sigma$ a monotone function, in the sense that 
\[w\leq_\mathcal W v \text{ implies } \ell_\mathcal  W(w) \leq_\Sigma \ell_\mathcal  W(v),\]
and such that for all $w\in |\mathcal  W|$:
\begin{itemize}\item
 whenever $\varphi\imp \psi\in  \ell^-_\mathcal  W(w) $, there is $v\leq _\mathcal W w$ such that $\varphi\in \ell^+_\mathcal  W(v)$ and $\psi  \in \ell^-_\mathcal  W(v)$;
 \item
 whenever $\varphi\dimp \psi\in \ell^+_\mathcal  W(w) $, there is $v\geq_\mathcal W  w$ such that $\varphi\in \ell^+_\mathcal  W(v)$ and $\psi\in \ell^-_\mathcal  W (v)$.
 \end{itemize}

The $\Sigma$-labelled space $\mathcal  W$ \define{falsifies} $\varphi\in\lanfull$ if $\varphi\in \ell_\mathcal  W(w)^-$ for some $w\in W$.
The \define{height} of $\mathcal W$ is the supremum of all $n$ such that there is a chain $w_1 <_\mathcal  W w_2 <_\mathcal  W \ldots <_\mathcal  W  w_n$.
\end{defn}

If $ \mathcal W$ is a labelled space, elements of $|\mathcal W|$ will sometimes be called \define{worlds}.
When clear from context we will omit subscripts and write, for example,~$\leq$ instead of $\leq_\mathcal  W$. 

Recall that a subset $S$ of a poset $(P,\leq)$ is \define{convex} if $s \in S$ whenever $a,b \in S$ and $a\leq s\leq b$.
A \define{convex relation} 
 between posets $(A,\leq_A)$ and $(B,\leq_B)$ is a binary relation $R\subseteq A\times B$ such that for each $x \in A$ the image set $\{y \in B \mid x \mathrel R y\}$ is convex with respect to $\leq_B$, and for each $y \in B$ the preimage set $\{x \in A \mid x \mathrel R y\}$ is convex with respect to $\leq_A$.
 The relation $R$ is \define{fully confluent} if it validates the four following conditions:
\begin{description}

\item[Forth--down]\label{forward--down}\phantom{.}\newline if $x \leq _A x' \mathrel R y'$ there is $y$ such that $x \mathrel R y \leq_B y'$,

\item[Forth--up]\label{forward--up}\phantom{.}\newline if $x' \geq _A x \mathrel R y$ there is $y'$ such that $x' \mathrel R y' \geq_B y$, 

\item[Back--down]\label{backward--down}\phantom{.}\newline if $x' \mathrel R y' \geq_B y$ there is $x$ such that $x'  \geq _A x \mathrel R y$,

\item[Back--up]\phantom{.}\newline if $x \mathrel R y \leq_B y'$ there is $x'$ such that $x \leq _A x' \mathrel R y'$.

\end{description}
In other words, $R$ is fully confluent if ${\leq_A} \circ R = R \circ {\leq_B}$ and ${\geq_A} \circ R = R \circ {\geq_B}$.

\begin{defn}\label{compatible}
Let $\Sigma\subseteq\lanfull$ be closed under subformulas. Suppose that $\Phi,\Psi\in\type\Sigma$. The ordered pair $(\Phi,\Psi)$ is \define{sensible} if it satisfies the following conditions:
\begin{enumerate}
\item If $\nx\varphi\in \Phi^+$, then $ \varphi\in \Psi^+$.
\item If $\nx\varphi\in \Phi^-$, then $ \varphi\in \Psi^-$.
\item If $\ps\varphi\in \Phi^+$, then $\varphi\in\Phi^+$ or $\ps\varphi\in \Psi^+$. 
\item If $\ps\varphi\in \Phi^-$, then $\varphi\in\Phi^-$ and $\ps\varphi\in \Psi^-$.
\item If $\nec \varphi\in \Phi^+$, then $\varphi\in \Phi^+$ and $\nec \varphi\in \Psi^+$.
\item If $\nec \varphi\in \Phi^-$, then $\varphi\in \Phi^-$ or $\nec \varphi\in \Psi^-$.
\end{enumerate}
A pair $(w,v)$ of worlds in a labelled space $\mathcal  W$ is \define{sensible} if $(\ell (w),\ell (v))$ is sensible.
A relation
$S\subseteq |\mathcal  W|\times |\mathcal  W|$
is \define{sensible} if every pair in $S$ is sensible.
Further, $S$ is \define{$\omega$-sensible} if
\begin{itemize}

\item whenever $\ps\varphi\in \ell^+_\mathcal W(w)$, there are $n\geq 0$ and $v$ such that $w \mathrel S^n v$ and $\varphi\in \ell^+_\mathcal W(v)$; 

\item whenever $\nec\varphi\in \ell^-_\mathcal W(w)$, there are $n\geq 0$ and $v$ such that $w \mathrel S^n v$ and $\varphi\in \ell^-_\mathcal W(v)$.

\end{itemize}
Recall that a binary relation is said to be \define{serial} if every element of the domain is related to some element of the co-domain.

A \define{labelled system} is a labelled space $\mathcal  W$ equipped with a serial, fully confluent, convex sensible relation $R_\mathcal W\subseteq |\mathcal  W|\times|\mathcal  W|$.
If moreover $R_\mathcal W$ is $\omega$-sensible, we say that $\mathcal  W$ is a \define{$\Sigma$-quasimodel.}

\end{defn}

Any bi-relational model can be regarded as a $\Sigma$-qua\-simodel: If $\mathcal {X} = (W,T,{\leq},S, \val \cdot)$ is a bi-relational model and $x\in W \times T$, we can assign a $\Sigma$-type $\ell_\mathcal  X(x)$ to $x$ given by
\begin{align*}
\ell_\mathcal  X(x)^+ &=\cbra\psi\in \Sigma \mid x\in \val\psi \cket\\
\ell_\mathcal  X(x)^- &=\cbra\psi\in \Sigma \mid x\not\in \val\psi \cket.
\end{align*}
Note that this assignment of types is $\leq_\Sigma$-monotone.
We also set $R_\mathcal  X=\{((w, t) , (w, S(t))) \mid w \in W, t \in T\}$; it is obvious that $R_\mathcal  X$ is $\omega$-sensible.
Henceforth we will tacitly identify $\mathcal  X$ with its associated $\Sigma$-quasimodel.

The following is proved in \cite{gtllics}.

\begin{theorem}\label{theoLICS}
Given $\varphi\in\lanfull$, the following are equivalent:
\begin{enumerate}

\item $\varphi$ is falsifiable.

\item $\varphi$ is falsifiable in a quasimodel.

\item $\varphi$ is falsifiable in a finite quasimodel.

\end{enumerate}
\end{theorem}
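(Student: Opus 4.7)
The plan is to prove the chain $(1) \Rightarrow (2) \Rightarrow (3) \Rightarrow (1)$, since $(3)\Rightarrow(2)$ is tautological. Fix $\varphi\in\lanfull$ to be falsified and let $\Sigma\Subset\lanfull$ be the closure of $\{\varphi\}$ under subformulas.

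For $(1)\Rightarrow(2)$, I would appeal directly to the observation made in the paragraph following Definition~\ref{compatible}: every bi-relational model $\mathcal X$ determines a $\Sigma$-quasimodel by equipping $W\times T$ with the pointwise order, the natural $\Sigma$-typing $\ell_{\mathcal X}$, and the functional successor relation $R_\mathcal X$. Since (by Theorem~\ref{theoLICSone}) falsifiability over reals and over bi-relational frames coincide, a model falsifying $\varphi$ yields a quasimodel in which $\varphi\in\ell^-_\mathcal X(x)$ for some $x$. The only checks are that all labelling, sensibility, convexity and $\omega$-sensibility conditions hold on $R_\mathcal X$ — these are routine because the successor is a function and valuations are downward closed in the first coordinate.

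For $(2)\Rightarrow(3)$ I would use a filtration through $\Sigma$-types. Given a quasimodel $\mathcal W$, define an equivalence $w\sim v$ iff $\ell(w)=\ell(v)$ and $w,v$ lie in the same linear component, then collapse to get a finite labelled space $\mathcal W/{\sim}$ whose worlds are in bijection with a subset of $\type\Sigma$ (times component-labels). Transport $\leq$ and $R$ to $\mathcal W/{\sim}$ by existential quotient; the finitely many labels guarantee finiteness. The key verifications are that (a) the quotient order remains locally linear, (b) the quotient relation remains convex, fully confluent and serial, and (c) $\omega$-sensibility survives filtration — for this last point I would use a König-type/periodicity argument to show that any eventuality $\diam\psi\in\ell^+(w)$ that was realized in $\mathcal W$ after finitely many $R$-steps is still realized in the finite quotient, since only finitely many distinct types appear along the witnessing path.

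The hardest implication is $(3)\Rightarrow(1)$, the unwinding from a finite quasimodel to a bi-relational (equivalently, by Theorem~\ref{theoLICSone}, a real) model. Given a finite $\Sigma$-quasimodel $\mathcal W$ falsifying $\varphi$ at some $w_0$, the goal is to construct a bi-relational model $\mathcal X=(W,\N,\leq,S,\val\cdot)$ and a world $(w^*,0)\in W\times\N$ with $\varphi\in\ell^-_{\mathcal X}(w^*,0)$. The strategy is to build an infinite sequence of snapshots $(W_n)_{n<\omega}$ where each $W_n$ is a linearly ordered copy of a `slice' of $\mathcal W$ and $S$ sends the $n$th layer into the $(n+1)$th according to a carefully chosen deterministic selection from $R_\mathcal W$; this selection must eventually realize every pending eventuality ($\diam$ in the positive label, $\nec$ in the negative label). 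A standard scheduler/fair-interleaving construction over the finitely many eventualities guarantees this. The delicate part is ensuring that the resulting $S$ is (i) well-defined on worlds (functional), and (ii) compatible with the linear order so that the induced bi-relational valuation $\val\psi = \{(w,t): \psi\in\ell^+_{\mathcal W}(\pi(w,t))\}$ satisfies the clauses of Definition~\ref{DefKSem}; this relies crucially on Forth--down, Back--up (for the Heyting/Brouwer clauses) and convexity of $R$. The main obstacle is precisely this: showing that the truth-lemma holds for $\imp$ and $\dimp$ in the unwinding, which requires the $\leq$-structure on each layer to contain enough witnesses for both the implication clause (demanding a smaller world where $\varphi$ holds and $\psi$ fails) and its dual. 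Because the full technical development of this unwinding was carried out in~\cite{gtllics}, I would invoke that result rather than reproduce it in detail.
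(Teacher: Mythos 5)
You should first note that the paper does not actually prove Theorem~\ref{theoLICS}: it is imported wholesale from \cite{gtllics}, the only (omitted) commentary being a sketch of the one direction used later, namely that falsifiability in a quasimodel yields falsifiability in a bi-relational model, by selecting $R$-paths and assembling them into a deterministic model. So your decision to cite \cite{gtllics} for the unwinding step (3)$\Rightarrow$(1) is consistent with the paper's own treatment, and your (1)$\Rightarrow$(2) step is exactly the observation following Definition~\ref{compatible}, combined legitimately with Theorem~\ref{theoLICSone}.

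Where you do attempt to supply an argument, the filtration for (2)$\Rightarrow$(3), there is a concrete gap. The equivalence you define ($w\sim v$ iff $\ell(w)=\ell(v)$ and $w,v$ lie in the same linear component) never identifies worlds lying in \emph{distinct} components, so the quotient is finite only if the quasimodel has finitely many linear components, which need not be the case. To obtain finiteness one must identify worlds across components sharing the same component profile $L(w)=\{\ell(v)\mid v\compa w\}$, exactly as in the quotient of Section~\ref{Sec:quotient}; and after that coarser identification the transported successor relation need not be convex any longer, so it must be closed off (the $R^+_{\mathcal Q}$ step) and full confluence, sensibility and convexity re-verified --- your sketch omits this repair entirely. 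Conversely, the point you flag as delicate, preservation of $\omega$-sensibility, is in fact the easy part in this semantic setting: any witnessing $R$-path in the original quasimodel projects to a path in the quotient with the same labels, so witnesses for $\diam$ in the positive label and $\nec$ in the negative label survive directly, and no K\"onig-type or periodicity argument is required (such arguments are only the crux when, as in Sections~\ref{SecChar}--\ref{SecComp}, the structure being quotiented is not itself $\omega$-sensible).
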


\skipprf{
For our purposes, we only need to know that a formula falsifiable on a quasimodel is falsifiable on a bi-relational model.
The main issue is that a quasimodel $(W,\leq,R,\ell)$ may be nondeterministic, in the sense that worlds could have multiple $R$-successors.
However, we obtain a deterministic structure by considering paths $w_0 \mathrel R w_1\mathrel Rw_2\mathrel R\ldots $.
The sensibility conditions ensure that such paths may be chosen to obey the semantics of the temporal operators; these paths can then be arranged together to form a bi-relational model.
For precise details of the construction, we refer the reader to \cite{gtllics}.
}

\section{The canonical model}\label{secCan}

In this section we construct a standard canonical model for $\sf GTL$.
In the presence of $\ps$ and $\nec$, the standard canonical model is only a labelled system, rather than a proper bi-relational model.
Nevertheless, it will be a useful ingredient in our completeness proof.
Since we are working over an intermediate logic, the role of maximal consistent sets will be played by complete types, as defined below.
The notation $\vdash$ always refers to derivability in the calculus defined in Section \ref{secAx}.
Below, recall that by convention, $\bigwedge\varnothing = \top$ and $\bigvee\varnothing=\bot$.

\begin{defn}\label{def:complete}
Given two sets of formulas $\Gamma,\Delta \subseteq\lanfull$, we say that $\Delta$ is a consequence of $\Gamma$, denoted by $\Gamma \vdash \Delta$, if there exist finite (possibly empty) $\Gamma'\subseteq \Gamma$ and $\Delta' \subseteq \Delta$ such that $ \vdash \bigwedge \Gamma' \imp \bigvee \Delta'$ (i.e. $\bigwedge \Gamma' \imp \bigvee \Delta' \in \gtl$).

We say that a pair of sets $\Phi =(\Phi^+,\Phi^-) \in \lanfull \times \lanfull$ is \define{consistent} if $\Phi^+ \not\vdash \Phi^-$. A saturated, consistent pair is a \define{complete type}. The set of complete types will be denoted $\ptype{}$.
\end{defn}

Note that we are using the standard interpretation of $\Gamma \vdash \Delta$ in Gentzen-style calculi. When working within a turnstile, we will follow the usual proof-theoretic conventions of writing $\Gamma,\Delta$ instead of $\Gamma \cup \Delta$, and writing $\varphi$ instead of $\{\varphi\}$.
Observe that there is no clash in terminology regarding the use of the word {\em type}.

\begin{lem}\label{lemmcompleteIsType}
If $\Phi$ is a complete  type then $\Phi$ is a saturated two-sided $\lanfull$-type.
\end{lem}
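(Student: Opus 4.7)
The plan is to verify each of the eleven clauses of Definition~\ref{def:type} (with $\Sigma = \mathcal{L}$) by a uniform contradiction argument: assume the antecedent of a given clause holds and its conclusion fails; use saturation of $\Phi$ to place the ``missing'' formulas in the opposite component; then exhibit finite subsets $\Gamma' \subseteq \Phi^+$ and $\Delta' \subseteq \Phi^-$ together with a derivable implication $\vdash \bigwedge \Gamma' \imp \bigvee \Delta'$, contradicting consistency.

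First I would dispatch clause~\ref{cond:type:intersection}: if some $\varphi$ lay in $\Phi^+ \cap \Phi^-$, the trivial derivation $\vdash \varphi \imp \varphi$ would witness $\Phi^+ \vdash \Phi^-$. All remaining clauses follow this same template, each relying on one small derivable implication. For the propositional clauses I would use the following witnesses, each a substitution instance of an intuitionistic tautology admitted by Axiom~\ref{ax00Intu}: $\vdash \varphi \wedge \psi \imp \varphi$ and $\vdash \varphi \wedge \psi \imp \psi$ for clause~\ref{cond:type:posconj}; $\vdash \varphi \wedge \psi \imp \varphi \wedge \psi$ (with $\Gamma' = \{\varphi, \psi\}$ and $\Delta' = \{\varphi \wedge \psi\}$) for clause~\ref{cond:type:negconj}; the dual $\vdash \varphi \vee \psi \imp \varphi \vee \psi$ (with $\Gamma' = \{\varphi \vee \psi\}$ and $\Delta' = \{\varphi, \psi\}$) for clause~\ref{cond:type:posdisj}; $\vdash \varphi \imp \varphi \vee \psi$ (and its symmetric counterpart) for clause~\ref{cond:type:negdisj}; $\vdash \varphi \wedge (\varphi \imp \psi) \imp \psi$ (modus ponens in conditional form) for clause~\ref{cond:type:implication}; and $\vdash \psi \imp (\varphi \imp \psi)$ for clause~\ref{cond:type:implication:neg}.

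For the two clauses on dual implication, the key non-intuitionistic ingredients are available in the calculus. Clause~\ref{cond:type:coimplication:neg} is handled by Axiom~\ref{axco01}, which gives $\vdash \varphi \imp \psi \vee (\varphi \dimp \psi)$; taking $\Gamma' = \{\varphi\}$ and $\Delta' = \{\psi, \varphi \dimp \psi\}$ yields the required contradiction. Clause~\ref{cond:type:coimplication} is handled by the derived formula $\vdash (\varphi \dimp \psi) \imp \varphi$ from Lemma~\ref{lemmReverse}\eqref{itReverseDimp}.

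Finally, for the two temporal clauses I would invoke the temporal axioms directly. Clause~\ref{cond:type:diam} follows since $\vdash \varphi \imp \diam \varphi$ is an immediate consequence of Axiom~\ref{ax10DiamFix}, and clause~\ref{cond:type:box} follows since $\vdash \nec \varphi \imp \varphi$ is an immediate consequence of Axiom~\ref{ax09BoxFix}. None of the steps is really an obstacle; the only mild subtlety is remembering to use saturation to move an absent formula across the partition before deriving the offending sequent, and keeping track of the convention $\bigvee \varnothing = \bot$, $\bigwedge \varnothing = \top$ when the minimal derivation has an empty antecedent or succedent.
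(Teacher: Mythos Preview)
Your proposal is correct and follows essentially the same approach as the paper: both argue by contradiction, using saturation to place ``missing'' formulas in the opposite component and then invoking consistency together with the relevant intuitionistic tautology, H--B axiom~\ref{axco01}, Lemma~\ref{lemmReverse}\eqref{itReverseDimp}, or temporal axioms~\ref{ax09BoxFix}/\ref{ax10DiamFix}. The paper's printed proof is terser (it spells out only conditions~\ref{cond:type:intersection}, \ref{cond:type:diam}, \ref{cond:type:box} and leaves the rest to the reader), but the underlying argument is the same as yours.
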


\begin{proof}
Let $\Phi$ be a complete type. Observe that $\Phi$ is already saturated by definition, so it remains to check that it satisfies all conditions of Definition \ref{def:type}.
Condition \ref{cond:type:intersection} follows from the consistency of $\Phi$.\skipprf{
The proofs of the other conditions are all similar to each other. We check the ones related to implication and co-implication:
\begin{itemize}
	\item Condition~\ref{cond:type:implication}: assume by contradiction that $\varphi \imp \psi \in \Phi^+$ but $\varphi \not \in \Phi^-$ and $\psi \not \in \Phi^+$. Since $\Phi$ is saturated, $\varphi \in \Phi^+$ and $\psi \in \Phi^-$. By Rule~\ref{ax13MP}, $\psi \in \Phi^+$: a contradiction.
	\item Condition~\ref{cond:type:implication:neg}: assume by contradiction that $\varphi \imp \psi \in \Phi^-$ but $\psi \not \in \Phi^-$. Since $\Phi$ is saturated $\psi \in \Phi^+$. Since $\psi \imp \left(\varphi  \imp \psi\right)$ is intuitionistically valid, we get $\varphi \imp \psi \in \Phi^+$ by Rule~\ref{ax13MP}: a contradiction.
	\item Condition~\ref{cond:type:coimplication:neg}: under the assumption that $\Phi$ is saturated, assume by contradiction that $\varphi \dimp \psi \in \Phi^-$ but $\varphi \in \Phi^+$ and $\psi \in \Phi^-$.  Therefore, $\psi \vee \left(\varphi \dimp \psi\right) \in \Phi^-$. By Rule~\ref{ax13MP} and Axiom~\ref{axco01} we get $\psi\vee \left(\varphi \dimp \psi\right) \in \Phi^+$: a contradiction. 
	\item Condition~\ref{cond:type:coimplication}: Using Lemma \ref{lemmReverse}(\ref{itReverseDimp}, we see that $\vdash \left(\varphi \dimp \psi \right) \imp \varphi$. Therefore, if $\varphi \dimp \psi \in \Phi^+$ we can conclude $\psi \in \Phi^+$ thanks to modus ponens.
\end{itemize}
}
For condition \ref{cond:type:diam} we use Axiom \ref{ax10DiamFix}: if $\diam \varphi \in \Phi^-$ and $\varphi \in \Phi^+$ we would have that $\Phi$ is inconsistent; hence $\varphi \in \Phi^-$. Condition~\ref{cond:type:box} is proved using Axiom~\ref{ax09BoxFix}. The remaining conditions are left to the reader.
\end{proof}
As with maximal consistent sets, complete types satisfy a Lindenbaum property.
Below, if $(\Gamma,\Delta)$ and $(\Gamma',\Delta')$ are pairs of sets of formulas, we say that $(\Gamma',\Delta')$ \textbf{extends} $(\Gamma,\Delta)$ if $\Gamma\subseteq\Gamma'$ and $\Delta\subseteq\Delta'$.

\begin{lem}[Lindenbaum lemma]\label{LemmLind}
Let $\Gamma,\Delta \subseteq \lanfull$. If $\Gamma \not\vdash \Delta$, then there exists a complete type $\Phi$ extending $(\Gamma,\Delta)$.
\end{lem}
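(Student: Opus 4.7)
The proof will follow the standard Lindenbaum construction adapted to two-sided types. Since $\mathbb P$ is countable, $\lanfull$ is countable, so I would fix an enumeration $\varphi_0, \varphi_1, \varphi_2, \ldots$ of $\lanfull$. Setting $\Phi_0 = (\Gamma, \Delta)$ (which is consistent by hypothesis), I would define a chain $\Phi_0 \subseteq \Phi_1 \subseteq \Phi_2 \subseteq \ldots$ of consistent pairs by: if $(\Phi_n^+ \cup \{\varphi_n\}, \Phi_n^-)$ is consistent, let $\Phi_{n+1}$ be this pair; otherwise let $\Phi_{n+1} = (\Phi_n^+, \Phi_n^- \cup \{\varphi_n\})$. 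Finally set $\Phi^+ = \bigcup_n \Phi_n^+$ and $\Phi^- = \bigcup_n \Phi_n^-$.

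The linchpin of the construction is an extension sublemma asserting that for any consistent pair $(\Xi^+, \Xi^-)$ and any $\varphi \in \lanfull$, at least one of $(\Xi^+ \cup \{\varphi\}, \Xi^-)$ and $(\Xi^+, \Xi^- \cup \{\varphi\})$ is consistent. I would argue contrapositively: if both are inconsistent, then by the finitary definition of $\vdash$ there are finite $\Gamma_1, \Gamma_2 \subseteq \Xi^+$ and $\Delta_1, \Delta_2 \subseteq \Xi^-$ such that $\vdash \bigwedge \Gamma_1 \wedge \varphi \imp \bigvee \Delta_1$ and $\vdash \bigwedge \Gamma_2 \imp \bigvee \Delta_2 \vee \varphi$. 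The intuitionistically valid schema
\[
\bigl((A \imp B \vee \varphi) \wedge (C \wedge \varphi \imp D)\bigr) \imp (A \wedge C \imp B \vee D)
\]
(provable by assuming $A \wedge C$, applying modus ponens to get $B \vee \varphi$, and splitting on the disjunction) then yields $\vdash \bigwedge (\Gamma_1 \cup \Gamma_2) \imp \bigvee (\Delta_1 \cup \Delta_2)$, contradicting the consistency of $(\Xi^+, \Xi^-)$. Hence the recursive definition goes through, and each $\Phi_n$ is consistent.

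It remains to verify that the limit $\Phi$ is a complete type extending $(\Gamma, \Delta)$. Extension is immediate from $\Phi_0 = (\Gamma, \Delta)$. Saturation is immediate from the construction: any $\varphi \in \lanfull$ equals $\varphi_n$ for some $n$ and was placed into $\Phi_{n+1}^+$ or $\Phi_{n+1}^-$. Consistency of $\Phi$ transfers from consistency of the $\Phi_n$ by the finitary definition of $\vdash$: any putative witness $\vdash \bigwedge \Gamma' \imp \bigvee \Delta'$ with $\Gamma' \subseteq \Phi^+$ and $\Delta' \subseteq \Phi^-$ finite would already have $\Gamma' \subseteq \Phi_n^+$ and $\Delta' \subseteq \Phi_n^-$ for sufficiently large $n$, contradicting the consistency of $\Phi_n$.

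The only genuinely non-routine step is the extension sublemma, which in the classical setting is immediate from $(A \imp B) \vee (B \imp A)$ style reasoning on the right side of the turnstile, but in our intuitionistic-style calculus requires the explicit cut-like argument above. The rest of the argument is a textbook union-of-chains maneuver, so I do not expect further obstacles.
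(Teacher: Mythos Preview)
Your proposal is correct and matches the paper's approach: the paper's sketch likewise reduces to the observation that for any $\varphi$, either $\Gamma,\varphi\not\vdash\Delta$ or $\Gamma\not\vdash\Delta,\varphi$ (by an intuitionistically derivable cut rule---precisely your extension sublemma), and then iterates over all formulas (or appeals to Zorn's lemma). Your write-up simply spells out in detail what the paper leaves as ``standard.''
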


\proof
The proof is standard, but we provide a sketch.
Let $\varphi \in \cl L $. Note that either $\Gamma ,\varphi  \not\vdash \Delta $ or $\Gamma  \not \vdash \Delta,\varphi$, for otherwise by a cut rule (which is intuitionistically derivable) we would have $\Gamma \vdash \Delta$. Thus we can add $\varphi$ to $\Gamma$ or to $\Delta$, and by repeating this process for each element of $\cl L $ (or using Zorn's lemma) we can find a suitable $\Phi$.
\endproof

Before defining the canonical model, recall that for a set of formulas $\Gamma$, we have $\tnext \Gamma \eqdef \lbrace \tnext \varphi \mid   \varphi \in \Gamma\rbrace$.
We also define
\[\circop\Gamma \eqdef \lbrace  \varphi \mid \tnext \varphi \in \Gamma\rbrace.\]

Given a set $A$, let $\mathbb I_A$ denote the identity function on $A$.
The canonical model $\CMod$ is defined as the labelled structure
\[
\CMod = (|\CMod|,{\peq_\CIcon },S_\CIcon ,\ell_\CIcon ),
\]
where $|\CMod| = \ptype{}$ is the set of complete types, $\Phi \peq_\CIcon \Psi$ if $\Phi \peq_\lanfull \Psi$ (i.e., if  $\Phi^- \subseteq \Psi^-$ and $\Phi^+ \supseteq \Psi^+$), $S_\CIcon(\Phi) = (\circop \Phi^+, \circop \Phi^-)$, and $\ell_\CIcon(\Phi)=\Phi$.
We will usually omit writing $\ell_\CIcon $, as it has no effect on its argument.

Next we show that $\CMod$ is an $\lanfull$-labelled system.
We begin by showing that it is based on a labelled space.
\begin{lem}
	\label{lemm:normality} $(|\CMod|,\peq_\CIcon,\ell_\CIcon)$ is a $\lanfull$-labelled space.
\end{lem}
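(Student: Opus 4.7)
The plan is to verify each clause of Definition \ref{frame} for the structure $(|\CMod|, \peq_\CMod, \ell_\CMod)$. Several items are immediate: reflexivity, transitivity, and antisymmetry of $\peq_\CMod$ follow from the set-theoretic inclusions in its definition together with the saturation of complete types (so that $\Phi^+$ determines $\Phi^-$ as its complement in $\lanfull$); monotonicity of $\ell_\CMod$ is trivial since $\ell_\CMod$ is essentially the identity; and each $\Phi \in |\CMod|$ is a saturated two-sided $\lanfull$-type by Lemma \ref{lemmcompleteIsType}.

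For the two unfolding conditions I would use the Lindenbaum Lemma \ref{LemmLind}. Given $\varphi \imp \psi \in \Phi^-$, extend the pair $(\Phi^+ \cup \{\varphi\},\, \{\psi\})$ to a complete type: any witness of inconsistency would, by the intuitionistic deduction theorem, give $\Phi^+ \vdash \varphi \imp \psi$, contradicting $\varphi \imp \psi \in \Phi^-$ together with consistency of $\Phi$. The resulting $\Psi$ satisfies $\Psi^+ \supseteq \Phi^+$ and hence, by saturation, $\Psi \peq_\CMod \Phi$. Dually, given $\varphi \dimp \psi \in \Phi^+$---which by Definition \ref{def:type}(9) forces $\varphi \in \Phi^+$---extend $(\{\varphi\},\, \Phi^- \cup \{\psi\})$: any derivation $\vdash \varphi \imp (\psi \vee \bigvee \Delta'')$ with $\Delta'' \subseteq \Phi^-$ yields via Rule \ref{axDimpDis} that $\vdash (\varphi \dimp \psi) \imp \bigvee \Delta''$, again contradicting consistency of $\Phi$. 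The resulting $\Psi$ satisfies $\Psi \cca \Phi$.

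The main obstacle is local linearity, which reduces to showing that $\compa_\CMod$ is transitive; only two configurations are nontrivial. In both I assume that $\Phi$ and $\Xi$ are incomparable and choose $\varphi \in \Phi^+ \cap \Xi^-$ and $\psi \in \Xi^+ \cap \Phi^-$. First, if $\Phi$ and $\Xi$ both lie below a common $\Psi$, the G\"odel axiom \ref{axGodel} puts $(\varphi \imp \psi) \vee (\psi \imp \varphi)$ in $\Psi^+$; since $\Phi^+ \supseteq \Psi^+$ and $\Xi^+ \supseteq \Psi^+$, one disjunct transfers down, and modus ponens applied in $\Phi$ or $\Xi$ contradicts either $\psi \in \Phi^-$ or $\varphi \in \Xi^-$. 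Second, if $\Phi$ and $\Xi$ both lie above a common $\Psi$, then using Axiom \ref{axco01} we derive $\varphi \dimp \psi \in \Phi^+$ and $\psi \dimp \varphi \in \Xi^+$; now $\Psi^+ \supseteq \Phi^+$ and $\Psi^+ \supseteq \Xi^+$ force both co-implications into $\Psi^+$, contradicting the co-G\"odel axiom \ref{axcoGodel} at $\Psi$. This latter case is where the dual implication and its axiom play an essential role.
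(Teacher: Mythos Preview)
Your proposal is correct and follows essentially the same strategy as the paper: trivial checks for the partial order and monotonicity, two cases for local linearity handled via the G\"odel and co-G\"odel axioms, and the Lindenbaum lemma applied to exactly the same pairs $(\Phi^+\cup\{\varphi\},\{\psi\})$ and $(\{\varphi\},\Phi^-\cup\{\psi\})$ for the $\imp$ and $\dimp$ unfolding conditions.

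One point worth noting on the local-linearity argument: your pairing of axioms to configurations is the right one. With the paper's convention $\Phi\leq_\lanfull\Psi \Leftrightarrow \Phi^+\supseteq\Psi^+$, positive information propagates \emph{downward}. Hence when $\Phi$ and $\Xi$ share an \emph{upper} bound $\Psi$, the disjunct of Axiom~\ref{axGodel} chosen in $\Psi^+$ passes to $\Phi^+$ or $\Xi^+$ where modus ponens fires; and when they share a \emph{lower} bound $\Psi$, the co-implications $\varphi\dimp\psi\in\Phi^+$ and $\psi\dimp\varphi\in\Xi^+$ both land in $\Psi^+$, contradicting Axiom~\ref{axcoGodel} there. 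The paper's write-up presents the two cases with the axioms attached the other way around, but the ingredients are identical and your version is the one that actually goes through as written. Your explicit reduction of local linearity to transitivity of $\compa_\CMod$ is also a clean way to see that only these two configurations need checking.
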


\begin{proof}
	We know that $\leq_\lanfull$ is a partial order and restrictions of partial orders are partial orders, so $\peq_\CIcon $ is a partial order. Moreover, $\ell_\CIcon $ is the identity, so $\Phi \peq_\CIcon  \Psi$ implies that $\ell_\CIcon  (\Phi) \leq_\lanfull \ell_\CIcon  (\Psi)$.
	
	To prove that $(|\CMod|,{\peq_\CIcon })$ is locally linear, assume towards a contradiction that it is not. We consider two cases:
	\begin{enumerate}[wide, labelwidth=!, labelindent=0pt]
		\item There exist $\Phi$, $\Psi$ and $\Theta$
		such that $\Phi \leq_\lanfull \Psi$ and $\Phi \leq_\lanfull \Theta$, but $\Psi \not  \leq_\lanfull \Theta$ and $\Theta \not  \leq_\lanfull \Psi$. By definition, there exist two formulas $\varphi \in \Theta^+ \setminus \Psi^+$ and $\psi \in \Psi^+ \setminus \Theta^+$. 
		It is easy to see that $\varphi \imp \psi \not \in \Theta^+$ and $\psi \imp \varphi \not \in \Psi^+$. This would imply that Axiom~\ref{axGodel} does not belong to $\Phi^+$---a contradiction.
		\item There exist $\Phi$, $\Psi$ and $\Theta$
		such that $\Phi \geq_\lanfull \Psi$ and $\Phi \geq_\lanfull \Theta$, but $\Psi \not  \geq_\lanfull \Theta$ and $\Theta \not  \geq_\lanfull \Psi$. Then it is easy to see that there exist two formulas $\varphi \in \Psi^+ \setminus \Theta^+$ and $\psi \in \Theta^+ \setminus \Psi^+$ such that $\varphi \dimp \psi\in \Psi^+$ and $\psi \dimp \varphi \in \Theta^+$. From $\Phi \geq_\lanfull \Psi$, $\Phi \geq_\lanfull \Theta$, and some intuitionistic reasoning we conclude that $\left( \varphi \dimp \psi \right)\wedge  \left(\psi \dimp \varphi\right) \in \Phi^+$, which contradicts Axiom~\ref{axcoGodel}.
	\end{enumerate}
	We finish by considering the conditions on $\imp$ and $\dimp$. Let us consider $\Phi \in |\CMod|$: 
	\begin{itemize}[wide, labelwidth=!, labelindent=0pt]
		\item If $\varphi \imp \psi \in \Phi^-$ then, by Condition~\ref{cond:type:implication:neg} of Definition~\ref{def:type}, $\psi \in \Phi^-$. Let us define $u = (\Phi^+ \cup \lbrace \varphi \rbrace, \lbrace \psi \rbrace)$, and let us assume by contradiction that $u$ is not consistent. This means that there exists $\gamma\in \Phi^+$ such that $\gamma \wedge \varphi \imp \psi \in \gtl$.
		By propositional reasoning, $\gamma\imp \left(\varphi \imp \psi\right) \in \gtl$. Since $\gamma \in \Phi^+$ and $\Phi$ is consistent, $\varphi \imp \psi \not\in \Phi^-$---a contradiction. Therefore, $u$ is consistent and, by Lemma~\ref{LemmLind}, it can be extended to a complete type $\Psi$. From the definition of $u$ we can conclude that $\Psi \leq_\lanfull \Phi$, $\varphi \in \Psi^+$, and $\psi \in \Psi^-$ as required.

		\item If $\varphi \dimp \psi \in \Phi^+$ then, by Condition~\ref{cond:type:coimplication}, $\varphi \in \Phi^+$. Let us define $u = (\lbrace \varphi\rbrace, \Phi^-\cup \lbrace \psi\rbrace)$, and let us assume by contradiction that $u$ is not consistent. This means that there exists $\gamma \in \Phi^-$ such that $\varphi \imp \psi \vee \gamma \in \gtl$. By Rule \ref{axDimpDis}, we get  $\left(\varphi \dimp \psi\right)\imp \gamma \in \gtl$. Since $\gamma \in \Phi^-$, we deduce that $\varphi \dimp \psi \not\in \Phi^+$---a contradiction. By Lemma~\ref{LemmLind}, $u$ can be extended to a complete type $\Psi$. It is easy to check that $\Phi \leq_\lanfull \Psi$, $\varphi \in \Psi^+$, and $\psi \in \Psi^-$ as required.\qedhere
	\end{itemize}

\end{proof}

\begin{lem}\label{lem:welldefined}
	$S_\CIcon \colon |\CMod|\to  |\CMod| $ is well defined.
\end{lem}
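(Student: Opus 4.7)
The plan is to verify that, for any complete type $\Phi$, the pair $S_\CIcon(\Phi) = (\circop\Phi^+,\circop\Phi^-)$ is again saturated and consistent, i.e.\ a complete type. Since the definition of $S_\CIcon$ does not depend on any choices, well-definedness reduces to this verification.

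For saturation, given any $\varphi \in \lanfull$, the formula $\tnext \varphi$ itself lies in $\lanfull$, so by saturation of $\Phi$ we have $\tnext\varphi \in \Phi^+$ or $\tnext\varphi \in \Phi^-$, which by definition of $\circop$ gives $\varphi \in \circop\Phi^+$ or $\varphi \in \circop\Phi^-$. Disjointness of $\circop\Phi^+$ and $\circop\Phi^-$ follows immediately from disjointness of $\Phi^+$ and $\Phi^-$.

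For consistency I argue by contrapositive. Suppose for contradiction that $\circop\Phi^+ \vdash \circop\Phi^-$. Then there exist finite $\Gamma' \subseteq \circop\Phi^+$ and $\Delta' \subseteq \circop\Phi^-$ with $\vdash \bigwedge\Gamma' \imp \bigvee\Delta'$. Applying necessitation \ref{ax14NecCirc} yields $\vdash \tnext(\bigwedge\Gamma' \imp \bigvee\Delta')$, and then by Axiom \ref{ax05KNext} and modus ponens I obtain $\vdash \tnext\bigwedge\Gamma' \imp \tnext\bigvee\Delta'$. Invoking the equivalences $\tnext\bigwedge\Gamma' \iiff \bigwedge\tnext\Gamma'$ and $\tnext\bigvee\Delta' \iiff \bigvee\tnext\Delta'$ from Lemma \ref{lemmReverse} (items 1 and 2, which also cover the edge cases via Axiom \ref{ax02Bot}), I conclude $\vdash \bigwedge\tnext\Gamma' \imp \bigvee\tnext\Delta'$. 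Since $\Gamma' \subseteq \circop\Phi^+$ means $\tnext\Gamma' \subseteq \Phi^+$ and similarly $\tnext\Delta' \subseteq \Phi^-$, this witnesses $\Phi^+ \vdash \Phi^-$, contradicting the consistency of $\Phi$.

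Finally, by Lemma \ref{LemmLind}, $S_\CIcon(\Phi)$ extends (in fact equals, being already saturated) to a complete type, so $S_\CIcon(\Phi) \in |\CMod|$. The only real work is the consistency step; the main subtlety is handling the conjunction/disjunction commutation with $\tnext$ in the empty case, which is exactly why we needed items (1) and (2) of Lemma \ref{lemmReverse} (and hence Axiom \ref{ax02Bot}) — everything else is routine normal-modal reasoning with $\tnext$.
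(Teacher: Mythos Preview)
Your proof is correct and follows essentially the same approach as the paper: verify saturation directly from saturation of $\Phi$, and derive consistency by pushing $\tnext$ through a putative derivation via \ref{ax14NecCirc}, \ref{ax05KNext}, and Lemma~\ref{lemmReverse}. The appeal to Lemma~\ref{LemmLind} at the end is superfluous (as you yourself note), since a saturated consistent pair is already a complete type, but this does no harm.
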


\begin{proof}
	Let $\Phi\in |\CMod| $ and $\Psi = S_\CIcon(\Phi)$; we must check that $\Psi \in |\CMod|= \ptype{}$.
	Recall that $\Psi^+=\circop\Phi^+ $ and $\Psi^-=\circop\Phi^- $. 
	To see that $\Psi$ is saturated, let $\varphi \in \lanfull$ be so that $\varphi \not \in \Psi^-$. It follows that $\tnext\varphi \not \in \Phi^-$, but $\Phi$ is saturated, so $\tnext\varphi \in \Phi^+$ and thus $\varphi \in \Psi^+$. Since $\varphi$ was arbitrary, $\Psi^-\cup \Psi^+ = \lanfull$.
	Next we check that $\Psi$ is consistent. If not, let $\Gamma \subseteq \Psi^+$ and $\Delta\subseteq \Psi^-$ be finite and such that $\bigwedge \Gamma \imp \bigvee \Delta \in \gtl$. Using \ref{ax14NecCirc} and \ref{ax05KNext} we see that $\tnext \bigwedge \Gamma \imp  \tnext \bigvee \Delta \in \gtl$, which in view of Lemma \ref{lemmReverse} implies that $ \bigwedge \tnext \Gamma \imp  \bigvee  \tnext \Delta \in \gtl$ as well. But $\tnext \Gamma \subseteq \Phi^+$ and $\tnext \Delta \subseteq \Phi^-$, contradicting the fact that $\Phi$ is consistent. We conclude that $\Psi \in |\CMod|$.
\end{proof}

\begin{lem}\label{lem:confluent}
	$S_\CIcon$ is fully confluent.
\end{lem}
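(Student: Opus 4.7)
The strategy is to regard $S_\CIcon$ as a functional relation and verify the four clauses of full confluence. The two Forth conditions reduce to monotonicity of $S_\CIcon$: if $\Phi \leq_\lanfull \Phi'$, that is $\Phi^+ \supseteq \Phi'^+$ and $\Phi^- \subseteq \Phi'^-$, then the same inclusions hold after applying $\circop$ pointwise, so $S_\CIcon(\Phi) \leq_\lanfull S_\CIcon(\Phi')$. Hence I focus on the two Back conditions. In both cases the plan is to identify a candidate pair of sets of formulas, prove it consistent, and invoke Lemma~\ref{LemmLind} to obtain the desired complete type; $\Phi \leq_\lanfull \Phi'$ and $S_\CIcon(\Phi) = \Psi$ (respectively $S_\CIcon(\Phi') = \Psi'$) will then follow automatically from the inclusions built into the candidate pair together with saturation.

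For Back--down, given $\Phi'$ and $\Psi \leq_\lanfull S_\CIcon(\Phi')$, I take the candidate pair $(\Phi'^+ \cup \tnext\Psi^+,\; \tnext\Psi^-)$. Assume for contradiction that it is inconsistent, so there exist finite $\Gamma \subseteq \Phi'^+$, $\Gamma' \subseteq \Psi^+$, $\Delta \subseteq \Psi^-$ with $\vdash \bigwedge\Gamma \wedge \bigwedge\tnext\Gamma' \imp \bigvee\tnext\Delta$. By Lemma~\ref{lemmReverse} together with axiom~\ref{ax05KNext}, this rearranges to $\vdash \bigwedge\Gamma \imp \tnext(\bigwedge\Gamma' \imp \bigvee\Delta)$; since $\bigwedge\Gamma \in \Phi'^+$ and $\Phi'$ is a consistent complete type, the consequent lies in $\Phi'^+$, whence $\bigwedge\Gamma' \imp \bigvee\Delta \in S_\CIcon(\Phi')^+ \subseteq \Psi^+$. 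Together with $\bigwedge\Gamma' \in \Psi^+$, modus ponens inside $\Psi$ forces $\bigvee\Delta \in \Psi^+$, contradicting $\Delta \subseteq \Psi^-$ via saturation.

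For Back--up, the dual case, given $\Phi$ and $\Psi' \geq_\lanfull S_\CIcon(\Phi)$, I take the candidate pair $(\tnext\Psi'^+,\; \Phi^- \cup \tnext\Psi'^-)$. This is the main obstacle: a naive rearrangement leaves a $\dimp$ trapped underneath a $\tnext$, which is precisely the shape that axiom~\ref{ax:backward:dual} is designed to address. Assuming inconsistency yields a derivation of $\tnext\bigwedge\Gamma' \imp \bigvee\Gamma \vee \tnext\bigvee\Delta$, for finite $\Gamma' \subseteq \Psi'^+$, $\Gamma \subseteq \Phi^-$, $\Delta \subseteq \Psi'^-$; rule~\ref{axDimpDis} converts this into $(\tnext\bigwedge\Gamma' \dimp \tnext\bigvee\Delta) \imp \bigvee\Gamma$, and then axiom~\ref{ax:backward:dual} pulls $\tnext$ outside the dual implication to yield $\tnext(\bigwedge\Gamma' \dimp \bigvee\Delta) \imp \bigvee\Gamma$. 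Since $\bigvee\Gamma \in \Phi^-$ by saturation and $\Phi$ is consistent, the antecedent lies in $\Phi^-$, so $\bigwedge\Gamma' \dimp \bigvee\Delta \in \circop\Phi^- = S_\CIcon(\Phi)^- \subseteq \Psi'^-$. But $\bigwedge\Gamma' \in \Psi'^+$, so axiom~\ref{axco01} forces $\bigvee\Delta \vee (\bigwedge\Gamma' \dimp \bigvee\Delta) \in \Psi'^+$; disjointness rules out the right disjunct, leaving $\bigvee\Delta \in \Psi'^+$, in contradiction with $\Delta \subseteq \Psi'^-$. Lemma~\ref{LemmLind} then delivers the required $\Phi' \geq_\lanfull \Phi$ with $S_\CIcon(\Phi') = \Psi'$, completing the proof.
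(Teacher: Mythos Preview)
Your proof is correct and follows essentially the same route as the paper: monotonicity of $S_\CIcon$ handles the Forth clauses, and for each Back clause you form the natural candidate pair, derive a contradiction from inconsistency using \ref{ax05KNext} (for Back--down) or rule~\ref{axDimpDis} together with axiom~\ref{ax:backward:dual} (for Back--up), and then apply the Lindenbaum lemma. The only cosmetic differences are that the paper collapses the finite sets $\Gamma,\Gamma',\Delta$ to single formulas via conjunction/disjunction, and in the final step of Back--up it appeals directly to the type condition on $\dimp$ rather than to axiom~\ref{axco01}; the arguments are otherwise identical.
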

\begin{proof} We check the four conditions:
	\begin{description}[wide, labelwidth=!, labelindent=0pt ]

		\item[\textbf{Forth--down, forth--up:}] Let $\Phi$, $\Psi$ be such that $\Phi \leq_\lanfull \Psi $.
Since $S_\CIcon$ is a function, these properties amount to showing that $S_\CIcon(\Phi) \leq_\lanfull S_\CIcon(\Psi)$.
If $\varphi\in \big (S_\CIcon(\Psi) \big )^+ $ then $\tnext\varphi\in \Psi^+$, which since $\Phi \leq_\lanfull \Psi $ implies that $\tnext \varphi\in \Phi^+$ and hence $\varphi\in \big (S_\CIcon(\Phi) \big )^+$.
Similarly we can check that if $\varphi\in \big (S_\CIcon(\Phi) \big )^- $ then $\varphi\in \big (S_\CIcon(\Psi) \big )^-$, so that $S_\CIcon(\Phi) \leq_\lanfull S_\CIcon(\Psi)$, as needed.
		
		\item[\textbf{Back--up: }] Let $\Phi$, $\Psi$, and $\Theta$  be such that $\Psi = S_\CIcon(\Phi)$ and $  \Psi \leq_\lanfull \Theta$, and let us define $u = (\circop\Theta^+, \Phi^-\cup \circop \Theta^-)$. Assume toward a contradiction that $u$ is not consistent.
		Therefore, there exist $\gamma \in \Phi^-$, $\varphi \in \Theta^+$, and $\psi \in \Theta^-$ such that $\tnext \varphi \imp \left( \gamma \vee \tnext \psi\right) \in \gtl$. By Lemma \ref{lemmReverse}(\ref{itReverseDimp}, we get that $\left( \tnext \varphi \dimp \tnext \psi \right) \imp \gamma \in \gtl$. Since $\gamma \in \Phi^-$ and $\Phi$ is consistent and saturated, we have $\tnext \varphi \dimp \tnext \psi \in \Phi^-$.
		By Axiom~\ref{ax:backward:dual}, $\tnext {\left( \varphi \dimp \psi \right)} \imp \left(\tnext \varphi \dimp \tnext \psi \right) \in \Phi^+$.
		Since $\tnext \varphi \dimp \tnext \psi \in \Phi^-$ and $\Phi$ is consistent and saturated, we have $\tnext \left( \varphi \dimp \psi\right)\in \Phi^-$.
		Since $\Psi=S_\CIcon(\Phi)$, we have $\varphi \dimp \psi \in \Psi^- \subseteq \Theta^-$.
		Thus either $\varphi \in \Theta^-$ or $\psi \in \Theta^+$---a contradiction.
		By Lemma~\ref{LemmLind}, $u$ can be extended to a complete type $\Upsilon$, which satisfies $\Phi \leq_\lanfull \Upsilon $ and $\Theta=S_\CIcon(\Upsilon)$, as required.
		
		\item[\textbf{Back--down: }] Let $\Phi$, $\Psi$, and $\Theta$ be such that $\Psi=S_\CIcon(\Phi)$ and $\Theta   \leq_\lanfull \Psi$ and define $u=(\Phi^+ \cup \circop \Theta^+, \circop \Theta^-)$ and assume that $u$ is not consistent. This means that there exists $\gamma \in \Phi^+$, $\varphi \in \Theta^+$, and $\psi \in \Theta^-$ such that $\gamma \wedge \tnext \varphi \imp \tnext  \psi \in  \gtl$. By propositional reasoning, $\gamma \imp \left(\tnext \varphi \imp \tnext  \psi\right) \in \gtl$, so $\tnext \varphi \imp \tnext \psi \in \Phi^+$. By Axiom~\ref{ax05KNext}, $\tnext \left(\varphi \imp \psi\right) \in \Phi^+$. Since $\Psi=S_\CIcon(\Phi) $, we have $\varphi \imp \psi \in \Psi^+\subseteq \Theta^+$. Therefore $\psi \in \Theta^+$---a contradiction. By Lemma~\ref{LemmLind}, $u$ can be extended to a complete type $\Upsilon$. It can be checked that $\Upsilon \leq_\lanfull \Phi$ and $\Theta=S_\CIcon(\Upsilon)$, as required.\qedhere
	\end{description}
\end{proof}

%

\begin{lem}\label{lem:convex}
$S_\CIcon$ is a convex relation.	
\end{lem}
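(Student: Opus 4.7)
The plan is to verify the two clauses of the definition of convex relation, namely that for every $\Phi \in |\CMod|$ the image set $\{\Psi : \Phi \mathrel{S_\CIcon} \Psi\}$ is convex with respect to $\leq_\CIcon$, and that for every $\Psi$ the preimage set $\{\Phi : S_\CIcon(\Phi) = \Psi\}$ is convex with respect to $\leq_\CIcon$.

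The first clause is immediate. By Lemma \ref{lem:welldefined}, $S_\CIcon$ is a (total, single-valued) function, so its image at any $\Phi$ is the singleton $\{S_\CIcon(\Phi)\}$, and singletons are trivially convex.

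For the second clause, I would argue via monotonicity and antisymmetry. First, I would observe that $S_\CIcon$ is monotone with respect to $\leq_\CIcon$ (equivalently, $\leq_\lanfull$ restricted to $|\CMod|$): this is exactly the content of the Forth--down and Forth--up directions already verified in the proof of Lemma~\ref{lem:confluent}, and it follows directly from the fact that $\circop$ preserves inclusions together with the definition $\Phi \leq_\lanfull \Psi \iff \Phi^- \subseteq \Psi^-$ and $\Phi^+ \supseteq \Psi^+$. So suppose $\Phi_1 \leq_\CIcon \Phi \leq_\CIcon \Phi_2$ with $S_\CIcon(\Phi_1) = S_\CIcon(\Phi_2) = \Psi$. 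By monotonicity,
\[
\Psi \;=\; S_\CIcon(\Phi_1) \;\leq_\lanfull\; S_\CIcon(\Phi) \;\leq_\lanfull\; S_\CIcon(\Phi_2) \;=\; \Psi.
\]
Next I would invoke antisymmetry of $\leq_\lanfull$ on complete types: since $S_\CIcon(\Phi)$ and $\Psi$ are both saturated (i.e.\ $(\cdot)^+ \cup (\cdot)^- = \lanfull$ and $(\cdot)^+ \cap (\cdot)^- = \varnothing$), the two-sided inclusion $\Psi \leq_\lanfull S_\CIcon(\Phi) \leq_\lanfull \Psi$ forces equality of both positive and negative parts. Hence $S_\CIcon(\Phi) = \Psi$, so $\Phi$ lies in the preimage and the preimage is convex.

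This argument is essentially bookkeeping; the only substantive ingredients are that $S_\CIcon$ is a function (Lemma~\ref{lem:welldefined}) and that it is monotone (a by-product of Lemma~\ref{lem:confluent}). I do not expect any real obstacle, but if any subtlety arises it will be in making explicit the antisymmetry step, which relies crucially on the saturation of complete types to rule out the possibility of two distinct types with the same $\circop$-image sitting between $\Phi_1$ and $\Phi_2$.
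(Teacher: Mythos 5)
Your proof is correct, but it is packaged differently from the paper's. The paper argues by contradiction at the formula level: assuming a point $\Theta$ strictly between two preimages of $\Upsilon$ with $S_\CIcon(\Theta)\neq\Upsilon$, it extracts a witnessing formula (either $\tnext\varphi\in\Theta^+$ with $\varphi\notin\Upsilon^+$, or $\tnext\varphi\in\Theta^-$ with $\varphi\notin\Upsilon^-$) and pushes it along the order to one of the two endpoints to get a contradiction; the exhaustiveness of this case split tacitly uses saturation of the types. You instead quote the monotonicity of $S_\CIcon$ with respect to $\leq_\lanfull$ (which is indeed exactly the forth--down/forth--up part of Lemma~\ref{lem:confluent}, proved before Lemma~\ref{lem:convex}, and also immediate from the definition of $\circop$) and then squeeze: $\Psi\leq_\lanfull S_\CIcon(\Phi)\leq_\lanfull\Psi$ forces $S_\CIcon(\Phi)=\Psi$ by antisymmetry. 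This is a clean, slightly more abstract route that avoids the case analysis; the paper's version is essentially your argument with monotonicity re-derived inline. One small correction: antisymmetry of $\leq_\lanfull$ does \emph{not} rely on saturation, as you suggest at the end --- it follows directly from the two-way inclusions $\Phi^-\subseteq\Psi^-\subseteq\Phi^-$ and $\Phi^+\supseteq\Psi^+\supseteq\Phi^+$ (the paper already records that $\leq_\lanfull$ is a partial order in Lemma~\ref{lemm:normality}); saturation is rather what the paper's case split needs. The handling of the image sets (singletons, hence convex) is identical in both proofs.
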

\begin{proof}
Since $S_\CIcon$ is a function, images of points are singletons, hence automatically convex.
Thus we need only prove that preimages are convex.
We proceed by contradiction. Let us take $\Upsilon \in |\CMod|$ and let us define $\mathcal{A} = S^{-1}_\CIcon(\Upsilon)$ and let us assume that $\mathcal{A}$	 is not convex. This means that there exist $\Phi,\Psi, \Theta \in |\CMod|$ such that $\Phi, \Psi \in \mathcal{A}$ and $\Phi \leq_\lanfull \Theta \leq_\lanfull \Psi$, but $\Theta \not \in \mathcal{A}$. 
Since $\Phi, \Psi \in \mathcal{A}$ and $\Theta \not \in \mathcal{A}$, it follows that $S_\CIcon(\Phi) = S_\CIcon(\Psi) = \Upsilon\neq S_\CIcon(\Theta)$.
We consider two cases: 
\begin{itemize}
	\item there exists $\tnext \varphi \in \Theta^+$ such that $\varphi \not \in \Upsilon^+$. Then $\tnext \varphi \in \Phi^+ $, so $ \varphi \in \Upsilon^+$---a contradiction. 
	\item there exists $\tnext \varphi \in \Theta^-$ such that $\varphi \not \in \Upsilon^-$. Then $\tnext \varphi \in \Psi^- $ so $  \varphi \in \Upsilon^-$---a contradiction. \qedhere
\end{itemize}
\end{proof}

\begin{lem}\label{lem:sensible}
	$S_\CIcon$ is sensible.
\end{lem}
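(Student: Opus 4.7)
The plan is to verify each of the six sensibility conditions of Definition~\ref{compatible} for the pair $(\Phi, S_\CIcon(\Phi))$ where $\Phi \in |\CMod|$ is arbitrary, recalling that $S_\CIcon(\Phi) = (\circop \Phi^+, \circop \Phi^-)$ and that $\varphi \in \circop \Phi^\pm$ if and only if $\tnext \varphi \in \Phi^\pm$.

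First I would dispatch the two clauses concerning $\tnext$: these are immediate from the definition of $\circop$, since $\tnext\varphi \in \Phi^+$ (respectively $\Phi^-$) is \emph{by definition} equivalent to $\varphi \in (S_\CIcon(\Phi))^+$ (respectively $(S_\CIcon(\Phi))^-$). No appeal to axioms is needed here.

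The remaining four clauses, for $\diam$ and $\nec$, all follow the same pattern: invoke the appropriate fixpoint axiom (or its converse from Lemma~\ref{lemmReverse}) to rewrite a $\diam$/$\nec$-formula as a propositional combination of $\varphi$ and $\tnext\diam\varphi$ or $\tnext\nec\varphi$, then apply the saturation/consistency properties of complete types as codified in Definition~\ref{def:type} (which applies by Lemma~\ref{lemmcompleteIsType}). Concretely:
\begin{itemize}
\item For $\diam\varphi \in \Phi^+$, Lemma~\ref{lemmReverse}(\ref{itReverseDiam}) gives $\vdash \diam\varphi \imp \varphi \vee \tnext\diam\varphi$, hence $\varphi \vee \tnext\diam\varphi \in \Phi^+$, and condition~\ref{cond:type:posdisj} of Definition~\ref{def:type} yields $\varphi \in \Phi^+$ or $\tnext\diam\varphi \in \Phi^+$, i.e.\ $\diam\varphi \in (S_\CIcon(\Phi))^+$.
\item For $\diam\varphi \in \Phi^-$, Axiom~\ref{ax10DiamFix} gives $\vdash \varphi \vee \tnext\diam\varphi \imp \diam\varphi$, forcing $\varphi \vee \tnext\diam\varphi \in \Phi^-$, and condition~\ref{cond:type:negdisj} splits this into both $\varphi \in \Phi^-$ and $\tnext\diam\varphi \in \Phi^-$.
\item For $\nec\varphi \in \Phi^+$, Axiom~\ref{ax09BoxFix} gives $\vdash \nec\varphi \imp \varphi \wedge \tnext\nec\varphi$, hence $\varphi \wedge \tnext\nec\varphi \in \Phi^+$, and condition~\ref{cond:type:posconj} yields both conjuncts.
\item For $\nec\varphi \in \Phi^-$, Lemma~\ref{lemmReverse}(\ref{itReverseBox}) gives $\vdash \varphi \wedge \tnext\nec\varphi \imp \nec\varphi$, so $\varphi \wedge \tnext\nec\varphi \in \Phi^-$, and condition~\ref{cond:type:negconj} gives $\varphi \in \Phi^-$ or $\tnext\nec\varphi \in \Phi^-$.
\end{itemize}

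I do not expect any genuine obstacle: the machinery has been set up precisely for this. The only subtlety worth remarking on is that in each step I silently use the fact that a complete type $\Phi$ is closed under provable consequence in the sense that $\varphi \in \Phi^+$ together with $\vdash \varphi \imp \psi$ forces $\psi \in \Phi^+$ (by saturation of $\Phi$ together with consistency, which would otherwise place $\psi$ in $\Phi^-$ and then derive $\Phi^+ \vdash \Phi^-$). This mirrors the reasoning already used in Lemma~\ref{lem:welldefined}.
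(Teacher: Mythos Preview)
Your proposal is correct and follows essentially the same route as the paper: the $\tnext$-clauses are immediate from the definition of $S_\CIcon$, and the $\diam$/$\nec$ clauses each combine the relevant fixpoint axiom~\ref{ax10DiamFix} or~\ref{ax09BoxFix} (or their converses from Lemma~\ref{lemmReverse}) with the saturation and consistency of complete types. The only cosmetic difference is that for the case $\diam\varphi\in\Phi^-$ the paper derives only $\tnext\diam\varphi\in\Phi^-$ explicitly, leaving $\varphi\in\Phi^-$ to condition~\ref{cond:type:diam} of Definition~\ref{def:type}, whereas you obtain both at once from $\varphi\vee\tnext\diam\varphi\in\Phi^-$; either way is fine.
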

\begin{proof}

Let us consider $\Phi, \Psi$ such that $\Psi=S_\CMod(\Phi)$. We consider the conditions for $(\Phi,\Psi)$ to be sensible.

If $\tnext \varphi \in \Phi^+$ then $\varphi \in \Psi^+$ by the definition of $S_\CIcon$. 
If $\tnext \varphi \not \in \Phi^+$ then $\tnext\varphi \in \Phi^-$ and, by definition, $\varphi \in \Psi^-$.

If $\diam \varphi \in \Phi^+$ and $\varphi \not \in \Phi^+$, it follows that $\varphi \in \Phi^-$. By Lemma \ref{lemmReverse}, $\diam\varphi \imp \varphi \vee \tnext \diam \varphi \in \gtl$, so we cannot have that $\tnext \diam \varphi \in \Phi^-$, and hence $\tnext \diam \varphi \in \Phi^+$, so that $\diam \varphi \in \Psi^+$. Similarly, if $\diam \varphi \in \Phi^-$ we have that $\tnext\diam\varphi \in \Phi^-$, for otherwise we obtain a contradiction from \ref{ax10DiamFix}. Therefore, $\diam\varphi \in \Psi^-$ as well.

If $\nec \varphi \in \Phi^+$ then, by Axiom~\ref{ax09BoxFix} we get $\varphi, \tnext \nec \varphi \in \Phi^+$. Since $\Psi = S_\CIcon(\Phi)$, we get $\nec \varphi \in \Psi^+$.
Conversely, assume that $\nec \varphi \in \Phi^-$. By Lemma~\ref{lemmReverse}, $\varphi \wedge \tnext \nec \varphi \in \Phi^-$, so either $\varphi \in \Phi^-$ or $\tnext \nec \varphi \in \Phi^-$ (giving in the second case $\nec \varphi \in \Psi^-$). In either case we reach the desired conclusion. 
\end{proof}

We remark the general fact that given a $\Sigma_1$-labelled system and a subformula-closed $\Sigma_2 \subseteq \Sigma_1$, one can restrict the labelling to $\Sigma_2$ in the natural way (by replacing its output at any point by its intersection with $\Sigma_2$).
Doing so yields a $\Sigma_2$-labelled system. This is easily verifiable from the definitions.

\begin{prop}\label{prop:CisW}
The canonical model $\CMod$ is an $\lanfull$-labelled system. Restricting the labelling to any subformula-closed $\Sigma \subseteq \lanfull$ yields a $\Sigma$-labelled system.
\end{prop}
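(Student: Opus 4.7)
The plan is essentially to collect the lemmas already established in the section. For the first statement, I need to check each clause in the definition of a labelled system (Definition \ref{compatible}). Lemma \ref{lemm:normality} gives that $(|\CMod|, \leq_\CIcon, \ell_\CIcon)$ is an $\lanfull$-labelled space. Since $S_\CIcon$ is a well-defined total function by Lemma \ref{lem:welldefined}, seriality is immediate (every $\Phi$ has $S_\CIcon(\Phi)$ as an $S_\CIcon$-successor). Full confluence, convexity, and sensibility are then Lemmas \ref{lem:confluent}, \ref{lem:convex}, and \ref{lem:sensible}, respectively. This handles the first claim.

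For the second claim, I would define the restricted labelling by $\ell_\CIcon^\Sigma(\Phi) = (\Phi^+ \cap \Sigma, \Phi^- \cap \Sigma)$ and then check every condition of Definitions \ref{def:type}, \ref{frame}, and \ref{compatible} in turn. The key observation that makes each verification routine is that all closure conditions in these definitions connect a formula $\chi$ to its \emph{subformulas}: for example, if $\varphi \wedge \psi \in \ell_\CIcon^\Sigma(\Phi)^+$, then $\varphi \wedge \psi \in \Sigma$, so by subformula-closure $\varphi, \psi \in \Sigma$, and the condition $\varphi, \psi \in \Phi^+$ (supplied by $\Phi$ being a type) transfers to $\varphi, \psi \in \ell_\CIcon^\Sigma(\Phi)^+$. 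The same pattern handles the clauses for $\vee$, $\imp$, $\dimp$, $\diam$, and $\nec$, as well as all six sensibility conditions.

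The one place that requires slightly more thought is the existence-of-witness conditions in the definition of labelled space and the confluence conditions: when $\varphi \imp \psi \in \ell_\CIcon^\Sigma(\Phi)^-$, I need a $\Psi \leq_\CIcon \Phi$ such that $\varphi \in \ell_\CIcon^\Sigma(\Psi)^+$ and $\psi \in \ell_\CIcon^\Sigma(\Psi)^-$. The witness supplied by Lemma \ref{lemm:normality} does the job because $\varphi, \psi \in \Sigma$ (by subformula-closure of $\Sigma$), so their membership in $\Psi^\pm$ yields membership in $\ell_\CIcon^\Sigma(\Psi)^\pm$. Analogous reasoning handles the $\dimp$ witness condition and the four confluence properties, which only require existence of a complete type with certain formulas; intersecting with $\Sigma$ preserves everything we need.

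I expect the main ``obstacle'' to be purely bureaucratic: the restriction argument is a case-by-case verification running over all defining clauses, and although each case is short, one must be careful not to overlook any. Since the paper explicitly remarks that restriction to a subformula-closed set ``is easily verifiable from the definitions,'' I would not belabour the point in the write-up but simply indicate the uniform reason (subformula-closure of $\Sigma$) and omit the dozen routine sub-verifications.
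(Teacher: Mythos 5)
Your proposal is correct and matches the paper's own argument: the first claim is assembled from Lemmas \ref{lemm:normality}, \ref{lem:welldefined} (seriality, since $S_\CIcon$ is a total function), \ref{lem:confluent}, \ref{lem:convex}, and \ref{lem:sensible} (the paper additionally cites Lemma \ref{lemmcompleteIsType} for the codomain of $\ell_\CIcon$, which your appeal to Lemma \ref{lemm:normality} already covers), and the second claim is exactly the routine restriction-to-$\Sigma$ verification that the paper dispatches in the remark preceding the proposition. Your observation that subformula-closure is the uniform reason each clause transfers is precisely the content of that remark, so no further comment is needed.
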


\proof
For the first claim, we need for the following three properties to hold:
\begin{enumerate*}
\item $(|\CMod|,{\peq}_\CIcon ,\ell_\CIcon )$ is a labelled space;
\item $S_\CIcon $ is a serial, fully confluent, convex sensible relation; and
\item $\ell_\CIcon $ has $\type{\cl L }$ as its codomain.
\end{enumerate*}
The first item is Lemma \ref{lemm:normality}.  $S_\CIcon $ is serial since it is a well defined function by Lemma \ref{lem:welldefined}, and it is a fully confluent, convex, sensible relation by Lemmas \ref{lem:confluent},~\ref{lem:convex}, and \ref{lem:sensible}.
 Finally, if $\Phi \in |\CMod|$ then $\ell_\CIcon (\Phi) = \Phi$, which is an element of $\type{\cl L }$ by Lemma \ref{lemmcompleteIsType}.
 
The second claim follows from the observation preceding the proposition.
\endproof

%
%

\section{The canonical quasimodel}\label{Sec:quotient}



In this section we describe a finite quotient $\nicefrac{\CMod}\Sigma$ of the canonical labelled system $\CMod$ constructed in \Cref{secCan}, and we show that $\nicefrac{\CMod}\Sigma$ is a $\Sigma$-labelled system. Later, in \Cref{SecComp}, we will show that $\nicefrac{\CMod}\Sigma$ is also $\omega$-sensible and thus a quasimodel.

We obtain $\nicefrac{\CMod}\Sigma$ from $\CMod$ in two steps. First, we will take a bisimulation quotient to obtain a finite $\Sigma$-labelled space equipped with a fully confluent sensible relation. The second step will be to extend the sensible relation to be convex, yielding a finite $\Sigma$-labelled system.

We describe the quotient explicitly, noting afterwards that it is a particular type of bisimulation quotient. The assumption that $\Sigma$ is finite is only needed at the end: if $\Sigma$ is finite then $\nicefrac{\CMod}\Sigma$ will be finite. So for now let $\Sigma$ be any subformula-closed subset of $\lanfull$, and let $\CMod = (|\CMod|,{\peq_\CMod },S_\CMod ,\ell_\CMod )$ be the canonical labelled system, which by Proposition~\ref{prop:CisW} is a $\Sigma$-labelled system when $\ell_\CMod $ is restricted to a $\Sigma$-labelling, which we assume (and henceforth denote by $\ell$).

 For $\Phi\in |\CMod|$, define 
  $L(\Phi) = \cbra \ell(\Psi) \mid \Psi \compa \Phi \cket$. 
   We define the binary relation $\sim$ on $|\CMod|$ by \[\Phi \sim \Psi \iff (\ell(\Phi), L(\Phi)) = (\ell(\Psi), L(\Psi)).\]
If $\Sigma$ is finite, then clearly $|\CMod| / {\sim}$ is finite.

Note that $\sim$ is the largest relation that is simultaneously a bisimulation with respect to the relations $\leq$ and $\geq$, with $\Sigma$ treated as the set of atomic propositions that bisimilar worlds must agree on.

Now define a partial order $\leq_\mathcal Q$ on the equivalence classes $|\CMod| /{\sim}$ of $\sim$ by
\[[\Phi] \leq_\mathcal Q [\Psi] \iff L(\Phi) = L(\Psi)\text{ and }\ell(\Phi) \ge \ell(\Psi),\]
noting that this is well-defined and is indeed a partial order.

Since each set $L(\Phi)$ can be linearly ordered by inclusion and $\ell(\Phi) \in L(\Phi)$, the poset $(|\CMod| / {\sim}, \leq_\mathcal Q)$ is a disjoint union of linear orders. By defining $\ell_\mathcal Q$ by 
\[\ell_\mathcal Q([\Phi]) = \ell(\Phi)\]
we obtain a $\Sigma$-labelled space $(|\CMod| / {\sim}, \leq_\mathcal Q, \ell_\mathcal Q)$; it is not hard to check that this labelling is inversely monotone and that the clauses for $\imp$ and $\dimp$ hold with this labelling.

Now define the binary relation $R_\mathcal Q$ on $|\CMod| / {\sim}$ to be the smallest relation such that $[\Phi] \mathrel R_\mathcal Q [S(\Phi)]$, for all $\Phi \in |\CMod|$. 

\begin{lem}
The relation $R_\mathcal Q$ is fully confluent and sensible.
\end{lem}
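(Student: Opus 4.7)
The plan is to lift confluence and sensibility from $S_\CMod$ (Lemmas \ref{lem:confluent} and \ref{lem:sensible}) using the fact that $\sim$ is a bisimulation with respect to both $\leq$ and $\geq$. Unfolding the definition of $R_\mathcal Q$, the relation $[\Phi] \mathrel R_\mathcal Q [\Psi]$ holds precisely when there exists $\tilde\Phi \in |\CMod|$ with $\tilde\Phi \sim \Phi$ and $S(\tilde\Phi) \sim \Psi$; this characterisation will be used throughout. Sensibility is then essentially immediate: the six conditions of Definition \ref{compatible} depend only on the labels of the two types involved, and $\sim$ preserves labels, so sensibility of $(\ell_\mathcal Q([\Phi]), \ell_\mathcal Q([\Psi])) = (\ell(\tilde\Phi), \ell(S(\tilde\Phi)))$ is inherited directly from Lemma \ref{lem:sensible}; subformula closure of $\Sigma$ keeps the conclusions within $\Sigma$.

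For full confluence I would verify one case in detail and note that the other three follow by symmetric arguments. Take the forth-up condition: given $[\Phi'] \geq_\mathcal Q [\Phi] \mathrel R_\mathcal Q [\Psi]$ with $\tilde\Phi$ witnessing the second relation, the key is to exploit $L(\tilde\Phi) = L(\Phi) = L(\Phi')$ to select some $\Phi^*$ in the linear component of $\tilde\Phi$ inside $\CMod$ with $\ell(\Phi^*) = \ell(\Phi')$, so that $\Phi^* \sim \Phi'$. If $\ell(\Phi) = \ell(\Phi')$ then $[\Phi] = [\Phi']$ and $[\Psi]$ itself witnesses the condition. Otherwise $\ell(\Phi^*) <_\Sigma \ell(\tilde\Phi)$ strictly, and since $\Phi^*$ and $\tilde\Phi$ sit in a single linear component of $\CMod$ while $\ell$ is $\leq_\lanfull$-monotone, strict inequality of labels forces $\Phi^* <_\lanfull \tilde\Phi$ in $\CMod$ itself. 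Forth-down for $S_\CMod$ applied to $\Phi^* \leq \tilde\Phi \mathrel{S_\CMod} S(\tilde\Phi)$ then gives $S(\Phi^*) \leq_\lanfull S(\tilde\Phi)$, and I would set $[\Psi'] := [S(\Phi^*)]$. Since $S(\Phi^*)$ and $S(\tilde\Phi)$ share a linear component they have equal $L$-values, which combined with $\ell(S(\Phi^*)) \leq_\Sigma \ell(S(\tilde\Phi)) = \ell(\Psi)$ yields $[\Psi'] \geq_\mathcal Q [\Psi]$, as required.

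The hard part is this translation between $\Sigma$-type inequalities (directly encoded by $\leq_\mathcal Q$) and the full $\lanfull$-type inequalities required to invoke Lemma \ref{lem:confluent}. The bridge relies on local linearity of $\CMod$: within a single linear component of $\CMod$, $\leq_\lanfull$ is determined by the $\Sigma$-restricted label order, apart from the degenerate possibility of label equality, which is disposed of by observing that equal labels together with equal $L$-values collapse the two $\mathcal Q$-classes into one. A secondary point worth flagging is that each of the remaining three confluence conditions in $\mathcal Q$ translates to a \emph{dual} direction of confluence in $\CMod$ because of the order reversal built into the definition of $\leq_\mathcal Q$, but each is resolved by a structurally identical argument using the appropriate clause of Lemma \ref{lem:confluent}.
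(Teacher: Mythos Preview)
Your approach is correct and essentially identical to the paper's: both lift sensibility directly from $S_\CMod$ via the label-preserving quotient, and for confluence both pick a concrete $\CMod$-representative of the $R_\mathcal Q$-pair, use equality of $L$-values to find a $\sim$-equivalent element $\Theta$ (your $\Phi^*$) in the same linear component, and then apply $S_\CMod$ to obtain the required witness in the quotient. The paper compresses your degenerate-case analysis and the passage from $\Sigma$-label order to $\leq_\lanfull$ into the single line ``there is some $\Theta \geq \Phi$ with $[\Psi]=[\Theta]$'', but the underlying reasoning is the same.
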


\begin{proof}
It is clear that $R_\mathcal Q$ is sensible. For confluence, suppose $[\Phi] \mathrel R_\mathcal Q [S(\Phi)]$. To see that the forth--up condition holds, suppose further that $[\Phi] \leq_\mathcal Q [\Psi]$. Then as $\ell(\Phi) \in L(\Phi) = L(\Psi)$ there is some $\Theta \geq \Phi$ with $[\Psi] = [\Theta]$. Then we have $[\Theta] \mathrel R_\mathcal Q [S(\Theta)]$ and $[S(\Phi)] \leq_\mathcal Q [S(\Theta)]$, as required for the forth--up condition. The proofs of the remaining three confluence conditions are entirely analogous.
\end{proof}

As promised, we now have a $\Sigma$-labelled space equipped with a fully confluent sensible relation. We now transform this labelled space into a $\Sigma$-labelled system by making the additional relation convex by fiat.

Define $R^+_\mathcal Q$ by $X \mathrel R^+_\mathcal Q Y$ if and only if there exist $X_1 \leq_\mathcal Q X \leq_\mathcal Q X_2$ and $Y_1 \leq_\mathcal Q Y \leq_\mathcal Q Y_2$ such that $X_2 \mathrel R_\mathcal Q Y_1$ and $X_1 \mathrel R_\mathcal Q Y_2$. Now define $\nicefrac{\CMod}\Sigma = (|\CMod| / {\sim}, \leq_\mathcal Q, R^+_\mathcal Q, \ell_\mathcal Q)$.

\begin{lem}\label{lemIsLabelled}
The structure $\nicefrac{\CMod}\Sigma$ is a $\Sigma$-labelled system.
\end{lem}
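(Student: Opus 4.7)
The plan is to verify each of the four defining properties of a labelled system for the relation $R^+_\mathcal Q$, since the underlying labelled space structure $(|\CMod|/{\sim}, \leq_\mathcal Q, \ell_\mathcal Q)$ was already established in the discussion preceding the lemma. These properties are seriality, convexity, full confluence, and sensibility.

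First I would handle the three easier conditions. Seriality is immediate: taking $X_1 = X = X_2$ and $Y_1 = Y = Y_2$ as witnesses shows that $R_\mathcal Q \subseteq R^+_\mathcal Q$, and $R_\mathcal Q$ is serial because $S_\CMod$ is a total function. Convexity follows directly from the symmetric shape of the defining clause: given two $R^+_\mathcal Q$-relationships sharing a common endpoint and sandwiching a candidate, one simply combines the ``outer'' witnesses from each side so that the two diagonal $R_\mathcal Q$-links bracket the candidate as required. Full confluence of $R^+_\mathcal Q$ reduces to full confluence of $R_\mathcal Q$ (already established) by chasing witnesses; for instance, for forth-up, given $X \leq_\mathcal Q X'$ and $X \mathrel R^+_\mathcal Q Y$ with witness $X_1 \mathrel R_\mathcal Q Y_2$ where $X_1 \leq_\mathcal Q X$, one applies forth-up for $R_\mathcal Q$ to the chain $X_1 \leq_\mathcal Q X'$ to obtain some $Y_2'' \geq_\mathcal Q Y_2 \geq_\mathcal Q Y$ with $X' \mathrel R_\mathcal Q Y_2''$, and then $X' \mathrel R^+_\mathcal Q Y_2''$ holds via trivial witnesses.

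The main obstacle is sensibility, and the crux appears in cases like the $\diam$-positive clause, where a na\"ive single-witness argument yields only $\varphi \in \ell_\mathcal Q(X_1)^+$, which cannot be transported to $\ell_\mathcal Q(X)^+$ by monotonicity, because the positive information in the labelling flows in the wrong direction along $\leq_\mathcal Q$. The key trick is to first use full confluence of $R_\mathcal Q$ to refine the witnesses of $X \mathrel R^+_\mathcal Q Y$: applying forth-up to $X_1 \leq_\mathcal Q X$ with $X_1 \mathrel R_\mathcal Q Y_2$ produces some $Y'' \geq_\mathcal Q Y_2 \geq_\mathcal Q Y$ with $X \mathrel R_\mathcal Q Y''$, and dually applying forth-down to $X \leq_\mathcal Q X_2$ with $X_2 \mathrel R_\mathcal Q Y_1$ produces some $Y' \leq_\mathcal Q Y_1 \leq_\mathcal Q Y$ with $X \mathrel R_\mathcal Q Y'$. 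One thus obtains a pair $Y' \leq_\mathcal Q Y \leq_\mathcal Q Y''$ of points that are $R_\mathcal Q$-accessible from $X$ itself and bracket $Y$. Each clause in the definition of sensibility can then be discharged by applying $R_\mathcal Q$-sensibility to whichever of $(X, Y')$ or $(X, Y'')$ transports the relevant label constraint in the correct monotonic direction: for the $\diam$-positive clause, $\diam\varphi \in \ell_\mathcal Q(X)^+$ yields via $X \mathrel R_\mathcal Q Y''$ that either $\varphi \in \ell_\mathcal Q(X)^+$ or $\diam\varphi \in \ell_\mathcal Q(Y'')^+$, and the latter propagates through $Y \leq_\mathcal Q Y''$ by monotonicity of $\ell_\mathcal Q$ to give $\diam\varphi \in \ell_\mathcal Q(Y)^+$, as needed. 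The $\nec$-negative clause is handled symmetrically using $Y'$, and the next-operator and dual $\diam$-negative and $\nec$-positive clauses follow more directly from monotonicity together with conditions \ref{cond:type:diam}, \ref{cond:type:box} of Definition~\ref{def:type}.
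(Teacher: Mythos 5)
Your proof is correct, and its overall shape matches the paper's: seriality is dismissed trivially, and full confluence and convexity of $R^+_\mathcal Q$ are reduced to the corresponding properties of $R_\mathcal Q$ by chasing the crossing witnesses, exactly as in the paper's proof. Where you genuinely diverge is the sensibility step. The paper verifies the $\nx$ clause by pushing the formula around the witnesses $X_1\leq_\mathcal Q X\leq_\mathcal Q X_2$ and $Y_1\leq_\mathcal Q Y\leq_\mathcal Q Y_2$ using only monotonicity of $\ell_\mathcal Q$ and sensibility of the pairs $(X_1,Y_2)$ and $(X_2,Y_1)$, and then declares the $\diam$ and $\nec$ cases ``similar''. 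Taken literally, that chase stalls on exactly the disjunctive clauses you single out ($\diam$-positive and $\nec$-negative): sensibility of $(X_1,Y_2)$ only yields $\varphi\in\ell_\mathcal Q(X_1)^+$, and positive information at $X_1$ does not transport up to $X$. Your preliminary appeal to full confluence of $R_\mathcal Q$, replacing the crossing witnesses by points $Y'\leq_\mathcal Q Y\leq_\mathcal Q Y''$ that are $R_\mathcal Q$-related to $X$ itself, is precisely the extra move that makes those cases go through; afterwards every clause follows from $R_\mathcal Q$-sensibility of $(X,Y')$ or $(X,Y'')$ together with monotonicity (and, for the conjunctive clauses, conditions on types). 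So your route is marginally longer for $\nx$, where the paper's direct chase already closes the loop, but it is more uniform and supplies the detail the paper's ``similar'' glosses over; both arguments ultimately rest on the same ingredients, namely confluence and sensibility of $R_\mathcal Q$ and monotonicity of the labelling.
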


\begin{proof}
We already know that $(|\CMod| / {\sim}, \leq_\mathcal Q, \ell_\mathcal Q)$ is a $\Sigma$-lab\-elled space. First we must check $R^+_\mathcal Q$ is still fully confluent and sensible. 

For the forth--down condition, suppose $X \leq_\mathcal Q X' \mathrel R^+_\mathcal Q Y'$. Then by the definition of $R^+_\mathcal Q$, there are some $X_2 \geq_\mathcal Q X'$ and $Y_1 \leq_\mathcal Q Y'$ such that $X_2 \mathrel R_\mathcal Q Y_1$. 
 Since $X \leq_\mathcal Q X' \leq_\mathcal Q X_2$, by the forth--down condition for $R_\mathcal Q$ there is some $Y \leq_\mathcal Q Y_1$ with $X \mathrel R_\mathcal Q Y$ and therefore $X \mathrel R^+_\mathcal Q Y$. Since $Y \leq_\mathcal Q Y_1 \leq_\mathcal Q Y'$, we are done. 
 The proof that the forth--up condition holds is just the order dual of that for forth--down. The proofs of the back--down and back--up conditions are similar.


To see that $R^+_\mathcal Q$ is sensible, suppose $ X \mathrel R^+_\mathcal Q Y$ and that $\nx \varphi \in \Sigma$. Take $X_1 \leq_\mathcal Q X \leq_\mathcal Q X_2$ and $Y_1 \leq_\mathcal Q Y \leq_\mathcal Q Y_2$ such that $X \mathrel R_\mathcal Q Y_1$. Then
\begin{align*}
\nx \varphi \in \ell_\mathcal Q(X) &\implies \nx \varphi \in \ell_\mathcal Q(X_1)\\
&\implies \phantom{\nx}\varphi \in \ell_\mathcal Q(Y_2)\\
 &\implies \phantom{\nx}\varphi \in \ell_\mathcal Q(Y)\\
 &\implies \phantom{\nx}\varphi \in \ell_\mathcal Q(Y_1)\\
 &\implies \nx \varphi \in \ell_\mathcal Q(X_2)  &\implies \nx \varphi \in \ell_\mathcal Q(X),
\end{align*}
so $ \nx \varphi \in \ell_\mathcal Q(X) \iff  \varphi \in \ell_\mathcal Q(Y)$. The $\ps$ and $\nec$ cases are similar. 

Finally, we show that $R^+_\mathcal Q$ is convex. Firstly, for the image condition, if $X \mathrel R^+_\mathcal Q Y_1$ and $X \mathrel R^+_\mathcal Q Y_2$ with $Y_1 \leq_\mathcal Q Y \leq Y_2$, then by the definition of $R^+_\mathcal Q$ we can find  $X_2 \geq_\mathcal Q X$ and $Y'_1 \leq_\mathcal Q Y_1$ with $X_2 \mathrel R_\mathcal Q Y'_1$, and similarly $X_1 \leq_\mathcal Q X$ and $Y'_2 \geq_\mathcal Q Y_2$ with $X_1 \mathrel R_\mathcal Q Y'_2$. Since then $X_1 \leq_\mathcal Q X \leq_\mathcal Q X_2$ and $Y'_1 \leq_\mathcal Q Y \leq_\mathcal Q Y'_2$, by the definition of $R^+_\mathcal Q$ we conclude that $X \mathrel R^+_\mathcal Q Y$. The preimage condition is completely analogous. This completes the proof that $\nicefrac{\CMod}\Sigma$ is a $\Sigma$-labelled system.
\end{proof}

\begin{lem}\label{quasi:bound}
Suppose $\Sigma$ is finite, and write $\lgt\Sigma$ for its cardinality. Then the height of $\nicefrac{\CMod}\Sigma$ is bounded by $\lgt\Sigma +1$, and the cardinality of the domain $|\CMod| / {\sim}$ of $\nicefrac{\CMod}\Sigma$ is bounded by $(\lgt\Sigma +1)\cdot 2^{\lgt \Sigma (\lgt \Sigma +1)+1}$
\end{lem}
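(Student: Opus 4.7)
The approach is to handle the two bounds separately, both using the fact (immediate from the definition of $\sim$) that each equivalence class in $\nicefrac{\CMod}{\Sigma}$ is uniquely determined by the pair $(\ell(\Phi), L(\Phi))$, so we need only count pairs $(\ell, L)$ that are actually realized.

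For the height bound, I would take a strict chain $[\Phi_1] <_\mathcal Q [\Phi_2] <_\mathcal Q \cdots <_\mathcal Q [\Phi_n]$ in $\nicefrac{\CMod}{\Sigma}$. By the definition of $\leq_\mathcal Q$, all the $\Phi_i$ share a single value $L(\Phi_i) = L$, and each strict inequality forces $\ell(\Phi_i) >_\Sigma \ell(\Phi_{i+1})$ (if the labels were equal the classes would coincide, and $\leq_\mathcal Q$ already forces $\geq_\Sigma$ on labels). Since a saturated $\Sigma$-type $\Phi$ is determined by its positive part $\Phi^+ \subseteq \Sigma$, and $\Phi <_\Sigma \Psi$ is equivalent to $\Phi^+ \supsetneq \Psi^+$, the sequence $\ell(\Phi_1)^+ \supsetneq \ell(\Phi_2)^+ \supsetneq \cdots \supsetneq \ell(\Phi_n)^+$ is a strictly descending chain of subsets of $\Sigma$; such a chain has at most $\lgt\Sigma + 1$ elements, so $n \leq \lgt\Sigma + 1$.

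For the cardinality bound, the count factors cleanly. The set $L(\Phi)$ is itself a $\leq_\Sigma$-chain of saturated $\Sigma$-types (this is why linear components made sense in the first place), so by the height argument it has size at most $\lgt\Sigma + 1$. Each entry of such a chain is determined by a subset of $\Sigma$, giving at most $2^{\lgt\Sigma}$ possibilities per entry, hence at most $(2^{\lgt\Sigma})^{\lgt\Sigma + 1} = 2^{\lgt\Sigma(\lgt\Sigma+1)}$ chains $L$. For each fixed $L$, the second coordinate $\ell(\Phi)$ ranges over the at most $\lgt\Sigma + 1$ elements of $L$. Multiplying yields $(\lgt\Sigma + 1) \cdot 2^{\lgt\Sigma(\lgt\Sigma + 1)}$, comfortably under the stated $(\lgt\Sigma + 1) \cdot 2^{\lgt\Sigma(\lgt\Sigma + 1) + 1}$.

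There is no serious obstacle: both bounds reduce to elementary combinatorics on subsets of $\Sigma$ once one has unpacked the definitions of $\sim$, $\leq_\mathcal Q$, and $\type\Sigma$. The only point worth stating explicitly in the write-up is the bijection between saturated $\Sigma$-types and subsets of $\Sigma$ via the positive part (using $\Phi^- = \Sigma \setminus \Phi^+$), since both bounds rely on it.
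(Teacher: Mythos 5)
Your proof is correct and follows essentially the same route as the paper: identify each class with the pair $(\ell,L)$, bound chains by strictly monotone sequences of subsets of $\Sigma$ (hence length at most $\lgt\Sigma+1$), count the possible chains $L$, and multiply by the at most $\lgt\Sigma+1$ choices of $\ell\in L$. The only discrepancies are harmless: with the paper's conventions the positive parts \emph{increase} along a $<_{\mathcal Q}$-chain rather than decrease (the count is unaffected), and your bound of $2^{\lgt\Sigma(\lgt\Sigma+1)}$ on the number of chains $L$ (which needs the small observation that every nonempty chain of size at most $\lgt\Sigma+1$ is the image of some $(\lgt\Sigma+1)$-tuple) is marginally tighter than the paper's geometric-sum bound $2^{\lgt\Sigma(\lgt\Sigma+1)+1}$.
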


\begin{proof}
Each element of the domain of $\nicefrac{\CMod}\Sigma$ is a pair $(\ell, L)$ where $L$ is a (nonempty) subset of $\raisebox{2pt}{$\wp$} \Sigma$ and $\ell \in L$. Since $L$ is linearly ordered by inclusion, it has height at most $\lgt \Sigma + 1$. There are $(2^{\lgt \Sigma})^i$ subsets of $\raisebox{2pt}{$\wp$} \Sigma$ of size $i$, so there are at most $\sum_{i = 1}^{\lgt \Sigma +1}(2^{\lgt \Sigma})^i$ distinct $L$. The sum is bounded by $2^{\lgt \Sigma (\lgt \Sigma +1)+1}$. The factor of $\lgt\Sigma +1$ corresponds to choice of an $\ell \in L$, for each $L$.
\end{proof}

\color{black}

Thus we have an exponential bound on the size of $\nicefrac{\CMod}\Sigma$.
Later, once we prove $\nicefrac{\CMod}\Sigma$ is a quasimodel, the decidability of $\gtl$ can be inferred from this bound.
See \cite{gtllics} for a more direct proof of decidability using the same quotient construction.
However, for our purposes, it suffices to observe that $\nicefrac{\CMod}\Sigma$ is finite.

\section{Characteristic formulas}\label{SecChar}

Next we show that there exist formulas defining points in the canonical quotient, i.e.~to each $w \in |\nicefrac{\CMod}\Sigma|$ we assign formulas `distinguishing' $w$.
In fact, we need two versions of such formulas, as we can define them to be either true or false outside of the linear component of $w$.
First, we define a formula $\chi^+_\Sigma(w)$ (or $\chi^+(w)$ when $\Sigma$ is clear from context) such that for all $\Gamma \in |\CMod|$, $\chi^+(w) \in \Gamma$ if and only if  $w=[\Gamma']$ for some $\Gamma'\geq\Gamma$.
Dually, we define $\chi^-(w) = \chi^-_\Sigma(w)$ so that for all $\Gamma \in |\CMod|$, $\chi^-(w) \notin \Gamma$ if and only if  $w=[\Gamma']$ for some $\Gamma'\peq\Gamma$.
Compared to \cite{eventually}, these formulas require dual implication, as they must look `up' and `down' the model.
In this section, we write $\cqm\Sigma = (|\cqm\Sigma|,\leq,R,\ell)$.
We will omit subindices on the $\ell$ and $L$ functions.

\begin{defn}
Fix $\Sigma\Subset \lanfull$.
Given $\Delta\in \type\Sigma$, define $\overrightarrow \Delta = \bigwedge\Delta^+\imp \bigvee\Delta^-$ and $\overleftarrow \Delta = \bigwedge\Delta^+\dimp \bigvee\Delta^-$.
Given $w  \in |\nicefrac{\CMod}\Sigma|$, we define a formula $\chi^0(w) = \chi^0_\Sigma(w)$ by
\[\chi^0(w):=     \bigwedge_{\Delta \in L(w)} {\sim} \overrightarrow{\Delta} \wedge  \bigwedge_{\Delta \notin  L(w)} \neg \overleftarrow {\Delta}  . \]
Then define $\chi^+ (w) = \chi^+ _\Sigma(w)$ by
\[\chi^+(w) = \overleftarrow {\ell(w)} \wedge \chi^0(w) \]
and $\chi^- (w) = \chi^- _\Sigma(w)$ by
\[\chi^-(w) = \chi^0(w) \Rightarrow \overrightarrow{\ell(w)}.\]
\end{defn}

\begin{prop}\label{propSimForm}
Given $w \in |\nicefrac{\CMod}\Sigma|$ and $\Gamma\in |\CMod|$,
\begin{enumerate}[label=\arabic*)]

\item\label{simulability:c0} $\chi^0(w)\in \Gamma^+ $ if and only if  $L (\Gamma) = L(w)$,

\item\label{simulability:c1} $\chi^+(w)\in \Gamma ^+$ if and only if  $[\Gamma] \peq w$, and

\item\label{simulability:c2} $\chi^-(w)\in \Gamma^- $ if and only if  $[\Gamma] \geq w$.

\end{enumerate}
\end{prop}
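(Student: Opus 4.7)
The plan is to reduce Proposition~\ref{propSimForm} to two biconditional truth lemmas in the canonical model: for every $\Delta \in \type{\Sigma}$ and $\Gamma \in |\CMod|$,
\begin{itemize}
\item[(a)] $\overleftarrow{\Delta} \in \Gamma^+$ iff there exists $\Upsilon \in |\CMod|$ with $\Upsilon \geq_\lanfull \Gamma$ and $\ell(\Upsilon) = \Delta$;
\item[(b)] $\overrightarrow{\Delta} \in \Gamma^-$ iff there exists $\Xi \in |\CMod|$ with $\Xi \leq_\lanfull \Gamma$ and $\ell(\Xi) = \Delta$.
\end{itemize}
The forward implications come from the witnessing clauses of Lemma~\ref{lemm:normality} together with the type conditions for $\bigwedge$ and $\bigvee$: from $\bigwedge\Delta^+\in\Upsilon^+$ and $\bigvee\Delta^-\in\Upsilon^-$, saturation of $\ell(\Upsilon)$ on $\Sigma$ forces $\ell(\Upsilon)=\Delta$. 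The converses use completeness of $\Gamma$: assuming $\overleftarrow{\Delta} \notin \Gamma^+$, saturation gives $\overleftarrow{\Delta} \in \Gamma^-$; for any purported witness $\Upsilon\geq_\lanfull\Gamma$, monotonicity of the labelling yields $\overleftarrow{\Delta}\in\Upsilon^-$, and then clause~\ref{cond:type:coimplication:neg} of Definition~\ref{def:type} combined with $\ell(\Upsilon)=\Delta$ breaks consistency of $\Upsilon$. Lemma~(b) is entirely analogous.

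From (a) and (b) I extract two auxiliary identities: ${\sim}\overrightarrow{\Delta} \in \Gamma^+ \iff \Delta \in L(\Gamma)$, and $\neg\overleftarrow{\Delta} \in \Gamma^+ \iff \Delta \notin L(\Gamma)$. Using ${\sim}\varphi = \top \dimp \varphi$ and $\neg\varphi = \varphi\imp\bot$, applying~(a), (b), and their contrapositive forms (which hold in $\CMod$ by the same saturation/consistency reasoning used for the converses above) reduces each identity to the existence or nonexistence of some $\Xi \leq_\lanfull \Psi \geq_\lanfull \Gamma$ with $\ell(\Xi)=\Delta$; by local linearity of $(|\CMod|,\leq_\lanfull)$ from Lemma~\ref{lemm:normality}, any such configuration collapses to $\Xi \compa \Gamma$---that is, $\Delta \in L(\Gamma)$---and conversely any $\Xi\compa\Gamma$ with $\ell(\Xi)=\Delta$ can be slotted in by choosing $\Psi$ to be whichever of $\Xi,\Gamma$ is larger.

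Part~\ref{simulability:c0} is then immediate: by clause~\ref{cond:type:posconj}, $\chi^0(w)\in\Gamma^+$ iff every conjunct is in $\Gamma^+$, which by the auxiliary identities says exactly $L(w)\subseteq L(\Gamma)$ and $L(\Gamma)\subseteq L(w)$. For Part~\ref{simulability:c1}, I combine Part~\ref{simulability:c0} with~(a) at $\Delta=\ell(w)$: $\chi^+(w)\in\Gamma^+$ iff $L(\Gamma)=L(w)$ and some $\Upsilon\geq_\lanfull\Gamma$ has $\ell(\Upsilon)=\ell(w)$. By monotonicity of $\ell$, the latter forces $\ell(\Gamma)^+\supseteq\ell(w)^+$, which together with $L(\Gamma)=L(w)$ is the defining condition of $[\Gamma]\leq w$ in $\cqm{\Sigma}$. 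For the converse, $\ell(w)\in L(w)=L(\Gamma)$ produces some $\Upsilon\compa\Gamma$ with $\ell(\Upsilon)=\ell(w)$; if $\Upsilon\leq_\lanfull\Gamma$ strictly, then the positive-part inclusion $\ell(\Gamma)^+\supseteq\ell(w)^+=\ell(\Upsilon)^+\supseteq\ell(\Gamma)^+$ forces $\ell(\Gamma)=\ell(w)$, so $\Gamma$ itself serves as the required witness. Part~\ref{simulability:c2} is proved symmetrically, using~(b) and the labelled-space witnessing clause for $\imp$ in the negative position at $\Gamma$.

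The main obstacle is the converse directions of~(a) and~(b): Definition~\ref{def:type} supplies only one direction of the intended semantic reading of $\dimp^+$ and $\imp^-$, so the other direction must be teased out of the combined strength of saturation, consistency, and Lemma~\ref{lemm:normality}---an interplay that holds in the canonical model but is not automatic from the type definition in isolation. A secondary point of care is the orientation of the quotient order $\leq_\mathcal Q$, whose inverse monotonicity with respect to $\leq_\Sigma$ on labels is exactly what makes the ``upward-looking'' formula $\overleftarrow{\ell(w)}$ characterise $[\Gamma]\leq w$ rather than $[\Gamma]\geq w$.
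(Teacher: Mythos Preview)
Your proof is correct and follows the same underlying strategy as the paper: both arguments rest on the witnessing clauses for $\imp$ and $\dimp$ in the canonical labelled space (Lemma~\ref{lemm:normality}), local linearity, and saturation/consistency of complete types. Your presentation is more modular---isolating the biconditionals (a), (b) and the two auxiliary identities for ${\sim}\overrightarrow{\Delta}$ and $\neg\overleftarrow{\Delta}$---whereas the paper unpacks the conjuncts of $\chi^0(w)$ directly; the paper also treats items~\ref{simulability:c1} and~\ref{simulability:c2} more tersely, giving only the forward direction explicitly, while you spell out both directions with the case split on whether the witness $\Upsilon$ found in $L(\Gamma)$ lies above or below $\Gamma$. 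One small imprecision: when you invoke clause~\ref{cond:type:coimplication:neg} of Definition~\ref{def:type} for $\overleftarrow{\Delta}\in\Upsilon^-$, note that $\overleftarrow{\Delta}$ need not lie in $\Sigma$, so you are really using that $\Upsilon$ is a complete $\lanfull$-type (Lemma~\ref{lemmcompleteIsType}); similarly your auxiliary identities need the general labelled-space semantics of $\dimp^+$ and $\imp^-$ at $\Gamma$, not just the special instances (a) and (b). Your remark about the orientation of $\leq_\mathcal Q$ is apt and agrees with the paper's own usage in its proof of item~\ref{simulability:c1}.
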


\proof
Let $w \in |\nicefrac{\CMod}\Sigma|$ and $\Gamma\in |\CMod|$.
\medskip

\noindent \ref{simulability:c0}
First assume that  $\chi^0(w)\in \Gamma^+ $, so that $ \bigwedge_{\Delta \in L(w)}{\sim} \overrightarrow \Delta  \in \Gamma^+$ and $   \bigwedge_{\Delta \notin   L(w)}\neg \overleftarrow {\Delta}  \in \Gamma^+$.
Let $\Delta\in L(w)$.
From ${\sim} \overrightarrow \Delta \in \Gamma^+$, we obtain $\Phi \geq \Gamma$ such that $ \overrightarrow {\Delta} \notin \Phi^+$. Hence there is $\Phi_\Delta\leq \Phi $ with $\bigwedge \Delta ^+\in \Phi_\Delta^+$ and $\bigvee \Delta^-\in \Phi_\Delta^-$, i.e.~$\ell (\Phi_\Delta)=\Delta$.
From local linearity we see that $\Phi_\Delta\compa \Gamma$; hence $\Delta=\ell (\Phi_\Delta) \in L (\Gamma)$.

Similarly, if $\Delta \in \type\Sigma\setminus L(w)$, for any $\Psi\leq \Gamma$ we have that $\overleftarrow{\Delta} \notin \Psi^+ $, so that there is no $\Psi_\Delta\geq \Psi$ with $\bigwedge\Delta^+ \in \Psi_\Delta^+$ and $\bigvee \Delta^-\in \Psi_\Delta^-$.
Thus there is no $\Psi_\Delta \compa \Gamma$ with $\bigwedge\Delta^+ \in \Psi_\Delta^+$ and $\bigvee \Delta^-\in \Psi_\Delta^-$, i.e.~$\Delta\notin L (\Gamma)$ (for the $\Psi_\Delta \geq \Gamma$ case, set $\Psi = \Gamma$; for $\Psi_\Delta \leq \Gamma$ set $\Psi = \Psi_\Delta$).

The converse follows by similar reasoning.
Assume that $L (\Gamma) = L(w) $.
Then from $\Delta\in L (\Gamma)$ we readily obtain ${\sim}\overrightarrow{\Delta} \in \Gamma^+$, and similarly from $\Delta \notin L (\Gamma)$ we obtain $\neg \overleftarrow{\Delta} \in \Gamma^+$, from which we obtain by propositional reasoning $\chi^0(w) \in \Gamma^+$.
\medskip

\noindent\ref{simulability:c1}
If $\chi^+(w)\in \Gamma^+$ then $\chi^0(w) \in \Gamma^+$, hence $L (\Gamma) = L(w)$, while $\overleftarrow{\ell(w)} \in \Gamma^+$ implies that there is some $\Gamma'\geq\Gamma$ with $\ell (\Gamma') = \ell(w)$.
This shows that $w = [\Gamma'] \geq [\Gamma]$, as claimed.
\medskip

\noindent\ref{simulability:c2}
This item is similar to the previous, except that we observe that if $L (\Gamma) \neq L(w)$, then $\chi^-(w) \notin \Gamma^-$.
\endproof

\begin{remark}
Note that the formula $\chi^+_\Sigma(w)$ makes essential use of dual implication, as properties of $w\geq [\Gamma]$ do not affect truth values in $\Gamma$ in the language with $\imp$ alone.
In contrast, the formulas $\chi^-_\Sigma$ are similar to the formulas ${\rm Sim}(w)$ of \cite{eventually}, although we remark that dual implication is still needed to describe the full linear component of $w$.
\end{remark}

Next we establish some provable properties of each of $\chi^+_\Sigma$ and $\chi^-_\Sigma$.
We begin with the former.

\begin{prop}\label{propsubplus}
Given $w \in |\nicefrac{\CMod}\Sigma|$ and $\psi\in \Sigma $:
\begin{enumerate}[label=\arabic*)]
	\item\label{itPropsubplOne} If $\psi\in \ell^-({{w}})$, then
$\vdash \chi^+(w)\imp (\chi^+(w) \dimp \psi ).$

	\item\label{itPropsubplOneb} If $\psi\in \ell^+({w})$, then
	$\vdash \chi^+(w) \imp \psi.$

	

	\item\label{itPropsubplFive}
For any $w\in |\nicefrac{\CMod}\Sigma|$,
$\vdash\displaystyle \chi^+(w) \imp \tnext\bigvee _{{{w}} \mathrel R {{{v}}}  } \chi^+(v).$
\end{enumerate}
\end{prop}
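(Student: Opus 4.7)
My plan is to dispatch each item separately, using a common Lindenbaum-style pattern for (1) and (3) and a direct syntactic derivation for (2). For both (1) and (3), the idea is to assume the target formula is not derivable from $\chi^+(w)$, apply the Lindenbaum Lemma \ref{LemmLind} to obtain a complete type $\Gamma$ with $\chi^+(w) \in \Gamma^+$ but the target in $\Gamma^-$, translate this via Proposition \ref{propSimForm}(\ref{simulability:c1}) into the relation $[\Gamma] \leq w$ in $\nicefrac{\CMod}{\Sigma}$, and then extract a contradiction from the consistency and saturation of $\Gamma$.

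The key ingredient for (1) will be Axiom \ref{axco01}, i.e., $\varphi \imp (\psi \vee (\varphi \dimp \psi))$. From $[\Gamma] \leq w$ one reads off $\ell^-(\Gamma) \supseteq \ell^-(w)$, so the hypothesis $\psi \in \ell^-(w)$ forces $\psi \in \Gamma^-$. Instantiating \ref{axco01} at $\chi^+(w)$ and $\psi$ then yields $\psi \vee (\chi^+(w) \dimp \psi) \in \Gamma^+$, and saturation of $\Gamma$ together with $\psi \in \Gamma^-$ puts $\chi^+(w) \dimp \psi \in \Gamma^+$, contradicting the Lindenbaum choice. Item (2) avoids Lindenbaum entirely: Lemma \ref{lemmReverse}(\ref{itReverseDimp}) gives $\vdash \overleftarrow{\ell(w)} \imp \bigwedge \ell^+(w)$, and since $\psi$ is a conjunct of $\bigwedge \ell^+(w)$ one has $\vdash \overleftarrow{\ell(w)} \imp \psi$ by propositional reasoning; the claim follows because $\overleftarrow{\ell(w)}$ is a conjunct of $\chi^+(w)$ (in the degenerate case $\ell^+(w) = \varnothing$ the statement is vacuous).

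Item (3) is where the real work happens. After the Lindenbaum step---noting that the disjunction is finite because $|\nicefrac{\CMod}{\Sigma}|$ is finite by Lemma \ref{quasi:bound}---I will push $\nx$ across the disjunction using Lemma \ref{lemmReverse}, then invoke saturation to conclude $\chi^+(v) \in (S_\CMod(\Gamma))^-$ for \emph{every} $v$ with $w R v$, whence Proposition \ref{propSimForm}(\ref{simulability:c1}) yields $[S_\CMod(\Gamma)] \not\leq v$ for all such $v$. To derive a contradiction I will appeal to the forth--up confluence of $R$ from Lemma \ref{lemIsLabelled}: since $[\Gamma] \leq w$ and $[\Gamma] \mathrel{R} [S_\CMod(\Gamma)]$ (immediate from the definition of the quotient relation), forth--up produces some $v'$ with $w R v'$ and $[S_\CMod(\Gamma)] \leq v'$, which by Proposition \ref{propSimForm}(\ref{simulability:c1}) forces $\chi^+(v') \in (S_\CMod(\Gamma))^+$, contradicting $\chi^+(v') \in (S_\CMod(\Gamma))^-$. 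The main obstacle is precisely this coordination in (3): one must carefully track how the $|\CMod|$-successor $S_\CMod(\Gamma)$ is seen by the coarser, convexity-completed quasimodel relation on $\nicefrac{\CMod}{\Sigma}$, and then exploit forth--up confluence to guarantee that the witness $v'$ actually indexes the disjunction appearing in the statement.
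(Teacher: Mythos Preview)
Items (2) and (3) are correct. For (2) you give a clean syntactic shortcut that the paper does not take; the paper instead argues through the canonical model. For (3) your contrapositive-plus-confluence argument is a valid alternative, but the paper's route is more direct and avoids confluence altogether: from $\chi^+(w)\in\Gamma^+$ one extracts (via the conjunct $\overleftarrow{\ell(w)}$, as in the proof of Proposition~\ref{propSimForm}) some $\Delta\in|\CMod|$ with $[\Delta]=w$ and $\Gamma\leq_\CMod\Delta$; then $[S_\CMod(\Delta)]$ is an $R$-successor of $w$ by the very definition of $R_{\mathcal Q}$, and $\chi^+([S_\CMod(\Delta)])\in S_\CMod(\Delta)^+$ gives $\tnext\chi^+([S_\CMod(\Delta)])\in\Delta^+\subseteq\Gamma^+$ by persistence.

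Item (1), however, contains a genuine error. You write that from $[\Gamma]\leq w$ ``one reads off $\ell^-(\Gamma)\supseteq\ell^-(w)$'', but the monotonicity of the labelling on a labelled space runs the other way: $u\leq v$ gives $\ell(u)\leq_\Sigma\ell(v)$, i.e.\ $\ell^-(u)\subseteq\ell^-(v)$ and $\ell^+(u)\supseteq\ell^+(v)$. So from $[\Gamma]\leq w$ and $\psi\in\ell^-(w)$ you \emph{cannot} conclude $\psi\in\Gamma^-$; indeed $\psi\in\Gamma^+$ is perfectly possible whenever $[\Gamma]$ lies strictly below $w$, and in that case Axiom~\ref{axco01} gives you nothing. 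The repair is not to work at $\Gamma$ but to pass to a canonical-model representative of $w$: from $\chi^+(w)\in\Gamma^+$ one obtains (again via $\overleftarrow{\ell(w)}$) some $\Gamma'\geq_\CMod\Gamma$ with $[\Gamma']=w$; then $\psi\in(\Gamma')^-$ and $\chi^+(w)\in(\Gamma')^+$, so $\Gamma'$ is a $\dimp$-witness showing $(\chi^+(w)\dimp\psi)\in\Gamma^+$, contradicting your Lindenbaum choice. That is the paper's argument.
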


\proof
\noindent \ref{itPropsubplOne}
Let $\Gamma\in |\CMod|$ and assume that $\psi\in \ell^-(w)$ and $\chi^+(w) \in \Gamma^+$; by properties of the canonical model, it suffices to show that $ (\chi^+(w) \dimp \psi)\in \Gamma^+ $.
From $\chi^+(w) \in \Gamma$ and Proposition \ref{propSimForm} we obtain $\Delta\leq \Gamma$ such that $[\Delta] = w$, hence $\chi^+(w) \in \Delta^+$ and $\psi\in\Delta^- $, yielding $(\chi^+(w) \dimp \psi) \in \Gamma $.
\smallskip

\noindent \ref{itPropsubplOneb} If $\psi\in \ell^+ ({w})$, as above, let $\Gamma\in |\CMod|$ be such that $\chi^+(w) \in \Gamma^+$, and $\Delta\leq \Gamma$ with $[\Delta] =w$.
Then $\psi\in\Delta^+ $, yielding $\psi\in \Gamma^+$.
\medskip




\smallskip

\noindent \ref{itPropsubplFive} Let $\Gamma$ be such that $\chi^+(w) \in \Gamma$, so that there is $\Delta\leq \Gamma$ with $[\Delta] = w$.
Then $w\mathrel R [S_\CMod (\Delta)]$ by definition, and moreover $ \chi^{+}([S_\CMod (\Delta)]) \in  S_\CMod ( \Delta^+)$ implies that $\nx \chi^{+}([S_\CMod (\Delta)]) \in \Delta^+$.
Thus $\nx \chi^{+}([S_\CMod (\Delta)]) \in \Gamma^+$ by downward persistence, so that $\bigvee_{w\mathrel R  v} \chi^+(v) \in \Gamma^+$.
\endproof

The formula $\chi^-_\Sigma$ behaves `dually', as established below.

\begin{prop}\label{propsubminus}
Given $w \in |\nicefrac{\CMod}\Sigma|$ and $\psi\in \Sigma $:
\begin{enumerate}[label=\arabic*)]
	\item\label{itPropsubOne} If $\psi\in \ell^-({{w}})$, then
$\vdash \psi\imp \chi^-(w)$.

	\item\label{itPropsubOneb} If $\psi\in \ell^+({w})$, then
$\vdash \big (\psi \imp \chi^- (w) \big )\imp \chi^- (w) $.


	\item\label{itPropsubFive}
For any $w\in |\nicefrac{\CMod}\Sigma|$,
$\vdash\displaystyle \tnext\bigwedge _{{{w}} \mathrel R  {{{v}}}  } \chi^-(v) \imp \chi^-(w).$

\end{enumerate}
\end{prop}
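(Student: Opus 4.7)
The strategy mirrors Proposition~\ref{propsubplus}: for each item I show the displayed formula belongs to $\Gamma^+$ for every complete type $\Gamma$, from which derivability follows by the Lindenbaum lemma~\ref{LemmLind}. Contrapositively, I assume the formula lies in $\Gamma^-$ and unpack the semantic clauses of the nested implications inside $\chi^-(w) = \chi^0(w) \imp \overrightarrow{\ell(w)}$ (where $\overrightarrow{\ell(w)} = \bigwedge \ell(w)^+ \imp \bigvee \ell(w)^-$) to reach a contradiction. The two families of formulas are dual in the sense that the role played in $\chi^+(w)$ by $\dimp$ (which looks up in $\leq$) is played in $\chi^-(w)$ by $\imp$ (which looks down); so whereas the proof of Proposition~\ref{propsubplus} produces witness types above $\Gamma$, here we produce witness types below.

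For (\ref{itPropsubOne}), assume $\psi \in \ell^-(w)$ and $\chi^-(w) \in \Gamma^-$. Two successive applications of the $\imp$-clause yield $\Gamma'' \leq \Gamma' \leq \Gamma$ with $\ell(w)^+ \subseteq \Gamma''^+$ and $\ell(w)^- \subseteq \Gamma''^-$. Then $\psi \in \ell^-(w) \subseteq \Gamma''^- \subseteq \Gamma^-$, so $\psi \imp \chi^-(w) \in \Gamma^+$ by saturation.

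For (\ref{itPropsubOneb}), assume $\psi \in \ell^+(w)$ and suppose $(\psi \imp \chi^-(w)) \imp \chi^-(w) \in \Gamma^-$. Unpacking the outer implication produces $\Gamma' \leq \Gamma$ with $\psi \imp \chi^-(w) \in \Gamma'^+$ and $\chi^-(w) \in \Gamma'^-$, and two further unpackings as in (\ref{itPropsubOne}) deliver $\Gamma''' \leq \Gamma'' \leq \Gamma'$ such that $\chi^0(w) \in \Gamma''^+$ while $\ell(w)^+ \subseteq \Gamma'''^+$ and $\ell(w)^- \subseteq \Gamma'''^-$; in particular $\ell(\Gamma''') = \ell(w)$. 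From Proposition~\ref{propSimForm}(\ref{simulability:c0}) applied to $\Gamma''$ and the comparability $\Gamma''' \leq \Gamma''$ we conclude $L(\Gamma''') = L(\Gamma'') = L(w)$, so $[\Gamma'''] = w$. Now $\psi \in \ell^+(w) \subseteq \Gamma'''^+$, and applying $\psi \imp \chi^-(w) \in \Gamma'^+$ at the point $\Gamma''' \leq \Gamma'$ yields $\chi^-(w) \in \Gamma'''^+$; but Proposition~\ref{propSimForm}(\ref{simulability:c2}) applied with $[\Gamma'''] \geq w$ gives $\chi^-(w) \in \Gamma'''^-$, contradicting consistency of $\Gamma'''$.

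For (\ref{itPropsubFive}), assume $\nx \bigwedge_{w R v} \chi^-(v) \imp \chi^-(w) \in \Gamma^-$, producing $\Gamma' \leq \Gamma$ with $\nx \bigwedge_{w R v} \chi^-(v) \in \Gamma'^+$ and $\chi^-(w) \in \Gamma'^-$. Sensibility of $S_\CMod$ together with the conjunction clause for types transfers the conjunction, so $\chi^-(v) \in S_\CMod(\Gamma')^+$ for every $v$ with $w \mathrel R v$ in $\cqm\Sigma$. From $\chi^-(w) \in \Gamma'^-$ and Proposition~\ref{propSimForm}(\ref{simulability:c2}) we obtain $w \leq [\Gamma']$. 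Since $[\Gamma'] \mathrel R [S_\CMod(\Gamma')]$ in $\cqm\Sigma$, the forth--down confluence of $R$ delivers some $v$ with $w \mathrel R v$ and $v \leq [S_\CMod(\Gamma')]$; a final application of Proposition~\ref{propSimForm}(\ref{simulability:c2}) gives $\chi^-(v) \in S_\CMod(\Gamma')^-$, contradicting $\chi^-(v) \in S_\CMod(\Gamma')^+$. The main obstacle is in this third part, where one must coordinate the quasimodel's relation $R$ with the functional successor of the canonical model via forth--down confluence in order to extract a true $R$-successor of $w$ from the raw canonical successor $S_\CMod(\Gamma')$; the other parts are essentially bookkeeping with the semantic clauses.
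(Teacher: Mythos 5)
Your proof is correct and takes essentially the same route as the paper: reduce derivability to membership in every complete type via the Lindenbaum lemma, then exploit Proposition~\ref{propSimForm} together with the canonical model's $\imp$-clause and the definition of $\chi^-$. The only deviations are tactical -- in items 1--2 you re-unfold $\chi^-(w)$ by hand where the paper simply cites Proposition~\ref{propSimForm}(3) to extract a representative $\Delta\leq\Gamma$ with $[\Delta]=w$, in item 3 you use $S_\CMod(\Gamma')$ and forth--down confluence of $R$ whereas the paper takes the successor $S_\CMod(\Delta)$ of the representative of $w$ (so $w\mathrel R[S_\CMod(\Delta)]$ holds by definition of the quotient relation and no confluence is needed), and the closing ``by saturation'' in item 1 should really be the Lindenbaum/$\imp$-unfolding step announced in your preamble -- none of which affects correctness.
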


\proof
\noindent \ref{itPropsubOne}
Assume that $\psi\in \ell^- ({w}) \cap \Gamma^+$ and write $w=[\Delta]$.
Then $\psi\in \Delta^-$, which means we cannot have $\Gamma \geq \Delta$. Hence Proposition \ref{propSimForm} implies that $\chi^-(w)\notin \Gamma^-$, i.e.~$\chi^-(w)\in \Gamma^+$.
\medskip

\noindent \ref{itPropsubOneb}
Suppose that $\psi\in \ell^+ ({w})$ and proceed to prove the claim by contrapositive.
If $\chi^-(w) \in \Gamma^-$ for some $\Gamma\in |\CMod|$, then there is $\Delta \leq \Gamma$ such that $w=[\Delta]$.
But then $\chi^-(w)\in \Delta^-$ and $\psi\in \Delta^+$, which implies that $\big ( \psi\imp\chi^-(w) \big ) \in \Delta^-$, hence also $\big ( \psi\imp\chi^-(w) \big ) \in \Gamma^-$, as required.
\medskip


\noindent \ref{itPropsubFive}
Proceed by contrapositive.
If $\chi^-(w)\in \Gamma^-$ for some $\Gamma\in |\CMod|$, then there is $\Delta\leq \Gamma$ such that $w=[\Delta]$.
We have that $w \mathrel R [S_\CMod(\Delta)]$ by definition.
Letting $v= [S_\CMod(\Delta)]$, we have that $\chi^-(v) \in  S_\CMod^-(\Delta)$, hence $\nx \chi^-(v) \in   \Delta^-$, and by downward persistence, $\nx \chi^-(v) \in   \Gamma^-$.
Hence $\nx \bigwedge_{w\mathrel R v} \chi^-(v)  \in \Gamma^-$.
\endproof

\section{Completeness}\label{SecComp}

The formulas $\chi^\pm_\Sigma$ are fundamental in our completeness proof; specifically, we will use them to show that $\cqm\Sigma$ is $\omega$-sensible, hence a quasimodel.
Since validity over the class of quasimodels is equivalent to real validity by Theorem \ref{theoLICS}, completeness will follow.
The following lemma is the first step towards establishing $\omega$-sensibility.
Once again, we write $\cqm\Sigma = (|\cqm\Sigma|,\leq,R,\ell)$, and as usual $R^*$ is the transitive, reflexive closure of $R$.

\begin{lem}\label{syntactic}
If $\Sigma\Subset \lanfull$ and ${{w}}\in|\cqm\Sigma|$, then
\begin{enumerate}

\item $\vdash \bigvee _{w \mathrel R^* v}\chi^+({{{v}}}) \imp \tnext \bigvee _{w \mathrel R^* v}\chi^+(v)  $, and

\item $\vdash \tnext \bigwedge _{w \mathrel R^* v}\chi^-(v)\imp \bigwedge _{w \mathrel R^* v}\chi^-({{{v}}})$.

\end{enumerate}
\end{lem}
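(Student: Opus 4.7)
The plan is to derive both claims directly from Propositions \ref{propsubplus}(\ref{itPropsubplFive}) and \ref{propsubminus}(\ref{itPropsubFive}), using the transitivity of $R^*$ together with the finiteness of $|\cqm\Sigma|$ (Lemma \ref{quasi:bound}) to guarantee that the displayed conjunctions and disjunctions are finite and hence are genuine $\lanfull$-formulas.

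For claim (1), write $\Phi := \bigvee_{w \mathrel R^* v}\chi^+(v)$. I would fix an arbitrary $v$ with $w \mathrel R^* v$ and apply Proposition \ref{propsubplus}(\ref{itPropsubplFive}) to obtain
\[
\vdash \chi^+(v) \imp \tnext \bigvee_{v \mathrel R u}\chi^+(u).
\]
Now if $v \mathrel R u$, then by transitivity of $R^*$ we have $w \mathrel R^* u$, so each disjunct $\chi^+(u)$ on the right is among the disjuncts of $\Phi$; hence $\vdash \bigvee_{v \mathrel R u}\chi^+(u) \imp \Phi$ by intuitionistic reasoning. Applying \ref{ax14NecCirc} and \ref{ax05KNext} to pass the implication under $\tnext$, together with \ref{ax13MP}, yields $\vdash \chi^+(v) \imp \tnext \Phi$. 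Since $v$ was arbitrary among the (finitely many) $R^*$-successors of $w$, taking the disjunction over all such $v$ gives $\vdash \Phi \imp \tnext \Phi$, as required.

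Claim (2) is proved dually, using Proposition \ref{propsubminus}(\ref{itPropsubFive}) in place of \ref{propsubplus}(\ref{itPropsubplFive}). Set $\Psi := \bigwedge_{w \mathrel R^* v}\chi^-(v)$ and fix $v$ with $w \mathrel R^* v$. As above, transitivity of $R^*$ gives $\vdash \Psi \imp \bigwedge_{v \mathrel R u}\chi^-(u)$; pushing this under $\tnext$ (via \ref{ax14NecCirc}, \ref{ax05KNext}, \ref{ax13MP}) and composing with Proposition \ref{propsubminus}(\ref{itPropsubFive}) yields $\vdash \tnext \Psi \imp \chi^-(v)$. Taking the conjunction over all $R^*$-successors $v$ of $w$ gives $\vdash \tnext \Psi \imp \Psi$.

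There is no real obstacle here beyond bookkeeping: the key observation is simply that the set $\{v : w \mathrel R^* v\}$ is closed under $R$-successors (which is immediate from transitivity and reflexivity of $R^*$), and this is exactly what allows the one-step propositions about $\chi^\pm$ to bootstrap into the $R^*$-closed versions stated in the lemma. Finiteness of $|\cqm\Sigma|$ ensures the formulas are well defined.
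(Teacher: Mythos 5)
Your proof is correct and follows essentially the same route as the paper: fix an arbitrary $R^*$-successor $v$ of $w$, apply Proposition \ref{propsubplus}(\ref{itPropsubplFive} (resp.\ Proposition \ref{propsubminus}(\ref{itPropsubFive}), use transitivity of $R^*$ to weaken the one-step disjunction/conjunction to the $R^*$-closed one under $\tnext$ via \ref{ax14NecCirc}, \ref{ax05KNext} and \ref{ax13MP}, and then combine over all $v$. The only difference is that you spell out the monotonicity-under-$\tnext$ step which the paper leaves implicit.
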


\proof
The first item follows from Proposition \ref{propsubplus}(\ref{itPropsubplFive}, as for any $v\in R^*(w)$ we have that
\[\vdash \chi^+({{{v}}}) \imp \tnext \bigvee _{v \mathrel R  u}\chi^+(u)  .\]
Since $v \mathrel R  u$ implies that $w \mathrel R^ *  u$ by transitivity,
\[\vdash \chi^+({{{v}}}) \imp \tnext \bigvee _{w \mathrel R^*  u}\chi^+(u)  .\]
Since $v$ was arbitrary, we obtain
\[\vdash \bigvee _{w \mathrel R^* v}\chi^+({{{v}}}) \imp \tnext \bigvee _{w \mathrel R^* u}\chi^+(u),  \]
which by a change of variables yields the original claim.

Item 2 is similar, but uses Proposition \ref{propsubminus}(\ref{itPropsubFive}.
\endproof

In order to complete our proof that $\cqm\Sigma$ is $\omega$-sensible, it suffices to apply induction to the formulas of Lemma \ref{syntactic}.

\begin{prop}\label{tempinc}\
\begin{enumerate}

\item\label{ittempincone} If ${{w}}\in|\cqm\Sigma|$ and $\diam \psi\in \ell^+ ({{w}})$, then there is ${{{v}}}\in{R^*}({{w}})$ such that $\psi\in \ell^+ ({{{v}}})$.

\item\label{ittempinctwo} If ${{w}}\in|\cqm\Sigma|$ and $\Box \psi\in \ell^- ({{w}})$, then there is ${{{v}}}\in{R^*}({{w}})$ such that $\psi\in \ell^- ({{{v}}})$.

\end{enumerate}
\end{prop}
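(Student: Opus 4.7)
The plan is to establish each item by contradiction, combining the syntactic propagation formulas of Lemma \ref{syntactic} with the appropriate induction axiom and then deriving a contradiction in some canonical representative $\Gamma$ of $w$. The key pairing is that Lemma \ref{syntactic}(2) meshes naturally with the converse induction axiom \ref{ax12:ind:2} (handling $\diam$), while Lemma \ref{syntactic}(1) meshes with the induction axiom \ref{ax11:ind:1} (handling $\nec$).

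For item \ref{ittempincone}, I would first suppose that no such $v$ exists. Since $\diam\psi \in \Sigma$ and $\Sigma$ is subformula-closed, $\psi \in \Sigma$, and by saturation $\psi \in \ell^-(v)$ for every $v \in R^*(w)$. Proposition \ref{propsubminus}(\ref{itPropsubOne}) then yields $\vdash \psi \imp \chi^-(v)$ for each such $v$, and hence $\vdash \psi \imp \alpha$ where $\alpha \eqdef \bigwedge_{v \in R^*(w)} \chi^-(v)$. Lemma \ref{syntactic}(2) supplies $\vdash \tnext\alpha \imp \alpha$; necessitation (\ref{ax14NecBox}) followed by the converse induction axiom \ref{ax12:ind:2} yields $\vdash \diam\alpha \imp \alpha$. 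Applying monotonicity of $\diam$ (from \ref{ax07:K:Dual} and \ref{ax14NecBox}) to $\vdash \psi \imp \alpha$ and concatenating, I obtain $\vdash \diam\psi \imp \chi^-(w)$. Finally, picking any $\Gamma \in |\CMod|$ with $[\Gamma] = w$, the hypothesis $\diam\psi \in \ell^+(w)$ gives $\diam\psi \in \Gamma^+$ and therefore $\chi^-(w) \in \Gamma^+$; but $[\Gamma] = w \geq w$, so Proposition \ref{propSimForm}(\ref{simulability:c2}) forces $\chi^-(w) \in \Gamma^-$, contradicting consistency of $\Gamma$.

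Item \ref{ittempinctwo} is strictly order-dual. Supposing for contradiction that $\psi \in \ell^+(v)$ for all $v \in R^*(w)$, Proposition \ref{propsubplus}(\ref{itPropsubplOneb}) gives $\vdash \chi^+(v) \imp \psi$ for each such $v$, hence $\vdash \beta \imp \psi$ with $\beta \eqdef \bigvee_{v \in R^*(w)} \chi^+(v)$. Lemma \ref{syntactic}(1) yields $\vdash \beta \imp \tnext\beta$, and necessitation together with the induction axiom \ref{ax11:ind:1} produces $\vdash \beta \imp \nec\beta$. Monotonicity of $\nec$ (from \ref{ax06KBox} and \ref{ax14NecBox}) then gives $\vdash \nec\beta \imp \nec\psi$, so $\vdash \chi^+(w) \imp \nec\psi$. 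For any $\Gamma$ with $[\Gamma] = w$, Proposition \ref{propSimForm}(\ref{simulability:c1}) places $\chi^+(w) \in \Gamma^+$, forcing $\nec\psi \in \Gamma^+$ and contradicting $\nec\psi \in \ell^-(w) \subseteq \Gamma^-$.

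The mildly delicate step will be making sure the directions line up: that Lemma \ref{syntactic}(2) is the one that feeds into \ref{ax12:ind:2} (because that axiom consumes $\tnext\alpha \imp \alpha$), and dually that \ref{ax11:ind:1} consumes $\beta \imp \tnext\beta$. Once that alignment is made and monotonicity of $\diam$ and $\nec$ is invoked in the right direction, the contradiction is immediate from Proposition \ref{propSimForm} applied to a representative of $w$; no further subtlety is needed.
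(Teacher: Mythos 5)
Your proof is correct, and its main engine coincides with the paper's: the same pairing of Lemma \ref{syntactic}(2) with Axiom \ref{ax12:ind:2} and of Lemma \ref{syntactic}(1) with Axiom \ref{ax11:ind:1}, the same use of Proposition \ref{propsubminus}(\ref{itPropsubOne}) resp.\ Proposition \ref{propsubplus}(\ref{itPropsubplOneb}) together with $\diam$/$\nec$ monotonicity, arriving at $\vdash \diam\psi \imp \chi^-(w)$ and $\vdash \chi^+(w) \imp \nec\psi$ exactly as in the paper. The only genuine divergence is the endgame. The paper stays on the derivational side a little longer: for item 1 it invokes the scheme $\vdash (\theta \imp \chi^-(w)) \imp \chi^-(w)$ for $\theta \in \ell^+(w)$ (Proposition \ref{propsubminus}(\ref{itPropsubOneb})) to conclude $\vdash \chi^-(w)$ outright, and for item 2 it uses Proposition \ref{propsubplus}(\ref{itPropsubplOne}) together with Rule \ref{axDimpMon} and Lemma \ref{lemmReverse}.\ref{itNotRef} to derive $\vdash \chi^+(w) \imp (\nec\psi \dimp \nec\psi)$, i.e.\ that $\chi^+(w)$ is contradictory, only then appealing to Proposition \ref{propSimForm}. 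You instead finish both items by evaluating inside a representative complete type $\Gamma$ with $[\Gamma]=w$: since $\ell(w)$ is the restriction of $\Gamma$ to $\Sigma$, the hypothesis puts $\diam\psi \in \Gamma^+$ (resp.\ $\nec\psi \in \Gamma^-$), and the derived implication then clashes with Proposition \ref{propSimForm} (resp.\ directly with consistency of $\Gamma$). This shortcut is sound and slightly more economical; notably your item 2 needs no dual-implication reasoning at that stage, whereas the paper's does. What the paper's longer finish buys is a purely syntactic conclusion ($\vdash \chi^-(w)$, resp.\ inconsistency of $\chi^+(w)$) before any appeal to the canonical structure, but nothing in the completeness argument depends on that, so your version stands as is.
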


\proof
\noindent{\bf \ref{ittempincone}.} Towards a contradiction, assume that ${{w}}\in |\cqm\Sigma|$ and $\diam \psi\in \ell^+ ({{w}})$ but, for all ${{{v}}}\in{R^*}({{w}})$, $\psi \in \ell^-({{w}})$.
By Lemma \ref{syntactic}, $\vdash \tnext \bigwedge \limits_{w \mathrel R^* v} \chi^-({{{v}}})\imp \bigwedge\limits_{w \mathrel R^* v} \chi^-({{{v}}})$.
By the $\diam$-induction axiom \ref{ax12:ind:2} and standard modal reasoning, $\vdash \diam \bigwedge \limits_{w \mathrel R^* v} \chi^-({{{v}}})\imp \bigwedge\limits_{w \mathrel R^* v} \chi^-({{{v}}})$;
in particular,
\begin{equation}\label{other}
\vdash \diam \bigwedge _{w\mathrel R*v} \chi^-({{{v}}})\imp \chi^-({{w}}).
\end{equation}

Now let ${{{v}}}\in{R^*}({{w}})$.
By Proposition \ref{propsubminus}(\ref{itPropsubOne} and the assumption that $\psi \in \ell^-({{{v}}})$ we have that
$\vdash \psi \imp \chi^-({{{v}}}) $,
and since ${{{v}}}$ was arbitrary,
$\vdash \psi \imp \bigwedge_{w\mathrel R^* v}\chi^-({{{v}}}) $.
Using distributivity \ref{ax07:K:Dual} we further have that
$\vdash \diam \psi \imp \diam \bigwedge_{w\mathrel R^* v}\chi^-({{{v}}})$.
This, along with (\ref{other}), shows that
$\vdash \diam \psi \imp \chi^-({{w}})$. However, by Proposition \ref{propsubplus}(\ref{itPropsubplOneb} and our assumption that $\diam \psi\in \ell^+ ({{w}})$ we have that
$\vdash \big ( \diam \psi \imp \chi^-({{w}}) \big ) \imp \chi^-(w)$.
Hence by modus ponens we obtain $\vdash \chi^-({{w}}).$
Writing $w=[\Gamma]$, Proposition \ref{propSimForm} yields $\chi^-({{w}}) \notin\Gamma^+$, but this contradicts $\vdash \chi^-({{w}})$.
We conclude that there is ${{{v}}}\in{R^*}({{w}})$ with $\psi \in \ell^+({{v}})$, as needed.
\medskip

\noindent{\bf \ref{ittempinctwo}.}
This is similar to the first item, but dualised.
Towards a contradiction, assume that ${{w}}\in |\cqm\Sigma|$ and $\Box \psi\in \ell^- ({{w}})$ but, for all ${{{v}}}\in{R^*}({{w}})$, $\psi \in \ell^+({{w}})$.
By Lemma \ref{syntactic}, $\vdash \bigvee \limits_{w \mathrel R^* v} \chi^+({{{v}}}) \imp \tnext \bigvee \limits_{w \mathrel R^* v} \chi^+({{{v}}})$.
By the $\Box$-induction axiom \ref{ax11:ind:1}, $\vdash \bigvee \limits_{w \mathrel R^* v} \chi^+({{{v}}})\imp\Box \bigvee \limits_{w \mathrel R^* v} \chi^+({{{v}}})$;
in particular,
\begin{equation}\label{otherb}
\vdash\chi^+({{w}}) \imp \Box  \bigvee _{w\mathrel R^*v} \chi^+({{{v}}}) .
\end{equation}

Now let ${{{v}}}\in{R^*}({{w}})$.
By Proposition \ref{propsubplus}(\ref{itPropsubplOneb} and the assumption that $\psi \in \ell^+({{{v}}})$, we have that
$\vdash \chi^+({{{v}}}) \imp \psi $,
and since ${{{v}}}$ was arbitrary,
$\vdash \bigvee_{w\mathrel R^* v}\chi^+({{{v}}})\imp \psi $.
Using distributivity \ref{ax06KBox} we further have that
$\vdash  \Box \bigvee_{w\mathrel R^* v}\chi^+({{{v}}}) \imp \Box \psi$.
This, along with (\ref{otherb}), shows that
\begin{equation}\label{eqPlusBox}
\vdash \chi^+({{w}}) \imp \Box \psi.
\end{equation}
By Proposition \ref{propsubplus}(\ref{itPropsubplOne} and our assumption that $\Box \psi\in \ell^- ({{w}})$ we have that
$\vdash \chi^+(w) \imp \big (  \chi^+({{w}}) \dimp \Box\psi \big )$,
hence by \eqref{eqPlusBox} and Rule \ref{axDimpMon} we obtain $\vdash \chi^+({{w}}) \imp (\Box\psi\dimp\Box\psi).$
In view of Lemma \ref{lemmReverse}.\ref{itNotRef}, this implies that $\chi^+({{w}})$ is contradictory.
Writing $w=[\Gamma]$, Proposition \ref{propSimForm} yields $\chi^+({{w}}) \in\Gamma$, which once again is impossible. We conclude that there is ${{{v}}}\in{R^*}({{w}})$ with $\psi \in \ell^-({{v}})$.
\endproof

\begin{corollary}\label{laststretch}
If $\Sigma\Subset \lanfull$, then $\cqm\Sigma$ is a quasimodel.
\end{corollary}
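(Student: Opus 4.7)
The plan is to unpack the definition of quasimodel and observe that all the required properties have already been established elsewhere in the excerpt. Recall from Definition \ref{compatible} that a $\Sigma$-quasimodel is exactly a $\Sigma$-labelled system whose accessibility relation is moreover $\omega$-sensible. So I would split the verification into two parts.

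First, I would invoke Lemma \ref{lemIsLabelled} to conclude immediately that $\cqm\Sigma$ is a $\Sigma$-labelled system, which already packages the facts that $R^+_{\mathcal Q}$ is serial, fully confluent, convex, and sensible. Nothing further needs to be done for these properties, so the only remaining obligation is to verify $\omega$-sensibility: that whenever $\ps\varphi \in \ell^+(w)$ there are $n\geq 0$ and $v$ with $w\mathrel{(R^+_{\mathcal Q})^n} v$ and $\varphi \in \ell^+(v)$, and dually that whenever $\nec\varphi \in \ell^-(w)$ there are $n\geq 0$ and $v$ with $w\mathrel{(R^+_{\mathcal Q})^n} v$ and $\varphi \in \ell^-(v)$.

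Second, I would appeal directly to Proposition \ref{tempinc}. Its two items are precisely the two $\omega$-sensibility clauses, once one notes that $v \in R^*(w)$ is by definition the statement that there exists some $n \geq 0$ with $w \mathrel{R^n} v$. Thus each clause of $\omega$-sensibility follows from the corresponding item of Proposition \ref{tempinc}, and the corollary drops out without further work.

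Consequently, there is no genuine obstacle at this final step: the real work was carried out earlier, in constructing the characteristic formulas $\chi^\pm_\Sigma$, establishing their behaviour in Propositions \ref{propsubplus} and \ref{propsubminus}, and then using the induction axioms \ref{ax11:ind:1} and \ref{ax12:ind:2} (together with Lemma \ref{lemmReverse}(\ref{itNotRef})) to derive Proposition \ref{tempinc}. The proof of the corollary itself is therefore just an assembly step, consisting of one sentence citing Lemma \ref{lemIsLabelled} and one sentence citing Proposition \ref{tempinc}.
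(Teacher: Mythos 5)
Your proposal is correct and follows exactly the paper's own argument: cite Lemma \ref{lemIsLabelled} to get that $\cqm\Sigma$ is a $\Sigma$-labelled system and Proposition \ref{tempinc} to get that $R$ is $\omega$-sensible, whence $\cqm\Sigma$ is a quasimodel by definition. Your observation that $v\in R^*(w)$ unpacks to the existence of $n\geq 0$ with $w\mathrel{R^n}v$ is the only (routine) gluing needed, and the paper treats it the same way.
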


\proof
By Lemma \ref{lemIsLabelled}, $ \cqm \Sigma$ is based on a labelled system, while by Proposition \ref{tempinc}, $R$ is $\omega$-sensible.
By definition, $ \cqm \Sigma$ is a quasimodel.
\endproof

 We are now ready to prove that our calculus is complete.

\begin{theorem}\label{theocomp}
If $\varphi \in \lanfull$ is valid, then $ \vdash\varphi$.
\end{theorem}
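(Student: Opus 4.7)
The plan is to prove the contrapositive: if $\not\vdash \varphi$, then $\varphi$ is not valid. Combined with the soundness proposition already established, this gives completeness. The strategy is to produce a quasimodel falsifying $\varphi$ and then invoke Theorem~\ref{theoLICS} to transfer falsifiability to a real (or bi-relational) model.

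First, suppose $\not\vdash \varphi$. Since $\vdash \top \imp \varphi$ would give $\vdash \varphi$ by modus ponens, the pair $(\varnothing,\{\varphi\})$ is consistent in the sense of Definition~\ref{def:complete}. By the Lindenbaum Lemma~\ref{LemmLind}, this pair extends to a complete type $\Gamma\in|\CMod|$ with $\varphi\in\Gamma^-$. So $\varphi$ is falsified at $\Gamma$ in the canonical labelled system.

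Next, fix $\Sigma\Subset\lanfull$ to be the (finite) closure of $\{\varphi\}$ under subformulas (including whatever is needed to accommodate the abbreviations $\top$ and $\bot$). By Proposition~\ref{prop:CisW}, restricting the canonical labelling to $\Sigma$ yields a $\Sigma$-labelled system, and by the construction of Section~\ref{Sec:quotient} we may pass to the quotient $\cqm\Sigma$. Since $\ell_{\mathcal{Q}}([\Gamma])^- = \Gamma^-\cap \Sigma$ and $\varphi\in\Gamma^-\cap\Sigma$, the world $[\Gamma]\in|\cqm\Sigma|$ still falsifies $\varphi$. Corollary~\ref{laststretch} guarantees that $\cqm\Sigma$ is in fact a $\Sigma$-quasimodel, so $\varphi$ is falsifiable on a (finite) quasimodel.

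Finally, applying Theorem~\ref{theoLICS}, falsifiability on a quasimodel entails falsifiability on a real (equivalently, bi-relational) G\"odel temporal model; hence $\varphi$ is not valid. This contradicts the assumption of validity, so $\vdash\varphi$, completing the proof. The substantive content — namely producing the quasimodel and verifying its $\omega$-sensibility — has already been carried out in Sections~\ref{secCan}--\ref{SecComp} using the characteristic formulas $\chi^\pm_\Sigma$; the final theorem is essentially a one-paragraph assembly step, and the only obstacle is making sure the chosen $\Sigma$ contains $\varphi$ and is subformula-closed so that Proposition~\ref{prop:CisW} and Corollary~\ref{laststretch} apply.
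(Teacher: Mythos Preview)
Your proof is correct and follows essentially the same route as the paper's own argument: contrapositive, Lindenbaum to obtain $\Gamma\in|\CMod|$ with $\varphi\in\Gamma^-$, pass to the quotient $\cqm\Sigma$ for $\Sigma$ the subformulas of $\varphi$, invoke Corollary~\ref{laststretch} to see this is a quasimodel, and finish with Theorem~\ref{theoLICS}. The paper's version is just terser, leaving the Lindenbaum step and the appeal to Corollary~\ref{laststretch} implicit.
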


\proof
We prove the contrapositive.
Suppose $\varphi$ is an unprovable formula and let $\Sigma$ be the set of subformulas of $\varphi$.
Since $\varphi$ is unprovable, there is $\Gamma\in |\CMod|$ with $\varphi\in\Gamma^-$. Hence $[\Gamma] \in |\cqm\Sigma|$ is a point in a quasimodel falsifying $\varphi$, so that by Theorem \ref{theoLICS}, $\varphi$ is not valid.
\endproof

 \section{Concluding remarks}\label{SecConc}

We have provided a sound and complete calculus for the G\"odel temporal logic $\sf GTL$.
These results further cement $\sf GTL$ as a privileged logic for fuzzy temporal reasoning and pave the way for a proof-theoretic treatment of these logics.
Among the challenges in this direction is the design of cut-free or cyclic calculi.

In proving our main results, we have developed tools for the treatment of superintuitionistic temporal logics, specifically identifying the usefulness of combining `henceforth' with co-implication.
We believe that this insight will lead to completeness proofs for related logics, including intuitionistic $\sf LTL$, where complete calculi for `eventually' are available, but not so for `henceforth'.
Along these lines, it should be remarked that the techniques of \cite{eventually} should lead to a sound and complete calculus for the logic with $\imp,\nx$ and $\diam$ (but no co-implication or henceforth), although such a result does not follow immediately from the present work.

Another subject that would be worth studying in the  near future is  bisimulation in G\"odel temporal logic. This tool has been used to determine that temporal operators are not interdefinable in the intuitionistic temporal setting~\cite{Balbiani2017,BalbianiTOCL}.
For the class of temporal here-and-there models, `henceforth' is a basic operator that cannot be defined, while `eventually' becomes definable in terms of `henceforth', `next', and implication~\cite{Balbiani2017,BalbianiTOCL}. When introducing co-implication, results on definability exist in the literature: for a combination of the logic of here-and-there, co-implication and the basic modal logic $\sf K$, it has been proven by \cite{BD18} that modal operators become interdefinable.
We do not know if co-implication has the same effect to our G\"odel temporal logic; a negative answer would require a suitable notion of bisimulation preserving both implications, as well as the temporal operators.

In a different direction, logics such as $\sf PDL$ or $\sf CTL$ may also enjoy naturally axiomatizable G\"odel counterparts.
The techniques developed here and by \cite{gtllics} could very well be applicable in these settings.

\paragraph{Acknowledgments.} 
This work has been partially suppor\-ted by FWO-FWF grant G030620N/I4513N (J.P.A. and D.F.D.), FWO grant 3E017319 (J.P.A.), the projects EL4HC and \'etoiles montantes CTASP at R{\'e}gion Pays de la Loire, France (M.D.), the COST action CA-17124 (M.D. and D.F.D.), and SNSF--FWO Lead Agen\-cy Grant 200021L\_196176/G0E2121N (B.M. and D.F.D.).

\bibliographystyle{plain}

\begin{thebibliography}{10}

\bibitem{taspa}
Felicidad Aguado, Pedro Cabalar, Mart\'in Di\'eguez, Gilberto P\'erez, Torsten
  Schaub, Anna Schuhmann, and Concepci\'on Vidal.
\newblock Linear-time temporal answer set programming.
\newblock {\em Theory and Practice of Logic Programming}, page 1–55, 2021.

\bibitem{taspb}
Felicidad Aguado, Pedro Cabalar, Mart{\'{\i}}n Di{\'{e}}guez, Gilberto
  P{\'{e}}rez, and Concepci{\'{o}}n Vidal.
\newblock Temporal equilibrium logic: a survey.
\newblock {\em Journal of Applied Non-Classical Logics}, 23(1-2):2--24, 2013.

\bibitem{gtllics}
Juan~Pablo Aguilera, Mart{\'{\i}}n Di{\'{e}}guez, David
  Fern{\'{a}}ndez{-}Duque, and Brett McLean.
\newblock Time and g{\"{o}}del: Fuzzy temporal reasoning in {PSPACE}.
\newblock {\em CoRR}, abs/2205.00574, 2022.

\bibitem{AG13}
Teresa Alsinet and Llu{\'{\i}}s Godo.
\newblock A complete calculus for possibilistic logic programming with fuzzy
  propositional variables.
\newblock {\em CoRR}, abs/1301.3832, 2013.

\bibitem{arte}
S.~Artemov, J.~Davoren, and A.~Nerode.
\newblock Modal logics and topological semantics for hybrid systems.
\newblock {\em Technical Report MSI 97-05}, 1997.

\bibitem{Balbiani2017}
P.~{Balbiani}, J.~{Boudou}, M.~{Di{\'e}guez}, and D.~{Fern{\'a}ndez-Duque}.
\newblock Bisimulations for intuitionistic temporal logics.
\newblock {\em arXiv}, 1803.05078, 2018.

\bibitem{BalbianiTOCL}
Philippe Balbiani, Joseph Boudou, Mart{\'{\i}}n Di{\'{e}}guez, and David
  Fern{\'{a}}ndez{-}Duque.
\newblock Intuitionistic linear temporal logics.
\newblock {\em {ACM} Trans. Comput. Log.}, 21(2):14:1--14:32, 2020.

\bibitem{BalbianiDieguezJelia}
Philippe Balbiani and Mart\'in. Di\'eguez.
\newblock Temporal here and there.
\newblock In M.~Loizos and A.~Kakas, editors, {\em Logics in Artificial
  Intelligence}, pages 81--96. Springer, 2016.

\bibitem{BD18}
Philippe Balbiani and Mart{\'{\i}}n Di{\'{e}}guez.
\newblock Here and there modal logic with dual implication.
\newblock In Guram Bezhanishvili, Giovanna D'Agostino, George Metcalfe, and
  Thomas Studer, editors, {\em Advances in Modal Logic 12, proceedings of the
  12th conference on Advances in Modal Logic, held in Bern, Switzerland, August
  27--31, 2018}, pages 63--82. College Publications, 2018.

\bibitem{BlandiGR05}
Luciano Blandi, Llu{\'{\i}}s Godo, and Ricardo~Oscar Rodr{\'{\i}}guez.
\newblock A connection between similarity logic programming and {G}{\"{o}}del
  modal logic.
\newblock In Eduard Montseny and Pilar Sobrevilla, editors, {\em Proceedings of
  the Joint 4th Conference of the European Society for Fuzzy Logic and
  Technology and the 11th Rencontres Francophones sur la Logique Floue et ses
  Applications, Barcelona, Spain, September 7--9, 2005}, pages 775--780.
  Universidad Polytecnica de Catalunya, 2005.

\bibitem{Boudou2017}
J.~Boudou, M.~Di\'eguez, and D.~Fern\'andez-Duque.
\newblock A decidable intuitionistic temporal logic.
\newblock In {\em 26th {EACSL} Annual Conference on Computer Science Logic,
  {CSL} 2017, August 20--24, 2017, Stockholm, Sweden}, pages 14:1--14:17, 2017.

\bibitem{jungle}
Joseph Boudou, Mart{\'{\i}}n Di{\'{e}}guez, David Fern{\'{a}}ndez{-}Duque, and
  Philip Kremer.
\newblock Exploring the jungle of intuitionistic temporal logics.
\newblock {\em Theory Pract. Log. Program.}, 21(4):459--492, 2021.

\bibitem{ChopoghlooM21}
Somayeh Chopoghloo and Morteza Moniri.
\newblock A strongly complete axiomatization of intuitionistic temporal logic.
\newblock {\em J. Log. Comput.}, 31(7):1640--1659, 2021.

\bibitem{Davies96}
R.~Davies.
\newblock A temporal-logic approach to binding-time analysis.
\newblock In {\em Proceedings, 11th Annual {IEEE} Symposium on Logic in
  Computer Science, New Brunswick, New Jersey, USA, July 27--30, 1996}, pages
  184--195, 1996.

\bibitem{DELLUNDE201163}
Pilar Dellunde, Lluís Godo, and Enrico Marchioni.
\newblock Extending possibilistic logic over {G}ödel logic.
\newblock {\em International Journal of Approximate Reasoning}, 52(1):63--75,
  2011.
\newblock Tenth European Conference on Symbolic and Quantitative Approaches to
  Reasoning with Uncertainty (ECSQARU 2009).

\bibitem{eventually}
Mart{\'{\i}}n Di{\'{e}}guez and David Fern{\'{a}}ndez{-}Duque.
\newblock An intuitionistic axiomatization of `{E}ventually'.
\newblock In Guram Bezhanishvili, Giovanna D'Agostino, George Metcalfe, and
  Thomas Studer, editors, {\em Advances in Modal Logic 12, proceedings of the
  12th conference on Advances in Modal Logic, held in Bern, Switzerland, August
  27--31, 2018}, pages 199--218. College Publications, 2018.

\bibitem{DuboisLP87}
Didier Dubois, J{\'{e}}r{\^{o}}me Lang, and Henri Prade.
\newblock Theorem proving under uncertainty - {A} possibility theory-based
  approach.
\newblock In John~P. McDermott, editor, {\em Proceedings of the 10th
  International Joint Conference on Artificial Intelligence. Milan, Italy,
  August 23--28, 1987}, pages 984--986. Morgan Kaufmann, 1987.

\bibitem{Fernandez09}
David Fern{\'{a}}ndez{-}Duque.
\newblock Non-deterministic semantics for dynamic topological logic.
\newblock {\em Ann. Pure Appl. Log.}, 157(2-3):110--121, 2009.

\bibitem{dtlaxiom}
David Fern\'{a}ndez-Duque.
\newblock A sound and complete axiomatization for dynamic topological logic.
\newblock {\em Journal of Symbolic Logic}, 77(3):947--969, 2012.

\bibitem{F-D18}
David Fern{\'{a}}ndez{-}Duque.
\newblock The intuitionistic temporal logic of dynamical systems.
\newblock {\em Log. Methods Comput. Sci.}, 14(3), 2018.

\bibitem{heyting30a}
A.~Heyting.
\newblock Die formalen {R}egeln der intuitionistischen {L}ogik.
\newblock In {\em Sitzungsberichte der Preussischen Akademie der
  Wissenschaften}, pages 42--56. Deutsche Akademie der Wissenschaften zu
  Berlin, 1930.

\bibitem{KojimaNext}
K.~Kojima and A.~Igarashi.
\newblock Constructive linear-time temporal logic: Proof systems and {K}ripke
  semantics.
\newblock {\em Information and Computation}, 209(12):1491 -- 1503, 2011.

\bibitem{konev}
B.~Konev, R.~Kontchakov, F.~Wolter, and M.~Zakharyaschev.
\newblock Dynamic topological logics over spaces with continuous functions.
\newblock In G.~Governatori, I.~Hodkinson, and Y.~Venema, editors, {\em
  Advances in Modal Logic}, volume~6, pages 299--318, London, 2006. College
  Publications.

\bibitem{temporal}
O.~Lichtenstein and A.~Pnueli.
\newblock Propositional temporal logics: Decidability and completeness.
\newblock {\em Logic Journal of the IGPL}, 8(1):55--85, 2000.

\bibitem{MintsInt}
Grigori Mints.
\newblock {\em A Short Introduction to Intuitionistic Logic}.
\newblock University Series in Mathematics. Springer, 2000.

\bibitem{Rauszer74}
Cecylia Rauszer.
\newblock A formalization of the propositional calculus of {H}-{B} logic.
\newblock {\em Studia Logica}, 33(1):23--34, 1974.

\bibitem{Ra80}
Cecylia Rauszer.
\newblock {\em An algebraic and {K}ripke-style approach to a certain extension
  of intuitionistic logic}.
\newblock Instytut Matematyczny Polskiej Akademi Nauk, Warszawa, 1980.

\bibitem{Vidal21}
Amanda Vidal.
\newblock On transitive modal many-valued logics.
\newblock {\em Fuzzy Sets Syst.}, 407:97--114, 2021.

\bibitem{Wolter1998}
Frank Wolter.
\newblock On logics with coimplication.
\newblock {\em Journal of Philosophical Logic}, 27(4):353--387, Aug 1998.

\end{thebibliography}

\end{document}